\newtheorem{theorem}{Theorem}[section]
\newtheorem{corollary}[theorem]{Corollary}
\newtheorem{lemma}[theorem]{Lemma}
\newtheorem{proposition}[theorem]{Proposition}
\newtheorem{claim}[theorem]{Claim}
\newtheorem{definition}[theorem]{Definition}
\def\squarebox#1{\hbox to #1{\hfill\vbox to #1{\vfill}}}
\newcommand{\qed}{\hspace*{\fill}
\vbox{\hrule\hbox{\vrule\squarebox{.667em}\vrule}\hrule}\smallskip}
\newenvironment{proof}{\noindent{\bf Proof:~~}}{\(\qed\)}
\newcommand{\ignore}[1]{}
\newcommand{\xhdr}[1]{\paragraph{{#1}.}}
\begin{document}
\title{Economic Efficiency Requires Interaction}
\author{
Shahar Dobzinski\thanks{Weizmann Institute of Science. Email: \texttt{shahar.dobzinski@weizmann.ac.il}. Incumbent of the Lilian and George Lyttle Career Development Chair. Supported in part by the I-CORE program of the planning and budgeting committee and the Israel Science Foundation 4/11 and by EU CIG grant 618128.} 
\and Noam Nisan\thanks{Hebrew University and Microsoft Research. Email: \texttt{noam@cs.huji.ac.il}.}
\and Sigal Oren\thanks{Hebrew University and Microsoft Research. Email: \texttt{sigalo@cs.huji.ac.il}. Supported by an I-CORE fellowship and by a grant from the Israel Science Foundation. Part of the work was done while at Cornell University where the author was supported in part by
NSF grant CCF-0910940 and a Microsoft Research Fellowship. }}
\date{}
\maketitle


%
%
%


\begin{abstract}
We study the necessity of interaction between individuals for obtaining approximately efficient economic allocations. 
We view this as a formalization of Hayek's classic point of view that 
focuses on the information transfer advantages that markets have relative to centralized planning. 
We study two settings: combinatorial auctions with unit demand bidders (bipartite matching) and combinatorial auctions with subadditive bidders. In both settings we prove that non-interactive protocols require exponentially larger communication costs 
than do interactive ones, even ones
that only use a modest amount of interaction.
\end{abstract}

\maketitle

\thispagestyle{empty}

\newpage
\setcounter{page}{1}





\section{Introduction}


The most basic economic question in a social system is arguably how to determine an efficient
allocation of the economy's resources. 
This challenge was named the {\em economic calculation problem} by von Mises, who
argued that markets do a better job than centralized systems can. In his classic paper \cite{H45}, Hayek claimed that the heart of the matter is the distributed nature of ``input'', i.e. that the central planner does not have the information regarding the costs and utilities of the different parties:  
\begin{quote}
\emph{knowledge of the circumstances of which we must make use never exists in concentrated or integrated form but solely as the dispersed bits of ... knowledge which all the separate individuals possess.}
\end{quote}

Hayek therefore proposes that the question of which economic system is better (market-based or centrally-planned) hinges on which of them is able 
to better transfer the information needed for economic efficiency:

\begin{quote}
\emph{which of these systems is likely to be more efficient depends ... on whether we are more likely to succeed in putting at the disposal of a single central authority all the knowledge which ... is initially dispersed among many different individuals, or in conveying to the individuals such additional knowledge as they need in order to enable them to fit their plans with those of others.  }
\end{quote}

%

When Hurwicz \cite{Hur73} formalized a notion of protocols for transfer of information in an economy, the basic examples were a Walrasian-Tatonnement protocol modeling a free market and a 
command process that entails full report of information to a centralized planner.  He noted that:  

\begin{quote}
\emph{The language of the command process is much larger than that of the Walrasian process.  We must remember, however, that the pure command process is finished after only two exchanges of information
while the tatonnement may go on for a long time.}
\end{quote}

This paper follows Hurwicz's approach of formalizing Hayek's question by technically studying the amount of 
information exchange needed for obtaining
economic efficiency.  We consider the main distinction -- in terms of information transfer -- between a centralized system 
and a distributed market-based one to be that of interaction: in a centralized system all individuals send information 
to the central planner
who must then determine an efficient allocation, while market based systems are by nature interactive.\footnote{
Interactive versions of centralized planning such as the ``market socialism'' models of Lange and Lerner \cite{LL38} have also been suggested.
These can essentially simulate free markets so become 
indistinguishable from market mechanisms in terms of their informational abilities.  The distinction between them and true market models is, thus, outside the scope of 
the discussion in this paper.  It should perhaps also be noted here that from a more modern point of view, the revelation principle states
that centralized systems can also simulate any incentives that a distributed one may have.}  

Our main results support Hayek's point of view.  We exhibit situations where interaction allows exponential savings in
information transfer, making the economic calculation problem 
tractable for interactive markets even
when it is intractable for a centralized planner.
We have two conceptually similar, but technically disjoint, sets of results along this line. The first set of results considers the classic simple setting of unit-demand bidders, essentially a model of matching in bipartite graphs.  The second and more complicated setting concerns combinatorial auctions with subadditive bidders. 
In both settings we show that non-interactive protocols 
have exponentially larger communication costs relative to interactive ones. 
In a complementary set of results we also show that exponential savings in communication costs can be 
realized with even limited interaction. 

In technical terms, we formalize this problem in the realm of communication complexity.\footnote{While Hurwicz and other economists employed models that allowed communicating real numbers,
we employ the standard, modern, notions from computer science (see \cite{KN06}) that count bits.  This distinction does not seem to have any conceptual significance -- see \cite{NS06} for a discussion.} 
Non-interactive systems are modeled as simultaneous communication protocols, where all agents simultaneously send messages to a central planner who must decide on the allocation based on these messages alone.  Interactive systems may use multiple rounds of communication and
we measure the amount of interactiveness of a system by the number of communication rounds. 
In both of our settings we prove lower bounds on the simultaneous communication complexity of finding an 
approximately efficient allocation as well as
exponentially smaller upper bounds for protocols that use a modest number of rounds of interaction.  

We now elaborate in more details on the two settings that we consider and describe our results. We begin with the technically simpler setting of bipartite matching.

\subsection{Bipartite Matching}

In this simple matching scenario there are $n$ players and $n$ goods.  Each player $i$ is interested in acquiring a single item from some privately known subset $S_i$ of the goods and our goal is to allocate the items to the players in a way that maximizes the number of players who get an item from their desired set. This is of course a classic problem in economics (matching among unit demand bidders) as well as in computer science (bipartite matching).  

We first consider simultaneous protocols. Each of the players is allowed to send a small amount of information, $l$ bits with $l<<n$, to the centralized planner who must then output a matching.

\vspace{0.1in}
\noindent
{\bf Theorem:} 
\begin{itemize}
\item Every deterministic simultaneous protocol where each player sends at most $n^\epsilon$ bits of communication cannot approximate the size of the maximum matching to within a factor better than $O(n^{1-\epsilon})$.
\item Any randomized simultaneous protocol where each player sends at most $n^\epsilon$ bits of communication cannot approximate the size of the maximum matching to within a factor better than $O(n^{\frac 1 2-2\epsilon})$.
\end{itemize}
\vspace{0.1in}

\noindent Both our bounds are essentially tight. For deterministic protocols, one can trivially obtain an approximation ratio of $n$ with message length $O(\log n)$: each player sends the index of one arbitrary item that he is interested in. If randomization is allowed, it is not hard to see that when each player sends the index of a random item he is interested in, we get an approximation ratio of $O(\sqrt n)$. We have therefore established a gap between randomized and deterministic protocols. We also note that the randomized lower bound can in fact be obtained from more general recent results of \cite{HBMQ13} in a stronger model. For completeness we present a significantly simpler direct proof for our setting.


On the positive side, we show that a few communication rounds suffice to get an almost efficient allocation\footnote{Formally, the algorithm works in the so called ``blackboard model'' -- see Appendix \ref{appendix-matching} for a definition.}. Our algorithm is a specific
instantiation of the well known class of ``auction algorithms''. This class of algorithms has its roots in \cite{DGS86} and has been extensively studied from an economic point of view (e.g. \cite{RS92}) as well as 
from a computational point of view (starting with \cite{B88}).

The standard ascending auction algorithm for this setting
begins by setting the price of each item to be $0$. Initially, all bidders are ``unallocated''. Then, in an arbitrary order, each unallocated player $i$ reports an index of an item that maximizes his profit in the current prices (his ``demand'').
The price of that item increases by $\epsilon$ and the item is reallocated to player $i$, and the process continues 
with another unallocated player. It is well known that if the maximum
that a player is willing to pay for an item is 1, then this process terminates after at most $\Omega(\frac n \epsilon)$ steps and it is not hard to construct examples where this is tight. We show that if in each round every unallocated player reports, simultaneously with the others, an index of a \emph{random} item that maximizes his profit in the current prices ($O(\log n)$ bits of information) and each reported
item is re-allocated to an arbitrary player that reported it, then 
the process terminates in logarithmically many rounds.
We are not aware of any other scenario where
natural market dynamics provably converge 
(approximately) to an equilibrium in time that is sub-linear in the
market size.

\vspace{0.1in}
\noindent
{\bf Theorem:} 
Fix $\epsilon>0$. After $O(\frac{\log n}{\epsilon^2})$ rounds the randomized algorithm provides in expectation an $(1+\epsilon)$-approximation to the bipartite matching problem.
\vspace{0.1in}

\noindent We then quantify the tradeoff between the amount of interaction and economic efficiency. We show that for every $k \ge 1$ there is a randomized protocol that obtains an $O(n^{1/(k+1)})$-approximation in $k$ rounds, where at each round each player sends $O(\log n)$ bits of information. 

In passing we note that the communication complexity of the exact problem, i.e. of finding an exact
perfect matching, when it exists, remains a very interesting open problem.  Moreover, we believe that it
may shed some light on basic algorithmic challenges of finding a perfect matching in near-linear time as well as
deterministically in parallel.  We shortly present this direction in appendix \ref{appendix-matching}.

\subsection{Combinatorial Auctions with Subadditive Bidders}

Our second set of results concerns a setting where we are selling $m$ items 
to $n$ bidders in a combinatorial
auction.  Here each player $i$ has a valuation $v_i$ that specifies his value
for every possible subset $S$ of items. The goal is to maximize
the ``social welfare'' $\sum_i v_i(A_i)$, where $A_i$ is the set of goods that is allocated 
to player $i$.
The communication requirements in such settings have received
much attention and it is known that,
for arbitrary valuations, exponential amount of communication is required 
to achieve even $m^{1/2-\epsilon}$-approximation of the optimal welfare \cite{NS06}.
However, it is also known that if the valuations are subadditive, 
$v_i(S \cup T) \le v_i(S) + v_i(T)$, then constant factor approximations
can be achieved using only polynomial communication \cite{F06,FV06, V08, DS06, lehmann2006combinatorial}. Can
this level of approximate welfare be achieved by a direct mechanism, without
interaction?

Two recent lines of research touch on this issue.  On one hand several recent papers
show that valuations cannot be ``compressed'', even approximately, and that
any polynomial-length description of subadditive valuations (or
even the more restricted XOS valuations) must lose a factor of 
$\Theta(\sqrt{m})$ in precision \cite{BCIW12, BDFKNR12}.  
Similar, but somewhat weaker, non-approximation 
results are also known for the far more restricted subclass of ``gross-substitutes'' valuations \cite{BH11}
for which exact optimization is possible with polynomial communication.  
Thus the natural approach for a direct mechanism where each player sends a succinctly 
specified approximate version of his valuation (a ``sketch'') to the central planner cannot lead to a better 
than $O(\sqrt{m})$ approximation.  This does not, however, rule out other approaches
for non-interactive allocation, that do not require approximating the whole valuation.
Indeed we show that one can do better:

\vspace{0.1in}
\noindent
{\bf Theorem:} There exists a deterministic communication protocol 
such that each player holds a subadditive valuation and 
sends (simultaneously with the others) 
polynomially many bits of communication
to the central planner that guarantees an $\tilde O(m^{1/3})$-approximation 
to the optimal allocation.
\vspace{0.05in}

\noindent Another line of relevant research considers bidders with such valuations being put
in a game where they can only bid on each item separately \cite{christodoulou2010bayesian, BhawalkarR11,HassidimKMN11,FFGL13}.  
In such games the message of each bidder is by definition only $O(m)$ real numbers, each can be 
specified in sufficient precision with logarithmically-many bits.  Surprisingly,
it turns out that sometimes this suffices to get constant factor approximation of the 
social welfare.  Specifically one such result \cite{FFGL13} considers a 
situation where the
valuation $v_i$ of each player $i$ is drawn independently from a commonly known 
distribution $D_i$ on subadditive valuations.  In such a case, every player $i$ can 
calculate bids on the items -- requiring $O(m\log m)$ bits of communication -- based only
on his valuation $v_i$ and the distributions $D_j$ of the others (but
not their valuations $v_j$). By allocating each item
to the highest bidder we get a 2-approximation to the social welfare, in expectation
over the distribution of valuations.  This is a non-interactive protocol that
comes tantalizingly close to what we desire: all that remains is for the 2-approximation
to hold for every input rather than in expectation.  Using Yao's principle,
this would follow (at least for a randomized protocol) if we could get an 
approximation in expectation for 
{\em every} distribution
on the inputs, not just the product distribution where the valuations are
chosen independently. While the approximation guarantees 
of \cite{christodoulou2010bayesian, BhawalkarR11,HassidimKMN11,FFGL13} do not hold for 
correlated distributions on the valuations, there are 
other settings where similar approximation results do hold even for correlated 
distributions \cite{LemeST12itcs,BLNP13}.  Would this be possible here too?  

Our main technical construction proves a negative answer and
shows that interaction is essential for obtaining approximately optimal allocation among 
subadditive valuations (even for the more restricted XOS valuations): 

\vspace{0.1in}
\noindent
{\bf Theorem:} No (deterministic or randomized) protocol 
such that each player holds an XOS valuation and simultaneously with the others
sends sub-exponentially many bits of communication to 
a central planner can guarantee an $m^{\frac 1 4 -\epsilon}$-approximation.  
\vspace{0.1in}

\noindent Again, this is in contrast to interactive protocols that can achieve a factor $2$ approximation \cite{F06} (with polynomially many rounds of communication). The lower bound shows that interaction is necessary to solve the economic calculation problem in combinatorial auctions. We show that if a small amount of interaction is allowed then one can get significantly better results:

\vspace{0.1in}
\noindent{\bf Theorem:} For every $k \ge 1$ there is a randomized protocol that obtains an 
$\tilde O(k\cdot m^{1/(k+1)})$-approximation in $k$ rounds, where at each round each player sends $poly(m,n)$ bits. 
In particular, after $k=\log m$ rounds we get a poly-logarithmic approximation to the welfare.

\xhdr{Open Questions} 
In our opinion the most intriguing open question is to determine the possible approximation ratio achievable by simultaneous combinatorial auctions with submodular or even gross-substitutes players that are allowed to send $poly(m,n)$ bits. 
Our $O(m^{\frac 1 3})$-algorithm is clearly applicable for both settings. 
We do know that exactly solving the problem for gross-substitutes valuations requires exponential communication 
(see Section \ref{app-gs}),
although when interaction is allowed polynomial communication suffices. 

Another natural open question is to prove lower bounds on the approximation ratio achievable by $k$-round protocols. Our bounds only hold for $k=1$. Furthermore, how good is the approximation ratio that can be guaranteed when incentives are taken into account? Can a truthful $k$-round algorithm guarantee poly-logarithmic approximation in $O(\log n)$ rounds for XOS valuations?

For the bipartite matching setting we leave open the question of developing algorithms for weighted bipartite matching. In addition, our $k$-round algorithms are randomized; developing deterministic $k$-round algorithms even for the unweighted case is also of interest. In Appendix \ref{appendix-matching} we further discuss more open questions related to the communication complexity of bipartite matching and its relation to the computational complexity of bipartite matching. 

Finally, we studied the matching problem in the framework of simultaneous communication complexity. A fascinating future direction is to study other classic combinatorial optimization problems (e.g., minimum cut, packing and covering problems, etc.) using the lenses of simultaneous communication complexity.
%
%
%
%


\section{Preliminaries} \label{sec-prelim}

\noindent \textbf{Combinatorial Auctions.} In a combinatorial auction we have a set $N$ of players ($|N|=n)$ and a set $M$ of different items ($|M|=m$). Each player $i$ has a valuation function $v_i:2^M\rightarrow \mathbb R$. Each $v_i$ is assumed to be normalized ($v_i(\emptyset)=0)$ and non decreasing. The goal is to maximize the social welfare, that is, to find an allocation of the items to players $(A_1,\ldots, A_n)$ that maximizes the welfare: $\Sigma_iv_i(A_i)$. 
A valuation function is \emph{subadditive} if for every two bundles $S$ and $T$, $v(S)+v(T)\geq v(S\cup T)$. A valuation $v$ is \emph{additive} if for every bundle $S$ we have that $v(S)=\Sigma_{j\in S}v(\{j\})$. A valuation $v$ is \emph{XOS} if there exist additive valuations $a_1,\ldots, a_t$ such that for every bundle $S$, $v(S)=\max_r a_r(S)$. Each $a_r$ is a \emph{clause} of $v$. If $a\in\arg\max_r a_r(S)$ then $a$ is a \emph{maximizing clause} of $S$.

\vspace{0.1in} \noindent \textbf{Matching.} Here the goal is to find a maximum matching in an undirected bipartite graph $G=(V_1,V_2,E)$, $|V_1|=|V_2|=n$. Each player $i$ corresponds to vertex $i\in V_1$ and is only aware of edges of the form $(i,j)$ ($j\in V_2$ since the graph is bipartite). The neighbor set of $i$ is $S_i=\{j|(i,j)\in E\}$. The goal is to maximize the number of matched pairs. When convenient we will refer to vertices on the left as unit demand bidders and the vertices on the right as goods. Under this interpretation the neighbor set of player $i$ is simply the set of goods that he is interested in.

\vspace{0.1in} \noindent \textbf{Chernoff Bounds.}
Let $X$ be a random variable with expectation $\mu$. Then, for any $\delta>0$: $P(X>(1+\delta)\mu) < (\dfrac{e^\delta}{(1+\delta)^{1+\delta}})^\mu = (\dfrac{e^\delta}{e^{(1+\delta)\ln(1+\delta)}})^\mu = (\dfrac{1}{e^{(1+\delta)\ln(1+\delta)-\delta}} )^{\mu}$. For $\delta>e^2$ we can loosely bound this expression by:  $(\dfrac{1}{e^\delta} )^{\mu}$.


\newcommand{\nset}{S}
\newcommand{\nseti}[1]{\nset_{#1}}

\section{Lower Bounds for Bipartite Matching}\label{sec-matching-lb}

In this section we state
lower bounds on the power of algorithms for bipartite matching. 
The first one deals with the power of deterministic algorithms (proof in Subsection \ref{subsec-matching-hardness}):

\begin{theorem}[lower bound for deterministic algorithms]\label{thm-matching}
The approximation ratio of any deterministic simultaneous algorithm for matching that uses at most $l$ bits per player is no better than $\dfrac{n}{8l + 4\log(n)}$. In particular, for any fixed $\epsilon>0$ and $l=n^\epsilon$ the approximation ratio is $\Omega(n^{1-\epsilon})$.
\end{theorem}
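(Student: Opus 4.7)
The plan is to exploit that a deterministic simultaneous protocol with $l$ bits per player admits only $2^{ln}$ possible transcripts, and that the planner's output on each transcript must be a valid matching on \emph{every} input that produces that transcript. This gives the following key observation: if $f_i : 2^{[n]} \to \{0,1\}^l$ are the players' encoding functions and $M(\bar m)$ is the planner's output on transcript $\bar m = (m_1,\ldots,m_n)$, then every edge $(i,j) \in M(\bar m)$ must satisfy $j \in T_i(m_i) := \bigcap \{S : f_i(S) = m_i\}$, since otherwise there is some input consistent with $\bar m$ in which $j \notin S_i$ and the output is invalid. Thus the output is always a matching in the ``universally safe'' graph $A(\bar m) = \{(i,j) : j \in T_i(m_i)\}$, and to prove the lower bound it suffices to exhibit an input with $\mathrm{OPT} = n$ for which $A(\bar m)$ has maximum matching at most $8l + 4\log n$.

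As a first attempt I would take the family of $n!$ perfect-matching inputs $S_i = \{\pi(i)\}$ indexed by permutations $\pi$, each with $\mathrm{OPT} = n$. Pigeonholing over the $2^{ln}$ transcripts, some transcript is shared by at least $n!/2^{ln}$ permutations, and the planner's output $M$ of size $s$ must be a sub-matching of each of them. Since an $s$-matching can be extended to at most $(n-s)!$ permutations, this gives $n!/(n-s)! \leq 2^{ln}$, which yields only $s = O(nl/\log n)$: tight when $l = O(1)$ but far too weak when $l \gg \log n$.

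The hard step is therefore to strengthen the bound to $s \leq 8l + 4\log n$ in the regime $l \gg \log n$, where the naive permutation pigeonhole is too lossy. Here I would enrich the input family by letting each $S_i$ range over $k$-subsets for a suitably chosen $k$, and bound $|T_i(m_i)|$ by combinatorial means. An Erd\H{o}s--Ko--Rado-type bound says that any family of $k$-subsets whose common intersection has size $\geq t$ has cardinality at most $\binom{n-t}{k-t}$; so once the preimage class $\{S : f_i(S) = m_i\}$ is forced by pigeonhole to be large, $|T_i(m_i)|$ must be small, which should drive the maximum matching in $A(\bar m)$ down to $O(l + \log n)$. The delicate part will be simultaneously ensuring that the adversarial $S_i$'s can be chosen with $\mathrm{OPT} = \Omega(n)$, which I expect to argue via Hall's theorem on the ``bad choices'' bipartite graph that pairs each player with the neighbor sets collapsing $T_i$ to be empty or very small, or via a probabilistic argument showing that random adversarial $S_i$'s still admit a near-perfect matching.
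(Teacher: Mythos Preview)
Your proposal has a genuine gap in the second (main) phase. The claim that small $|T_i(m_i)|$ ``should drive the maximum matching in $A(\bar m)$ down to $O(l+\log n)$'' is false: the matching size in $A(\bar m)$ is not controlled by the \emph{degrees} $|T_i(m_i)|$. For instance, if every $T_i(m_i)$ is a distinct singleton, then each $|T_i|=1$ yet $A(\bar m)$ has a perfect matching. More to the point, the planner sees $\bar m$ and hence knows $T_i(m_i)$; whenever $T_i(m_i)\neq\emptyset$ it can set $a_i\in T_i(m_i)$ and guarantee that player $i$ is matched on \emph{every} consistent input. Thus an Erd\H{o}s--Ko--Rado bound showing $|T_i(m_i)|$ is small (even $|T_i(m_i)|\le 1$) places no constraint whatsoever on $\mathrm{ALG}$. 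Relatedly, your opening observation is phrased incorrectly: the planner is not required to output a matching that is valid on every consistent input; edges with $a_i\notin T_i(m_i)$ are allowed, they simply contribute $0$ on \emph{some} consistent inputs. The adversary must pick those inputs after the fact, and then separately argue that $\mathrm{OPT}$ is large on that choice.

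What is missing is a mechanism that forces $\mathrm{ALG}$ to be small \emph{before} you start perturbing to raise $\mathrm{OPT}$. The paper achieves this by the opposite order of operations: it starts from an instance where \emph{all} players share the same neighbor set $U$ of size $w=2l+\log n$, so that structurally $\mathrm{ALG}\le w$ regardless of how clever the planner is. Only then does it perturb individual $S_i$'s (within the same message class) to sets not containing $a_i$, using a notion (``$\alpha$-unsafe'') that is strictly stronger than $|T_i|$ small: it asks that the \emph{union} of sets in $G_i(m_i)$ avoiding $a_i$ be large, which is exactly what Hall's theorem needs to certify a large matching in the fooling instance. Your Hall/probabilistic step at the end is in the right spirit, but without the ``collapsed'' starting instance there is nothing pinning $\mathrm{ALG}$ down.
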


%
The second theorem gives a lower bound on the power of randomized algorithms (proof in Subsection \ref{subsec-matching-rand}):

\begin{theorem}[lower bound for randomized algorithms]\label{theorem-matching-rand-lb}
Fix $\epsilon>0$. The approximation ratio of every algorithm in which each player sends a message of size $l\leq n^{\frac 1 2 - \alpha-\frac \epsilon 2}$ is at most $n^{\alpha}$, for every $\alpha\leq \frac 1 2 -\epsilon$.
\end{theorem}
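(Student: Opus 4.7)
The plan is to invoke Yao's minimax principle and exhibit a distribution $\mathcal{D}$ over matching instances such that no deterministic simultaneous protocol with $l$-bit messages (where $l\le n^{1/2-\alpha-\epsilon/2}$) outputs a matching of expected size exceeding $n^{1-\alpha}$, while $\mathrm{OPT}=n$ almost surely under $\mathcal{D}$. The distribution plants a uniformly random perfect matching $\pi:V_1\to V_2$ and hides each planted edge among random decoys: for each player $i$, set $S_i=\{\pi(i)\}\cup R_i$, where $R_i$ is a uniformly random $(s-1)$-subset of $V_2\setminus\{\pi(i)\}$ for a parameter $s\approx\sqrt{n}$ (with an $n^{\epsilon/2}$ slack to be tuned). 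Two symmetries are crucial: $S_i$ is marginally uniform over $s$-subsets of $V_2$, and conditional on $S_i$ the planted item $\pi(i)$ is uniform in $S_i$. Consequently no individual player can locally tell which element of $S_i$ is planted, and its $l$-bit message conveys at most $l$ bits of evidence about $\pi(i)$ to the planner.

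Given any deterministic $l$-bit protocol producing output $\mathcal{M}$, I would bound $\mathbb{E}[|\mathcal{M}|]$ by decomposing each valid output edge $(i,\mathcal{M}(i))$ into a \emph{planted hit} ($\mathcal{M}(i)=\pi(i)$) or an \emph{accidental decoy hit} ($\mathcal{M}(i)\in R_i$). The decoy contribution is controlled directly: since $R_i$ is uniform in $V_2\setminus\{\pi(i)\}$ and $\mathcal{M}(i)$ is a function of the transcript, $\Pr[\mathcal{M}(i)\in R_i]\le s/n$, yielding at most $s$ accidental matches in expectation, which is negligible whenever $s\ll n^{1-\alpha}$. The planted contribution is the crux: a deterministic protocol produces at most $2^{nl}$ distinct output matchings (one per transcript), and for any fixed candidate $\mathcal{M}^*$ of size $K$, the probability that a uniform random perfect matching $\pi$ agrees with $\mathcal{M}^*$ on all $K$ pairs is $\prod_{t=0}^{K-1}\tfrac{1}{n-t}\le (2/n)^K$ for $K\le n/2$. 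A union bound over the $2^{nl}$ candidate outputs, sharpened by a birthday-type calculation exploiting the matching's distinctness constraint and the Chernoff inequality from Section~\ref{sec-prelim} to concentrate the number of coincidences between any fixed candidate and the random $\pi$, upgrades the per-player effective hit rate; solving the resulting inequality for $K$ delivers the trade-off $l\cdot n^\alpha\lesssim n^{1/2-\epsilon/2}$ and hence the claimed lower bound on the approximation ratio.

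The main obstacle is that neither ingredient is tight on its own: the decoy bound caps the output only at $O(s)$, and a plain union bound over the $2^{nl}$ candidate outputs delivers merely a polylogarithmic lower bound on $l$. The polynomial threshold $l\ge n^{1/2-\alpha-\epsilon/2}$ emerges only after tuning $s\approx\sqrt{n}$ and using the matching's distinctness to amplify the per-hit coincidence probability in a birthday-paradox fashion, effectively replacing the $1/n$ factor by a $\sqrt{K}/n$-type factor per candidate hit. Getting this combinatorial balance quantitatively correct---and accounting for the $\epsilon/2$ slack in the exponent---is the delicate step; I expect it to be handled by first conditioning on the message transcript (which fixes the planner's ``commitment'' to one of $2^{nl}$ candidate matchings) and then applying the Chernoff concentration to the remaining randomness over $\pi$ to rule out that too many of those candidates simultaneously contain many planted hits.
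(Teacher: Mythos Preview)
Your proposed hard distribution does not work: it is far too easy. Observe that under $\mathcal{D}$ each $S_i$ is (marginally) a uniformly random $s$-subset of $V_2$, and that, conditioned on $\pi$, the sets $S_1,\ldots,S_n$ are independent. Consider the trivial $O(\log n)$-bit protocol in which every player reports one uniformly random element of $S_i$ and the planner greedily assigns each reported item to some player who reported it. The reports are then $n$ near-uniform, near-independent items of $V_2$, so the number of distinct reports---and hence the size of the output matching---is $(1-1/e)n$ in expectation. Thus your distribution admits a constant-factor approximation with $l=O(\log n)$, so it cannot yield the claimed $n^{\alpha}$ lower bound for any $\alpha>0$. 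Relatedly, your decoy bound $\Pr[\mathcal{M}(i)\in R_i]\le s/n$ is false: $\mathcal{M}(i)$ is a function of the transcript, which depends on $S_i\supseteq R_i$; in the protocol just described $\Pr[\mathcal{M}(i)\in R_i]\approx 1$, not $s/n$. The vaguely invoked ``birthday amplification'' cannot rescue the argument, because the planted-hit analysis concerns the probability that a fixed candidate matching agrees with a uniform permutation on many pairs, and the distinctness constraint of a permutation only \emph{decreases} that probability relative to independent coordinates.

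The paper's proof avoids all this by building heavy correlation into the $S_i$'s. A hidden set $T$ of size $2\sqrt{n}$ is chosen, and every $S_i$ consists of $\sqrt{n}$ random elements of $T$ together with one random element of $T^c$; hence almost all of every player's set lies in the same small $T$, which can supply at most $2\sqrt{n}$ matches in total. Any good algorithm must therefore recover, for a constant fraction of players, the single ``outside'' element in $S_i\setminus T$. The proof then reduces this to a two-player \emph{hidden item} problem (Bob holds $S$, Alice holds $T$, Bob sends $l$ bits, Alice must output the unique element of $S\setminus T$), and derives the communication lower bound for that problem from Razborov's set-disjointness bound via a random-self-reduction and parallel repetition. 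The crucial missing idea in your proposal is precisely this correlation through a small hidden common set; without it no amount of union-bound or Chernoff sharpening will produce a polynomial lower bound on $l$.
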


The next proposition 
shows that both lower bounds are essentially tight. In particular, this implies a proven gap between the power of deterministic and randomized algorithms.
\begin{proposition}\label{prop-matching-trivial} \
\begin{enumerate}
\item There exists a deterministic simultaneous algorithm that uses $l$ bits per player and provides an approximation ratio of $\max(2,\frac {n\log n} l)$.
\item There exists a simultaneous randomized algorithm that provides an expected approximation ratio of $O(\sqrt n)$.
\end{enumerate}
\end{proposition}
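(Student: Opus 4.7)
For part (1), the plan is a simple ``top-$k$'' reporting scheme. Set $k = \lfloor l/\log n\rfloor$; each player $i$ sends the indices of the $\min(k, |S_i|)$ smallest-indexed items in $S_i$ (each in $\log n$ bits), and the planner outputs a maximum matching in the restricted graph $H$ formed by the reported edges. The analysis reduces to showing $m(H) \geq \min(k, m^*)$, where $m^*$ is the maximum matching size in $G$; this immediately yields approximation ratio $m^*/m(H) \leq \max(1, m^*/k) \leq n/k \leq n\log n/l$, which is bounded by $\max(2, n\log n/l)$ as required.

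To prove $m(H) \geq \min(k, m^*)$ I argue by contradiction using K\"onig's theorem. Suppose $m(H) < \min(k, m^*)$; in particular $m(H) < k$, so there is a vertex cover $C$ of $H$ with $|C| < k$. The key step is showing $C$ already covers $G$. Fix any edge $(i,j) \in G$ with $i, j \notin C$: either $j$ is reported by $i$, in which case $(i,j) \in H$ is uncovered by $C$ (contradiction), or $j$ is unreported, which means the $k$ items reported by $i$ all lie strictly below $j$ in index and form $k$ edges of $H$; since $i \notin C$, each of these $k$ reported items must lie in $C$, forcing $|C| \geq k$, contradicting $|C| < k$. Thus $C$ covers $G$, so $m^* \leq |C| < \min(k, m^*)$, the desired contradiction.

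For part (2), the plan is the natural randomized algorithm: each player $i$ sends a uniformly random element of $S_i$ using $\lceil\log n\rceil$ bits, and the planner matches each announced item to one of the players that announced it. Set $w_j := \sum_{i : j \in S_i} 1/|S_i|$, so that $\sum_j w_j = n$. The resulting matching size $m'$ satisfies $\mathbb{E}[m'] = \sum_j \Pr[\text{item } j \text{ is announced}] \geq \sum_j (1 - e^{-w_j}) \geq \tfrac{1}{2}\sum_j \min(w_j, 1)$, using the elementary bound $1 - e^{-x} \geq \tfrac{1}{2}\min(x,1)$. I case-split on whether at least $m^*/2$ of the OPT pairs $(i, j^*_i)$ have $|S_i| \leq \sqrt{n}$ (``light''): in the light-dominated case, $w_{j^*_i} \geq 1/\sqrt{n}$ on each such OPT item, immediately giving $\mathbb{E}[m'] \geq m^*/(4\sqrt{n})$.

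In the complementary ``heavy'' case (more than $m^*/2$ OPT players have $|S_i| > \sqrt{n}$), I restrict attention to contributions from heavy OPT players only, obtaining weights $w_j' \leq w_j$ with $\sum_j w_j' > m^*/2$ and the pointwise bound $w_j' < m^*/\sqrt{n}$ (at most $m^*$ heavy OPT players, each contributing less than $1/\sqrt{n}$). A short balancing argument on $L := \{j : w_j' > 1\}$ then splits into two subcases: if $|L| \geq m^*/(2\sqrt{n})$ then $\sum_j \min(w_j', 1) \geq |L| = \Omega(m^*/\sqrt{n})$, and otherwise the pointwise bound forces $\sum_{j \in L} w_j' < (m^*)^2/(2n)$, so the remaining budget $\sum_{j \notin L} w_j'$ is at least a constant fraction of $\min(m^*, n)$ and its unclipped contributions again yield $\Omega(m^*/\sqrt{n})$. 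Executing this two-case balance cleanly in the heavy regime (and handling the edge case $m^* > n/2$ via a separate constant-ratio argument) is the main technical obstacle; the K\"onig argument in part (1) and the light case of part (2) are essentially routine.
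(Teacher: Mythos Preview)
Your Part~(1) uses essentially the same algorithm as the paper (each player reports $k \approx l/\log n$ neighbors, the planner matches optimally in the reported graph~$H$), but your analysis is different and cleaner: the paper does a two-case argument on whether most OPT players have degree at most~$k$, whereas you use K\"onig's theorem to show directly that $m(H)\geq\min(k,m^*)$. Your argument is correct; note that the ``smallest-indexed'' rule is irrelevant---the K\"onig step only uses that a player who fails to report some neighbor must have reported exactly $k$ others.

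For Part~(2) the paper does not give a direct argument at all; it simply invokes the $k$-round algorithm of Section~4.2 with $k=1$. Your load-based analysis via $w_j=\sum_{i:\,j\in S_i}1/|S_i|$ is a genuinely different route, and the light case is fine. The heavy case, however, has a real gap: your proposed ``separate constant-ratio argument'' for $m^*>n/2$ cannot exist. Take $S_i=\{1,\dots,\lfloor\sqrt n\rfloor\}\cup\{i\}$ for every~$i$; then $m^*=n$, but an item $j>\sqrt n$ is announced only with probability $\approx 1/\sqrt n$, so $\mathbb E[m']=\Theta(\sqrt n)$ and the approximation ratio is $\Theta(\sqrt n)$, not $O(1)$.

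The fix is to drop the $|L|$ split and the edge case entirely. From your pointwise bound $w_j'< m^*/\sqrt n$ you get $\min(w_j',1)\geq w_j'/\max(1,m^*/\sqrt n)$ for every~$j$, hence
\[
\sum_j \min(w_j,1)\ \geq\ \sum_j \min(w_j',1)\ \geq\ \frac{\sum_j w_j'}{\max(1,\,m^*/\sqrt n)}\ >\ \frac{m^*/2}{\max(1,\,m^*/\sqrt n)}\ \geq\ \frac{\min(m^*,\sqrt n)}{2}.
\]
Since $m^*\leq n$, this is at least $m^*/(2\sqrt n)$ in all cases, giving $\mathbb E[m']\geq m^*/(4\sqrt n)$ and the claimed $O(\sqrt n)$ ratio without any further case analysis.
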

\begin{proof}
The randomized algorithm will be obtained as a corollary of the $k$-round algorithm of Subsection \ref{subsec-k-round-matching} ($k=1$). We now describe the deterministic algorithm.

Let $l'=\frac l{\log n}$. We consider the algorithm where each player reports the indices of some arbitrary $l'$ vertices in his neighbor set. The algorithm matches as many reported vertices as possible.

We now analyze the approximation ratio of the algorithm. Consider some optimal matching. We distinguish between two cases. The first one is when at least half of the players that are matched in the optimal solution have at most $l'$ neighbors (call this set $S$). In this case, vertices in $S$ can report their full neighbor set. The algorithm will consider in particular the matching that matches vertices in $S$ as in the optimal solution, and thus we will get a $2$ approximation.

In the second case, most of the players that are matched in the optimal solution have more than $l'$ neighbors (call this set $T$). Consider the matching that the algorithm outputs. If all vertices in $T$ are matched, then we get a $2$ approximation. Otherwise, there is a player $i\in T$ that is not matched by the algorithm. This implies that all the $l'$ vertices that he reported are already matched, and this is a lower bound to the number of matches that the algorithm outputs. The theorem follows since the optimal matching makes at most $n$ matches.
\end{proof}

\subsection{Hardness of Deterministic Matching}\label{subsec-matching-hardness}

In the proof we assume that our graphs are $w$-uniform graphs. That is, the size of neighbor set $\nseti{i}$ for every player $i$ is $w$. Notice that this assumption only makes our hardness result stronger. For this proof, we fix a specific simultaneous algorithm and analyze its properties. 

We begin by considering a specific random construction of graphs that we name $w$-random. In a $w$-random graph we choose $|U|=w$ vertices from $V_2$ uniformly at random and let the neighbor set of each one of the players be $\nseti{i}=U$. Let $(a_1,\ldots,a_n)$ be the output of the algorithm on this instance. As the optimal matching includes exactly $w$ matched players it is clear that the solution outputted by the algorithm matches at most $w$ players. The crux of the proof is constructing a ``fooling instance'', where all players send the same messages and hence the algorithm cannot distinguish the fooling instance from the original instance and outputs the same allocation. We will construct this fooling instance so that on one hand for almost every player $i$, $a_i$ is not in the neighbour set of player $i$ (this will be true to approximately $n-2w$ players). This implies that the value of the matching that the algorithm outputs in the fooling instance is low. On the other hand, the size of the optimal matching in the fooling instance will be $\alpha=\Theta(n)$. 


Let $g_i(V)$ denote the message that player $i$ sends when his neighbor set is $V$. Let $G_i(m) = \{V|g_i(V)=m\}$ and $G_i(V)=G_i(g_i(V))$. The main challenge in the proof is constructing a fooling set with a large optimal matching. The next definition provides us the machinery required for proving this:

\begin{definition}
For a player $i$ and neighbor set $\nseti{i}$, $\nseti{i}$ is \emph{$\alpha$-unsafe} for a vertex $k\notin \nseti{i}$ if $|\cup_{U\in G_i(\nseti{i}), k \notin U} U |\geq \alpha$. Otherwise, we say that $S_i$ is $\alpha$-safe for $k$.
\end{definition}

For a $w$-random graph we say that player $i$ is $\alpha$-adaptable if the neighbor set $S_i$ is $\alpha$-unsafe for $a_i$. We will first formalize the discussion above by constructing a fooling instance given that there are $|P|$ $\alpha$-adaptable players. Later, we will show that there exists an instance in which $|P|\approx n-2w$ players are $\alpha$-adaptable.

\begin{lemma}
Consider a $w$-random graph in which a set $P$ of the players are $\alpha$-adaptable. Then, there exists a fooling instance for which at least $\min(|P|,\alpha)$ vertices can be matched, but the algorithm returns a solution with at most $w$ matched vertices.
\end{lemma}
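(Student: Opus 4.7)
The plan is to keep every player's message unchanged, so that the algorithm is forced to output the same allocation $(a_1,\dots,a_n)$ on the fooling instance as on the original $w$-random graph, while replacing the neighbor sets of the $\alpha$-adaptable players with carefully chosen alternatives. For each $i\in P$ define $R_i=\bigcup_{U'\in G_i(S_i),\,a_i\notin U'}U'$; by $\alpha$-adaptability $|R_i|\ge\alpha$, and in particular each $R_i$ is nonempty. The replacement neighbor set $N_i$ for $i\in P$ will be drawn from $G_i(S_i)$ (so that $i$'s message does not change) and will be chosen to exclude $a_i$ (so that the algorithm's assignment to $i$ becomes invalid in the fooling instance). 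The remaining freedom in picking $N_i$ will be used to create a large matching.

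I would then carry out a simple greedy construction. Fix any ordering of $P$ and process its players in turn: when visiting $i$, if fewer than $\min(|P|,\alpha)$ players of $P$ have already been matched, say using the distinct vertices $v_{i_1},\dots,v_{i_k}$ with $k<\alpha$, then $R_i\setminus\{v_{i_1},\dots,v_{i_k}\}$ is nonempty because $|R_i|\ge\alpha>k$; pick any $v_i$ from this set and select $N_i\in G_i(S_i)$ with $v_i\in N_i$ and $a_i\notin N_i$, which exists by the very definition of $R_i$. For each remaining $i\in P$, let $N_i$ be any element of $G_i(S_i)$ not containing $a_i$ (it exists since $R_i\neq\emptyset$). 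For $i\notin P$ keep $N_i=U$. The pairs $\{(i,v_i)\}$ produced by the greedy step are then a valid matching of size $\min(|P|,\alpha)$ in the fooling instance.

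Finally I would verify that the algorithm's output on this instance is small. Since $g_i(N_i)=g_i(S_i)$ for every $i$, the transcript is identical to the one on the $w$-random graph, and the algorithm therefore outputs the same allocation $(a_1,\dots,a_n)$. Every $i\in P$ has $a_i\notin N_i$ and is thus unmatched by this output; every matched $i\notin P$ must have $a_i\in U$, and the distinctness of the $a_i$'s limits the number of such $i$'s to at most $|U|=w$. So the algorithm's output matches at most $w$ vertices, which is the bound claimed. The only step with any content is the greedy matching argument, and once one isolates $R_i$ as the correct object the combinatorics reduces to the trivial observation that greedily drawing distinct elements from sets of size at least $\alpha$ succeeds for the first $\alpha$ rounds; I do not expect any further obstacle.
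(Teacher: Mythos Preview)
Your proposal is correct and follows essentially the same approach as the paper: replace each $\alpha$-adaptable player's neighbor set by one in the same message-fiber that avoids $a_i$, so the algorithm's output is unchanged while the players in $P$ become unmatched, and use the ``interest sets'' $R_i$ (each of size at least $\alpha$) to build a large matching among $P$. The only difference is cosmetic: the paper invokes Hall's marriage theorem to get the matching of size $\min(|P|,\alpha)$, whereas you carry out the equivalent greedy argument explicitly; both are standard and yield the same conclusion.
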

\begin{proof}
The fooling instance is constructed as follows: for each player $i\in P$, let $\nseti{i}'$ be some neighbor set such that $a_i\notin \nseti{i}'$ and $\nseti{i}'\in G_i(\nseti{i})$. Such $\nseti{i}'$ exists since by assumption player $i$ is an $\alpha$-adaptable player which implies that $a_i$ is $\alpha$-unsafe for $\nseti{i}$. For any such choice of $\nseti{i}'$ we have that players in $P$ are still not matched by the algorithm. The number of matchings in the fooling instance is therefore at most $w$. Notice that this argument holds for \emph{any} set of $\nseti{i}'$'s chosen as above.

To guarantee that at least $\min(|P|,\alpha)$ vertices can be matched in the new instance, we have to be more careful in our choice of the $\nseti{i}'$'s. We say that player $i$ with neighbor set $\nseti{i}$ is \emph{interested in} vertex $j\in V_2$ if there exists a neighbor set $\nseti{i}'$ where $a_i\notin \nseti{i}'$, $\nseti{i}' \in G_i(\nseti{i})$, and $j\in \nseti{i}'$. Since for every $i\in P$ we know that $a_i$ is $\alpha$-unsafe for $\nseti{i}$, there exists at least $\alpha$ vertices which player $i$ is interested in. Thus by Hall's marriage theorem there exists a matching of at least $\alpha$ vertices among the players in $P$ where each player is matched to a vertex that he is interested in. This implies that there exists a set of neighbor set $\nseti{i}'$ as above where at least $\min(|P|,\alpha)$ vertices can be matched in the fooling instance.
\end{proof}

We now show that there exists a $w$-random graph in which the number of $\alpha$-adaptable players is large:

\begin{lemma}\label{lemma-matching-adaptable}
Let $p=\dfrac{2^l\cdot n{\alpha \choose w}}{{n \choose w} }$. There exists a $w$-random graph in which at least $(1-p)\cdot n-w$ of the players are $\alpha$-adaptable.
\end{lemma}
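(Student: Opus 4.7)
The plan is to use the probabilistic method: pick $U$ uniformly at random from $\binom{V_2}{w}$ and show that the expected number of non-$\alpha$-adaptable players in the resulting $w$-random graph is at most $pn+w$. Averaging over $U$ then yields some choice of $U$ for which at least $(1-p)n-w$ players are $\alpha$-adaptable, as required.

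Call a player $i$ non-adaptable for one of two reasons: (a) $a_i\in S_i=U$, in which case the definition of $\alpha$-safe (which insists $k\notin S_i$) does not apply and so $S_i$ is trivially not $\alpha$-unsafe for $a_i$; or (b) $a_i\notin U$ but $S_i$ is $\alpha$-safe for $a_i$. I would bound these two contributions separately by double-counting over pairs $(i,U)$. Case (a) is easy: for any fixed $U$, since the algorithm outputs a valid matching and every player's neighbor set equals $U$ with $|U|=w$, at most $w$ players can end up with $a_i\in U$. Summing over all $U$ gives at most $w\binom{n}{w}$ pairs of type (a).

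Case (b) is the main step. Fix a player $i$, a message $m\in\{0,1\}^l$, and a vertex $k\in V_2$, and set $W_{i,m}^k=\bigcup_{U'\in G_i(m),\,k\notin U'} U'$; this set depends only on $(i,m,k)$. The pair $(m,k)$ contributes nothing to the type-(b) count for player $i$ unless $|W_{i,m}^k|<\alpha$, and in that case the key observation is that any witness $U$ (one with $g_i(U)=m$, $a_i(U)=k$, and $k\notin U$) is itself one of the sets appearing in the union defining $W_{i,m}^k$, so $U\subseteq W_{i,m}^k$. Hence the number of $w$-subsets of $V_2$ that can witness this $(m,k)$ is at most $\binom{\alpha-1}{w}\le\binom{\alpha}{w}$. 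Summing over the $2^l$ possible messages, the $n$ vertices $k$, and the $n$ players gives at most $2^l\, n^2\binom{\alpha}{w}$ pairs of type (b) in total.

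Adding the two bounds and dividing by $\binom{n}{w}$, the average over $U$ of the number of non-adaptable players is at most $w + \frac{2^l\, n^2 \binom{\alpha}{w}}{\binom{n}{w}} = w + pn$, which is the desired conclusion by the probabilistic method. The one subtle point I expect to dwell on is the observation in case (b) that the witness $U$ lies inside its own defining union $W_{i,g_i(U)}^{a_i(U)}$; once this is in hand, the rest is straightforward counting and averaging.
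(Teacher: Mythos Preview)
Your proof is correct and follows essentially the same approach as the paper: both arguments hinge on the observation that if $S_i$ is $\alpha$-safe for $k$ with $g_i(S_i)=m$, then $S_i\subseteq W_{i,m}^k$ and $|W_{i,m}^k|<\alpha$, so at most $\binom{\alpha}{w}$ sets can play this role per $(m,k)$ pair; the $2^l\cdot n$ factor comes from summing over messages and vertices, and the $-w$ from the matching constraint that at most $w$ players can have $a_i\in U$. The only minor organizational difference is that the paper first proves the stronger intermediate statement that at least $(1-p)n$ players have $S_i$ $\alpha$-unsafe for \emph{every} $k\notin S_i$, and then subtracts $w$, whereas you directly count non-adaptable players (targeting only $k=a_i$) via pairs $(i,U)$---but the bound and the underlying counting are identical.
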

\begin{proof}
We first show that for each player $i$ the number of neighbor sets that are $\alpha$-safe for at least one vertex is small. 
\begin{claim}\label{clm-matching-random-v}
For each player $i$ there are at most $2^l \cdot n{\alpha \choose w}$ possible neighbor sets (of size $w$) that are $\alpha$-safe for at least one vertex. 
\end{claim}
\begin{proof}
Consider a message $m$ and a vertex $k\in V_2$. Observe that by definition for every set $S$ which is $\alpha$-safe for $k$ we have that $S \subseteq \cup_{U\in G_i(\nseti{i}), k \notin U} U$ and that $|\cup_{U\in G_i(\nseti{i}), k \notin U} U|\leq \alpha$. This immediately implies that there are at most ${\alpha \choose w}$ neighbor sets in $G_i(m)$ that are $\alpha$-safe for $k$. This is true for each of the $n$ vertices in $V_2$ and hence there are at most $n{\alpha \choose w}$ neighbor sets in $G_i(m)$ that are $\alpha$-safe for at least one vertex. The proof is completed by observing that there are at most $2^l$ different messages.
\end{proof}

We are now ready to show that there exists a $w$-random graph with the required number of $\alpha$-adaptable players. Observe that in a $w$-random graph, for each player $i$ the probability that $\nseti{i}$ is $\alpha$-safe for some vertex $k \notin \nseti{i}$ is at most $\dfrac{2^l\cdot n{\alpha \choose w}}{{n \choose w} } =p$. This is simply because by construction the neighbor set $\nseti{i}$ of player $i$ is chosen uniformly at random from all possible neighbor sets (the neighbor sets of any two players are indeed correlated). 

We now show that there exists a $w$-random graph in which there is a set $P'$, $|P'|\geq (1-p) \cdot n$ where for each $i\in P'$ we have that $\nseti{i}$ is $\alpha$-unsafe for every $k \notin \nseti{i}$. To see why this is the case, for each player $i$, let $n_i$ be a random variable that gets a value of $1$ if $\nseti{i}$ is $\alpha$-unsafe for every $k \notin \nseti{i}$ and a value of $0$ otherwise. Let $n'=\Sigma_i n_i$. By the first part, for any player $i$, $E[n_{i}]\geq 1-p$. Using linearity of expectation, $E[n']\geq n\cdot (1-p)$. The claim follows since there must be at least one instance $I$ where $n'\geq E[n']$.

To conclude the proof, observe that in the instance $I$ any player $i\in P$ for which $a_i \notin S_i$ is an $\alpha$-adaptable player. Since there are at most $w$ players for which $a_i \in S_i$ we have that at least $|P'|-w= (1-p)\cdot n-w$ players are $\alpha$-adaptable, as required.
\end{proof}

Finally, we compute the values of our parameters and show that the theorem indeed holds.

\begin{lemma}
For $\alpha =\frac{n}{4}$ and $w=2l + \log(n)$, the approximation ratio of a simultaneous algorithm in which each message length is at most $l$ bits is $\dfrac{n}{8l+ 4\log(n)}$.
\end{lemma}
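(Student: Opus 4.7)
The plan is to finish the argument by plugging the stated parameters into the machinery of the two preceding lemmas and verifying that the probability $p$ from Lemma~\ref{lemma-matching-adaptable} is small enough to guarantee a large set of $\alpha$-adaptable players. Concretely, I would substitute $\alpha=n/4$ and $w=2l+\log n$ into
$$ p \;=\; \frac{2^l \cdot n \binom{\alpha}{w}}{\binom{n}{w}}. $$

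The one computation worth doing carefully is the binomial ratio. Writing
$$ \frac{\binom{n/4}{w}}{\binom{n}{w}} \;=\; \prod_{i=0}^{w-1} \frac{n/4 - i}{n - i}, $$
each factor is at most $1/4$ (since $(n/4 - i)/(n-i) \le 1/4$ iff $i \ge 0$), so the ratio is bounded by $4^{-w} = 2^{-2w}$. This gives $p \le 2^{l + \log n - 2w}$, and substituting $w = 2l + \log n$ collapses the exponent to $-3l - \log n$, yielding the very comfortable bound $p \le 1/n$.

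With this in hand I would apply Lemma~\ref{lemma-matching-adaptable} to extract a $w$-random graph in which at least $(1-p)n - w \ge n - 1 - (2l + \log n)$ players are $\alpha$-adaptable. This quantity is at least $\alpha = n/4$ provided $2l + \log n \le 3n/4 - 1$, a condition that is automatically satisfied in any parameter regime where the bound $n/(8l + 4\log n)$ is non-trivial. So the set $P$ of $\alpha$-adaptable players satisfies $|P| \ge \alpha$. Then the preceding fooling-instance lemma produces an input on which the algorithm matches at most $w = 2l + \log n$ vertices while the optimum matching has size at least $\min(|P|,\alpha) = n/4$, giving an approximation ratio of at least $(n/4)/(2l + \log n) = n/(8l + 4\log n)$.

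There is essentially no obstacle here: all the conceptual content has been packed into the previous two lemmas, and this final step is pure bookkeeping. The only thing to watch is the consistency of the parameter regime (namely $w \le n/4$ so that $\binom{n/4}{w}$ is meaningful, and $|P| \ge \alpha$ so that the fooling instance indeed admits a matching of size $\alpha$), and both conditions follow from the mild requirement that the claimed approximation ratio be non-trivial.
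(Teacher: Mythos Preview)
Your proposal is correct and follows essentially the same approach as the paper. Both arguments bound the binomial ratio $\binom{n/4}{w}/\binom{n}{w}$ term-by-term by $(1/4)^w$, plug in $w=2l+\log n$ to obtain $p\le (1/4)^l$ (the paper's Claim~\ref{matching-claim-bound1}; your exponent $2^{-3l-\log n}$ is the same quantity written more tightly), and then invoke Lemma~\ref{lemma-matching-adaptable} together with the fooling-instance lemma to conclude $\alpha/w = n/(8l+4\log n)$.
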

\begin{proof}
We first compute a lower bound on the number of $\alpha$-adaptable players. By Lemma \ref{lemma-matching-adaptable}, we have that the number of $\alpha$-adaptable players is a least $(1-p) \cdot n -w$. By Claim \ref{matching-claim-bound1} below we have that $(1-p)\geq (1-\left( \frac{1}{4} \right)^l)$, thus $(1-p) \cdot n -w>\alpha$. Therefore the approximation ratio of the algorithm is at most $\dfrac{\alpha}{w} = \dfrac{\frac{n}{4}}{2l+ \log(n)} = \dfrac{n}{8l+ 4\log(n)}  $.
\end{proof}

\begin{claim}\label{matching-claim-bound1}
For $\alpha =\frac{n}{4}$ and $w=2l + \log(n)$: $p\leq\frac {2^l \cdot n{\alpha \choose w}} {{n \choose w}}  \leq \left( \frac{1}{4} \right)^{ l}$.
\end{claim}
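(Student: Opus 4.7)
The plan is to bound the ratio $\binom{\alpha}{w}/\binom{n}{w}$ factor by factor and then substitute the values of $\alpha$ and $w$.

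First I would write
\[
\frac{\binom{\alpha}{w}}{\binom{n}{w}} \;=\; \prod_{i=0}^{w-1} \frac{\alpha - i}{n - i}
\]
and observe that since $\alpha < n$, each factor $(\alpha - i)/(n - i)$ is a decreasing function of $i$ (its derivative with respect to $i$ is $(\alpha - n)/(n - i)^2 < 0$). Hence every factor is at most the value at $i = 0$, which is $\alpha/n$. Specializing to $\alpha = n/4$ gives $\binom{\alpha}{w}/\binom{n}{w} \le (1/4)^w$.

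Next I would plug in $w = 2l + \log n$ and simplify. Since $4^{\log n} = n^2$, we get $4^{-w} = 2^{-4l}/n^2$, so
\[
p \;=\; \frac{2^l \cdot n \binom{\alpha}{w}}{\binom{n}{w}} \;\le\; 2^l \cdot n \cdot \frac{1}{2^{4l} n^2} \;=\; \frac{1}{2^{3l} n} \;\le\; 2^{-2l} \;=\; \left(\frac{1}{4}\right)^l,
\]
which is the desired bound.

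There is essentially no obstacle here; the only care needed is the monotonicity step for the factors $(\alpha-i)/(n-i)$, which is where the factor $(1/4)^w$ (rather than merely some weaker bound) comes from, and this extra exponential saving in $w$ is exactly what absorbs the $n$ and $2^l$ prefactors once $w = 2l + \log n$.
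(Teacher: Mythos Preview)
Your proof is correct and essentially identical to the paper's: both bound the ratio $\binom{\alpha}{w}/\binom{n}{w}$ factor by factor to obtain $(\alpha/n)^w = (1/4)^w$, then substitute $w = 2l + \log n$ to absorb the $2^l n$ prefactor. The only cosmetic difference is that the paper indexes the product as $\prod_{i=1}^{w} (\alpha-w+i)/(n-w+i)$ and asserts the bound $\leq (\alpha/n)^w$ without spelling out the monotonicity argument you provide.
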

\begin{proof}
 \begin{align*}
\frac {2^l \cdot n{\alpha \choose w}} {{n \choose w}} = 2^l n \cdot \frac{\alpha! \cdot (n-w)!}{(\alpha-w)! \cdot n!}= 2^l n \cdot \frac{\prod_{i=1}^w (\alpha -w+i) }{\prod_{i=1}^w (n -w+i)} \leq 2^l n \cdot \left( \frac{\alpha}{n} \right)^w.
\end{align*}
By plugging in the values of $\alpha$ and $w$ we get:
\begin{align*}
 2^l n \cdot \left( \frac{\alpha}{n} \right)^w \leq  2^{l} n \cdot \left( \frac{1}{4} \right)^{2l+\log(n)} \leq n \left( \frac{1}{4} \right)^{l+ \log(n)} \leq \left( \frac{1}{4} \right)^{ l}.
\end{align*}
 \end{proof}


\subsection{Hardness of Randomized Matching}\label{subsec-matching-rand}

We consider a bipartite graph $(V_1,V_2,E)$ with $n$ vertices in each side. As usual, the left-side vertices are the players. We prove a lower bound for randomized algorithms in this setting. By Yao's principle, it is enough to prove a lower bound on the power of deterministic mechanisms on some distribution. 

The hard distribution on which we will prove the lower bound is the following:

\begin{enumerate}  
\item The size of the neighbor set $S_i$ of each player $i$ is exactly $k+1$, where $k=n^{\frac 1 2}$.
\item The neighbor sets $S_i$ are chosen, in a correlated way, as follows: a set $T$ of size exactly $2k$ is chosen uniformly at random, and each $S_i$
is obtained by independently taking a random subset of size exactly $k$ of $T$ and another single random element from $T^c$ (the complement of $T$). 
\item The players do not know $T$ nor do they know which of their elements is the one from $T^c$.
\end{enumerate}

We will prove the lower bound by reducing the matching problem to the following two player problem.

\subsubsection{A 2-Player Problem: The Hidden Item}

In the \emph{hidden item problem} there are two players (Alice and Bob) and $n$ items. In this problem Alice holds a subset $T$ of the items of size exactly $2k$ ($k=n^{\frac 1 2}$). Bob holds a set $S$ of size exactly $k+1$. The guarantee is that $|S\cap T|=k$. Bob sends a message to Alice of length $l$ who must output an item $x$ (based only on the message that she got and $T$). Alice and Bob win if $x\in S- T$.

We will analyze the power of deterministic mechanisms on the following distribution:

\begin{enumerate}
\item $T$ is selected uniformly at random among all subsets that consist of exactly $2k$ items.
\item $S$ is selected in a correlated way by taking a random subset of size $k$ from $T$ and a random extra element from $T^c$ (without knowing which is which). 
\end{enumerate}

\begin{lemma}\label{lemma-hidden-item-hardness}
If inputs are drawn from the above distribution, in any deterministic algorithm the probability that Alice and Bob win is at most $O(n^{-\alpha})$, for any $\alpha$ and $l$ such that $n^{\alpha} \cdot l = o(n^{\frac 1 2})$.
\end{lemma}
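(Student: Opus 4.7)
The plan is to prove the bound via a direct combinatorial counting argument based on fixed points of Alice's decoding map. Since the statement already concerns deterministic protocols, I work with an arbitrary pair $(f, g)$, where $f : \binom{[n]}{k+1} \to \{0,1\}^l$ is Bob's strategy and $g(T, \cdot) \in T^c$ is Alice's output. I first reparametrize the hard distribution: draw $T$ uniformly from $\binom{[n]}{2k}$, then \emph{independently} draw $A \sim \binom{T}{k}$ and $z \sim T^c$, and finally set $S = A \cup \{z\}$. One checks this yields the same joint law as in the statement, and crucially $A$ and $z$ are independent conditional on $T$, which is what makes the next step clean.

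Now fix $T$ and $A \subset T$, and define $\phi_{T,A}: T^c \to T^c$ by $\phi_{T,A}(z) = g(T, f(A \cup \{z\}))$. Alice wins on the triple $(T, A, z)$ if and only if $z$ is a fixed point of $\phi_{T,A}$, so the contribution of this $(T,A)$ to the total count of winning triples equals the number of such fixed points. Since $\phi_{T,A}$ factors through the $l$-bit message $f(A \cup \{z\})$, its image has cardinality at most $2^l$, and so $\phi_{T,A}$ has at most $2^l$ fixed points. Summing this bound over all $\binom{2k}{k}$ choices of $A$ and all $\binom{n}{2k}$ choices of $T$, and dividing by the total number $\binom{n}{2k}\binom{2k}{k}(n-2k)$ of triples, yields $\Pr[\text{win}] \le 2^l/(n-2k) = O(2^l/n)$. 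This directly implies $\Pr[\text{win}] \le O(n^{-\alpha})$ whenever $l \le (1-\alpha)\log_2 n$, which covers the sub-logarithmic regime of the hypothesis $n^\alpha \cdot l = o(n^{1/2})$.

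The main obstacle I expect is extending this to the super-logarithmic range $l = \omega(\log n)$, up to the full hypothesis $l = o(n^{1/2 - \alpha})$, where the raw bound $O(2^l/n)$ is vacuous but the lemma still asserts a nontrivial $O(n^{-\alpha})$ estimate. The natural refinement is to analyze the message-distribution matrix $\mu_m(z) := \Pr_A[f(A \cup \{z\}) = m]$, which satisfies the column-sum constraint $\sum_m \mu_m(z) = 1$ for every $z$ and the row-sum constraint $\sum_z \mu_m(z) = |\{S \in f^{-1}(m) : |S \cap T| = k\}|/\binom{2k}{k}$, while Alice's conditional success probability equals $\tfrac{1}{n-2k}\sum_m \max_z \mu_m(z)$. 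Combining these transportation-style constraints with a Cauchy--Schwarz estimate on the collision quantity $\sum_m \mu_m(z)^2 = \Pr_{A,A'}[f(A \cup \{z\}) = f(A' \cup \{z\})]$ for independent $A, A'$, and then averaging over $T$ to leverage the sparsity that a random $T$ forces on each class $f^{-1}(m)$, should reduce $\sum_m \max_z \mu_m(z)$ to order $\tilde O(l \sqrt n)$ and thereby deliver the desired $\tilde O(l/\sqrt n)$ bound, matching the lemma across its full stated range.
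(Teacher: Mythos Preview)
Your direct counting argument for the sub-logarithmic regime is correct and pleasantly clean: the reparametrization into $(T,A,z)$ is valid, the fixed-point characterization is right, and the image-size bound gives $\Pr[\text{win}] \le 2^l/(n-2k)$ exactly as you say. But this covers only $l = O(\log n)$, which is a tiny slice of the hypothesis $n^\alpha \cdot l = o(n^{1/2})$.

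The problem is your second half. You correctly identify the obstacle, but what follows is not a proof --- it is a wish list. You write down the win probability as $\frac{1}{n-2k}\sum_m \max_z \mu_m(z)$, name the obvious row- and column-sum constraints on the matrix $(\mu_m(z))$, and then assert that ``combining these transportation-style constraints with a Cauchy--Schwarz estimate'' and ``averaging over $T$ to leverage the sparsity that a random $T$ forces on each class $f^{-1}(m)$'' \emph{should} yield $\sum_m \max_z \mu_m(z) = \tilde O(l\sqrt n)$. None of those steps is carried out, and none of them is routine. In particular, Bob's function $f$ is fixed independently of $T$, so ``sparsity forced by random $T$'' needs an actual mechanism; the collision quantity $\sum_m \mu_m(z)^2$ naturally yields lower bounds via Cauchy--Schwarz, not the upper bound on $\sum_m \max_z \mu_m(z)$ you need; and you have given no indication of where the linear-in-$l$ (rather than exponential-in-$l$) dependence is supposed to come from. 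As written, this part carries no weight.

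The paper's argument is entirely different and sidesteps all of this. It proceeds by contradiction and reduction to two-party set disjointness. Assume a protocol with success probability $\omega(n^{-\alpha})$; use random self-reducibility (jointly permute the universe) to make it worst-case; amplify by running $O(n^\alpha)$ independent copies in parallel, so Bob now sends $O(n^\alpha l)$ bits and Alice outputs a list of $O(n^\alpha)$ candidates containing the hidden element with constant probability. Setting $R = T^c$, this list lets Alice and Bob solve the promise-disjointness instance ($|S\cap R|\in\{0,1\}$) on a padded universe of size $\Theta(k)$ with one extra short round. Razborov's $\Omega(k)$ lower bound for disjointness then forces $n^\alpha l = \Omega(k) = \Omega(n^{1/2})$, contradicting the hypothesis. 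This reduction delivers the full parameter range in one stroke and is what you are missing.
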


The lemma will be proved in Subsection \ref{section-hidden-item}. We first show why the lemma implies Theorem \ref {theorem-matching-rand-lb}.

\begin{lemma} \label{lemma-matching-rand}
Let $\alpha< \frac{1}{2} -\epsilon$. If there exists a deterministic algorithm for the hard distribution of the matching problem that provides an approximation ratio of $n^{\alpha}$ where each player sends a message of length $l$, then there exists an algorithm for the hard distribution of the hidden item problem where Bob sends a message of length $l$ and the probability of success is $\frac 1 {20n^{\alpha}}$.
\end{lemma}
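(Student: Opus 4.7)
The plan is to reduce the hidden item problem to the matching problem by having Alice simulate all players except one. Given her set $T$, Alice can sample the neighbor sets $S_2,\ldots,S_n$ of $n-1$ players exactly as in the hard matching distribution: she draws each $S_j$ independently as a random $k$-subset of $T$ together with an independent uniform element of $T^c$. Bob's set $S$ plays the role of $S_1$: since it too is a random $k$-subset of $T$ together with a uniform element of $T^c$ (conditioned on $T$), the full tuple $(S_1,\ldots,S_n)$ is distributed exactly as in the hard matching distribution. Bob sends his $l$-bit message for player $1$ to Alice; Alice computes the $l$-bit messages of players $2,\ldots,n$ herself, runs the deterministic matching protocol on all $n$ messages to obtain the allocation $(a_1,\ldots,a_n)$, and outputs $a_1$.

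Next I will argue the success probability. Let $M$ be the size of the matching returned by the protocol and $\mathrm{OPT}$ the size of a maximum matching on the instance. The assumed approximation guarantee gives $E[M]\ge E[\mathrm{OPT}]/n^{\alpha}$. I will show $E[\mathrm{OPT}]\ge cn$ for a small absolute constant $c>0$: each player independently picks a uniform element of $T^c$, and $T^c$ has size $n-2k\ge n/2$ for large $n$, so a standard balls-into-bins calculation shows the number of distinct elements of $T^c$ hit by at least one player is $\Omega(n)$ in expectation, giving a matching of linear size even if we use only $T^c$ edges. Since the joint distribution of $(S_1,\ldots,S_n)$ is symmetric under permutations of players, by symmetry of the protocol's output the probability that player $1$ is matched satisfies
\[
\Pr[\text{player }1\text{ is matched}]=\frac{E[M]}{n}\ge \frac{c}{n^{\alpha}}.
\]

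The final step separates matches into $T$ vs.\ $T^c$. If player $1$ is matched, then $a_1\in S_1$; and since $|S_1\cap T|=k$ with a unique element of $S_1$ lying in $T^c$, we have $a_1\in S_1\setminus T$ iff $a_1\in T^c$, which is exactly the winning condition for the hidden item game. At most $|T|=2k=2n^{1/2}$ players can be matched to $T$-vertices in total, so by symmetry
\[
\Pr[a_1\in T]\le \frac{2k}{n}=\frac{2}{n^{1/2}},
\]
and therefore
\[
\Pr[a_1\in S_1\setminus T]\ge \frac{c}{n^{\alpha}}-\frac{2}{n^{1/2}}.
\]
For $\alpha<\tfrac12-\epsilon$ and $n$ large enough, the first term dominates, and a direct constant check yields a success probability of at least $\frac{1}{20n^{\alpha}}$. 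The resulting hidden item algorithm is randomized (Alice's simulation uses fresh randomness), but by averaging we can fix the randomness to obtain a deterministic algorithm with the same success probability on the hard distribution, as needed for the reduction to Lemma~\ref{lemma-hidden-item-hardness}.

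The main obstacle is the linear lower bound on $E[\mathrm{OPT}]$ together with tracking the constants so that the success probability clears the $\frac{1}{20n^\alpha}$ threshold; everything else is a symmetry argument and a clean pigeonhole bounding $T$-matches by $2k$.
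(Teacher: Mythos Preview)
Your reduction is the same as the paper's, and the overall structure is right, but the step that invokes symmetry is not justified. The input distribution $(S_1,\ldots,S_n)$ is exchangeable, but the deterministic matching protocol need not treat the players symmetrically: nothing prevents it from, say, always matching player~1 before anyone else. Hence you cannot conclude $\Pr[\text{player }1\text{ is matched}]=E[M]/n$, nor $\Pr[a_1\in T]\le 2k/n$; these equalities/inequalities hold only on average over players, not for each individual player.

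The paper (and the natural fix) avoids this by combining the estimates \emph{before} picking a player. One first bounds the expected number of players matched to items in $T^c$: the total expected welfare is at least $n^{1-\alpha}/10$ (since $\mathrm{OPT}\ge n/10$ with high probability), and at most $2k$ of it can come from $T$, so the expected welfare from $T^c$ is at least $n^{1-\alpha}/20$. Then, by averaging over the $n$ players, \emph{some} fixed player $i^\star$ (depending only on the protocol, not the instance) satisfies $\Pr[a_{i^\star}\in S_{i^\star}\setminus T]\ge 1/(20n^\alpha)$. Have Bob simulate that particular player $i^\star$; since the $S_i$ are i.i.d.\ given $T$, Bob's input has the correct marginal regardless of which index is chosen. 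Your two separate ``symmetry'' steps cannot be repaired independently, because even with averaging you must ensure the \emph{same} player simultaneously has high $\Pr[\text{matched}]$ and low $\Pr[a_i\in T]$; combining first and averaging once is what makes this work.
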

\begin{proof}
Assume a deterministic algorithm for the hard matching distribution achieving an approximation ratio better than $n^{\alpha}$. Observe that the optimal social welfare is at least $\frac n {10}$ with very high probability. Thus, an expected social welfare of $\frac{n^{1-\alpha}}{10}$ is required for achieving this approximation ratio. Clearly at most $2k=2n^{\frac 1 2}<n^{\frac 1 2 + \epsilon}/20 <n^{1-\alpha}/20$ of this expected social welfare comes from items in $T$ (for big enough $n$). Thus, the expected social welfare, obtained just from items outside $T$ is at least $\frac {n^{1-\alpha}} {20}$.  

This implies that there exists some player, without loss of generality player $1$, whose expected value, not including any item in $T$, 
is at least $\frac{1}{20 n^{\alpha}}$.  We will use this protocol to construct the two-player protocol, by Bob simulating player $1$ and Alice simulating all the other players combined.

When Alice and Bob get their inputs $S$ and $T$ for the hidden item problem, Alice uses $T$ to choose at random $S_2 ,..., S_n$ as to fit the distribution of the $n$-player problem, 
and Bob sets $S_1=S$. Bob sends to Alice the message player $1$ sends in the $n$-player algorithm for matching.  Alice first simulates the messages of all players and then calculates the outcome $(a_1,...,a_n)$ of the $n$-player matching. Notice that whenever player $1$ in the $n$-player protocol gets utility $1$ from an item outside of $T$,  we have that $x \in S-T$. Thus Alice and Bob win with probability at least $\frac{1}{20 n^{\alpha}}$.
\end{proof}

Theorem \ref{theorem-matching-rand-lb} now follows as we have by Lemma \ref{lemma-hidden-item-hardness} that for $l \leq n^{\frac 1 2-\alpha -\frac \epsilon 2}$ and $\alpha \leq \frac{1}{2}-\epsilon$ the probability of success in the hidden item problem is $O(n^{-\alpha})$. This together with Lemma \ref{lemma-matching-rand} implies that there cannot be an algorithm for matching achieving an approximation ratio better than $n^\alpha$ using $l\leq n^{\frac 1 2-\alpha -\frac \epsilon 2}$ bits for $\alpha \leq \frac{1}{2}-\epsilon$. 

\subsubsection{Proof of Lemma \ref{lemma-hidden-item-hardness}}\label{section-hidden-item}

Assume that a protocol with a better winning probability than $O(n^{-\alpha})$ exists. We will use random-self-reducibility to get a randomized (public coin) protocol that will work for {\em all} pairs of sets $S,T$ with $|S|=k+1$, $|T|=2k$, $|S-T|=1$.  This is obtained by the two players choosing jointly a random permutation of the $n$ items and running the protocol on their permuted items.  Notice that this randomized reduction maps any original input to exactly the distribution on which we assumed the original protocol worked well, and thus now the winning probability, for any fixed input of the specified form, is at least $O(n^{-\alpha})$. We now run this protocol $O(n^{\alpha})$ times in parallel (with independent random choices) to get a situation where Bob sends
$O(n^{\alpha} \cdot l)$ bits and Alice outputs a set of size $O(n^{\alpha})$ that with high probability contains the element in $S-T$.

We will use the known hardness for the disjointness two-player communication-complexity problem:

\begin{theorem}[Razborov]
Assume that Alice holds a subset $S$ and Bob holds a subset $R$ of a universe of size $m$ where, $|S|=m/4$, $|T|=m/4$.
Distinguishing between the case that $S \cap R = \emptyset$ and the case that $|S \cap R| = 1$ requires $\Omega(m)$ randomized (multiple round, constant-error) communication.
\end{theorem}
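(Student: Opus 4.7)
The plan is to prove Razborov's $\Omega(m)$ lower bound via the classical \emph{corruption} (or rectangle) method; a cleaner modern alternative is the information-complexity framework of Bar-Yossef--Jayram--Kumar--Sivakumar, which I would mention but not pursue in full. By Yao's minimax principle, it suffices to exhibit a distribution $\mu$ on inputs under which every deterministic protocol with constant error must exchange $\Omega(m)$ bits. I take $\mu$ to be a suitably skewed mixture of $\mu_{YES}$, uniform over disjoint pairs $(S,R)$ with $|S|=|R|=m/4$, and $\mu_{NO}$, uniform over pairs with $|S\cap R|=1$. A convenient way to generate these two distributions is first to sample disjoint sets $X, Y \subseteq [m]$ of size $m/4-1$ and a ``witness'' element $z \notin X \cup Y$ uniformly; the NO sample is $(X \cup \{z\},\, Y \cup \{z\})$, while the YES sample replaces one copy of $z$ by a fresh, independent element of $[m] \setminus (X \cup Y \cup \{z\})$.

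The technical heart of the argument is Razborov's combinatorial rectangle lemma: there exist absolute constants $\delta, \alpha > 0$ such that for every combinatorial rectangle $A \times B$ in Alice's and Bob's input spaces,
\[
\mu_{NO}(A \times B) \;\geq\; \alpha \cdot \mu_{YES}(A \times B) \;-\; 2^{-\delta m}.
\]
I would prove this via the coupling above together with hypergeometric tail estimates: conditioning on landing in any not-too-tiny rectangle under $\mu_{YES}$, the conditional marginals of $S$ and $R$ remain ``smooth'' enough that replacing the independent fresh element by the shared witness $z$ (which is precisely the map $\mu_{YES} \to \mu_{NO}$) costs at most a constant factor in probability, up to an additive slack of $2^{-\delta m}$. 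This step is where the specific combinatorial structure of disjointness is exploited and where the constant $\delta$ is determined.

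Given the rectangle lemma, the lower bound follows from the standard corruption calculation. A $c$-bit deterministic protocol partitions the input space into at most $N \leq 2^c$ monochromatic rectangles. If the protocol errs with probability at most a sufficiently small constant $\epsilon$ under $\mu$, the YES-labeled rectangles must cover $\mu_{YES}$-mass at least $1 - O(\epsilon)$ and $\mu_{NO}$-mass at most $O(\epsilon)$. Applying the rectangle lemma to each YES-labeled rectangle and summing gives $O(\epsilon) \geq \alpha(1 - O(\epsilon)) - N \cdot 2^{-\delta m}$, which forces $N \geq 2^{\Omega(m)}$ and hence $c = \Omega(m)$; the extension from deterministic distributional complexity back to randomized, multi-round, constant-error complexity is immediate from Yao's principle. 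The main obstacle is the rectangle lemma itself: it encapsulates all of the combinatorial structure of disjointness, and everything else is essentially bookkeeping to pass between randomized protocols, distributional error, and rectangle partitions.
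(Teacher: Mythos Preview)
The paper does not prove this statement at all: it is quoted as Razborov's theorem and used as a black box, with no proof supplied. So there is nothing in the paper to compare your argument against line by line.

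That said, your sketch is a faithful outline of the standard corruption proof of Razborov's bound. The overall architecture---Yao's principle, the skewed hard distribution over disjoint versus uniquely-intersecting pairs, the rectangle (corruption) lemma, and the final counting of YES-rectangles---is correct and is exactly how the classical proof proceeds. Your remark that the information-complexity approach of Bar-Yossef--Jayram--Kumar--Sivakumar gives a cleaner alternative is also on point.

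The one place to be cautious is the rectangle lemma itself, which you explicitly flag as ``the technical heart'' and then only gesture at via a coupling and hypergeometric tails. That is indeed where all the work lives, and the informal description you give (``conditional marginals remain smooth enough'') hides a genuinely delicate argument. If you were asked to supply a full proof rather than a plan, this is the step that would need to be written out carefully; as a proposal it is fine, but be aware that the sentence beginning ``I would prove this via the coupling above'' is doing a great deal of load-bearing. For the purposes of this paper, though, citing Razborov (as the authors do) is entirely appropriate.
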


\begin{corollary}
By using a simple padding argument, we have that for $m=4k$ and a setting where Alice holds a set $S$ of size $k+1$ of a universe of size $n$ and Bob holds a set $R$ of size $n-2k$ of the same universe, distinguishing between $S \cap R = \emptyset$ and $|S \cap R| = 1$ requires $\Omega(m)$ randomized communication.
\end{corollary}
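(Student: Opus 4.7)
The plan is to reduce from Razborov's theorem via direct padding. The first subtlety is that applying Razborov with its natural parameter $m$ leaves no room to pad Bob's set up to size $n-2k$: both players would need new elements from $[n]\setminus[m]$, but the budget $n-m$ there is too small to accommodate Bob's $n-3k$ extra slots. The fix is to apply Razborov with a smaller parameter $m' = \Theta(m)$, so that the ``safe zone'' $[n]\setminus[m']$ is large enough to absorb both paddings without either player having to touch the other's Razborov universe.

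Concretely, I choose $m'$ to be a multiple of $4$ with $m' = \Theta(m)$ (any $m'\le 2(k-1)$ will do), and apply Razborov to obtain hard instances $(S',R')$ on $[m']$ with $|S'|=|R'|=m'/4$, for which distinguishing $S'\cap R'=\emptyset$ from $|S'\cap R'|=1$ requires $\Omega(m')$ randomized communication. Embed $[m']\subseteq[n]$ and fix two disjoint public subsets $A,B\subseteq [n]\setminus[m']$ of sizes $|A|=(k+1)-m'/4$ and $|B|=(n-2k)-m'/4$. Each player pads locally: Alice sets $S=S'\cup A$ and Bob sets $R=R'\cup B$, giving $|S|=k+1$ and $|R|=n-2k$ as required.

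The heart of the argument is the identity $S\cap R = S'\cap R'$. This holds because $A$ and $B$ are disjoint from each other and both lie entirely outside $[m']$, so $A\cap R' = S'\cap B = A\cap B = \emptyset$ and only $S'\cap R'$ can contribute to $S\cap R$. Consequently, distinguishing $|S\cap R|\in\{0,1\}$ is literally the same task as distinguishing $|S'\cap R'|\in\{0,1\}$, and since the padding is carried out locally from public information, any protocol for the padded problem yields one for Razborov's disjointness with no extra communication. The desired $\Omega(m')=\Omega(m)$ bound follows immediately.

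The only step requiring actual care is the budget inequality $|A|+|B|\le n-m'$, which after substitution becomes $m'/2 \le k-1$. This pins down the admissible range for $m'$; but since $m=4k$ and we only need $m'=\Theta(m)$, a choice such as $m'\approx m/2$ comfortably satisfies both the budget constraint and the asymptotic requirement, and the side conditions $|A|,|B|\ge 0$ hold trivially in any regime where $n\ge 2k$. This counting check is the entire ``obstacle'' --- there is no conceptual depth beyond the padding itself.
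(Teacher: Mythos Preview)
Your proof is correct and matches the paper's intended approach: the paper offers no details beyond the phrase ``by using a simple padding argument,'' and what you have written is precisely the natural way to carry that out. Your observation that one must invoke Razborov's bound with a parameter $m'=\Theta(m)$ strictly below $2(k-1)$ (rather than $m=4k$ itself) in order to leave enough room in $[n]\setminus[m']$ for both paddings is exactly the small bookkeeping point hidden behind the word ``simple,'' and your verification of the budget inequality $m'/2\le k-1$ and of $S\cap R=S'\cap R'$ is clean.
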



We now show how to use an algorithm for the hidden instance problem to solve the hard problem described in the previous corollary: let $T=R^c$ and use the protocol constructed in which Alice outputs a set of size  $O(n^{\alpha})$ that with high probability contains the element in $S-T = S \cap R$, she will now send this whole list back to Bob who will report which of these elements is in $S$.  We have now achieved a 2-round protocol (Bob $\rightarrow$ Alice $\rightarrow$ Bob) that uses $O(n^{\alpha}) \cdot l$ bits of communication that finds the element in $S \cap R$ with high probability, if such an element existed.  Otherwise, such an element is not found so we have distinguished the two possibilities. 

To reach a contradiction our protocol has to use less than $\Omega(m)$ bits. Thus for a contradiction to be reached it must hold that $n^{\alpha} \cdot l = o(m)=o(k)$.


\section{Algorithms for Bipartite Matching}\label{sec-matching-alg}

We provide two algorithms that guarantee significantly better approximation ratios using a small number of rounds. We first show that $O(\frac {\log n} {\delta^2})$ rounds suffice to get a $(1+\delta)$ approximation. 
In Subsection \ref{subsec-k-round-matching} we present an algorithm that provides an approximation ratio of $O(n^{\frac 1 {k+1}})$ 
in $k$ rounds. This shows that even a constant number of rounds suffices to get much better approximation ratios than what can be achieved by simultaneous algorithms.

\subsection{A $(1+\delta)$-Approximation for Bipartite Matching in $O(\frac {\log n} {\delta^2})$ Rounds}

The algorithm is based on an auction where each player competes at every point on one item that he demands the most at the current prices. Therefore, it will be easier for us to imagine the players as having valuations. Specifically, each player $i$ is a unit demand bidder with $v_i(j)=1$ if $j\in S_i$ and $v_i(j)=0$ otherwise. 

\subsubsection*{The Algorithm}

\begin{enumerate}
\item For every item $j$ let $p_j=0$. 
\item Let $N_1$ be the set of all players. 
\item In every round $r=1,\ldots,\frac {2\log n} {\delta^2}$:
\begin{enumerate}
\item For each player $i\in N_r$, let the \emph{demand} of $i$ be $D_i=\arg\min_{j:p_j<1, j\in S_i}{p_j}$. This is the subset of $S_i$ for which the price of each item is minimal and smaller than $1$. 
\item Each player $i \in N_r$ selects uniformly at random an item $j_i\in D_i$ and reports its index.
\item\label{step-order} Go over the players in $N_r$ in a fixed arbitrary order. If item $j_i$ was not yet allocated in this round, player $i$ receives it and the price $p_{j_i}$ is increased by $\delta$. In this case we say that player $i$ is committed to item $j_i$. A player $i'$ that was committed to $j_i$ in the previous round (if such exists) now becomes uncommitted. 
\item Let $N_{r+1}$ be the set of uncommitted players at the end of round $r$. 
\end{enumerate}
\end{enumerate}

Our algorithm is very similar to the classical auction algorithms except for two seemingly small changes. However, quite surprisingly, these changes allow us to substantially reduce the communication cost. The first change is to ask all the players to report an item of their demand set simultaneously (instead of sequentially). This change alone is not enough as in the worst case many players might report the same item and hence the number of rounds might still be $\Omega(\frac{n}{\delta})$. Hence we ask each player to report a \emph{random} item of his demand instead. 
 

\begin{theorem} \label{thm-auction-alg}
After $O(\frac {\log n} {\delta^2})$ rounds the algorithm above provides an approximation ratio of $(1+\delta)$. 
\end{theorem}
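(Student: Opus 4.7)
The plan is to split the analysis into a static approximation argument and a dynamic convergence argument. Call a player $i$ \emph{frustrated} at some round if $i$ is currently uncommitted but $D_i \neq \emptyset$. First I would prove a structural lemma: if $F$ players are frustrated at the end of some round $r$ and $M$ is the current matching, then $|M| \geq (\mathrm{OPT}-F)/(1+\delta)$. The argument is the standard $\epsilon$-auction welfare calculation. Three facts drive it: (a) every committed player $i$ with current item $a_i$ satisfies $p_{a_i} \leq \min_{j \in S_i} p_j + \delta$, because $i$ bid on a minimum-price item at its last bid and prices are non-decreasing; (b) every non-frustrated uncommitted player has all items in $S_i$ at price exactly $1$; (c) every item with positive price is currently matched. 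Writing $u_i = v_i(a_i) - p_{a_i}$ for each player's utility (with $v_i(\bot)=p_\bot=0$), (a) and (b) give the $\delta$-equilibrium inequalities $u_i \geq v_i(j^*_i) - p_{j^*_i} - \delta$ for committed players and $u_i \geq v_i(j^*_i) - p_{j^*_i}$ for non-frustrated uncommitted ones, where $j^*_i$ is $i$'s partner in a fixed optimal matching $M^*$. Summing over all players, combining $\sum_i u_i = |M| - \sum_j p_j$ with $\sum_{j^* \in M^*} p_{j^*} \leq \sum_j p_j$ (using (c) and that $M^*$ is a matching), and losing at most $F$ for frustrated players, yields $(1+\delta)|M| \geq \mathrm{OPT} - F$.

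Given this structural lemma, it suffices to show that $\mathbb{E}[F_T] \leq \delta\cdot \mathrm{OPT}$ after $T = O(\log n / \delta^2)$ rounds. I would track the quantity $p^\star = \min_{i\ \text{frustrated}} \min_{j \in S_i} p_j$, the lowest demand price among frustrated players. This takes one of at most $1/\delta$ values in $\{0,\delta,\ldots,1-\delta\}$ and is non-decreasing over time. At a given level $p^\star$, every frustrated player with demand at this level bids uniformly at random among items with this price; by a balls-into-bins argument, in each round a constant fraction of these players either successfully commit (reducing $F$) or lose a collision, in which case the contested item's price rises by $\delta$ and it leaves the level. A Chernoff concentration argument shows that after $O(\log n / \delta)$ rounds at the same level $p^\star$, either the number of frustrated players at that level drops below $\delta n /\log n$, or the level has advanced by $\delta$. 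Iterating across the at most $1/\delta$ possible levels gives the total round count $O(\log n / \delta^2)$.

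The main technical obstacle is the interaction between random collisions and the displacements of previously committed players, since a displaced player becomes a new uncommitted (possibly frustrated) player and could in principle stall progress. The resolution is amortization: every displacement is paired with a $\delta$ price increase on the displaced item, so the total number of displacements across the entire run is bounded by $n/\delta$. Hence displacements contribute at most $O(1/\delta)$ newly-frustrated players per round on average, which is dominated by the constant-fraction progress obtained from the random bidding once fed through the Chernoff bound stated in the preliminaries. This amortization, together with the concentration inside each of the $1/\delta$ phases, yields the claimed $(1+\delta)$-approximation in expectation in $O(\log n / \delta^2)$ rounds.
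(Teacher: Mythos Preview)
Your structural lemma is correct and is essentially the same $\delta$-approximate first welfare theorem that the paper proves in its second lemma. The divergence from the paper is in the convergence argument, and there the proposal has a genuine gap.

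The claim that $p^\star = \min_{i\ \text{frustrated}} \min_{j\in S_i} p_j$ is non-decreasing is false. Prices are monotone, but the set of frustrated players is not: a displacement can inject a newly uncommitted player whose neighbor set contains a cheap item that no other frustrated player could see. Concretely, take items $a,b,c$, players with $S_1=\{a\}$, $S_2=\{c\}$, $S_3=\{a,b\}$. In round~1 with order $3,2,1$ and player~3 randomly choosing $a$, we end with player~3 on $a$ (price $\delta$), player~2 on $c$ (price $\delta$), player~1 frustrated with $D_1=\{a\}$, and $p^\star=\delta$. In round~2 player~1 takes $a$, displacing player~3; now the unique frustrated player is~3 with $D_3=\{b\}$ at price $0$, so $p^\star=0<\delta$. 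Since $p^\star$ can retreat, the ``at most $1/\delta$ levels'' bound does not control the number of phases, and the $O(\log n/\delta)$-rounds-per-phase argument no longer composes to $O(\log n/\delta^2)$.

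Your amortization paragraph does not close this hole. You correctly bound total displacements by $n/\delta$, but the conclusion ``at most $O(1/\delta)$ newly-frustrated players per round on average'' does not follow: over $T=O(\log n/\delta^2)$ rounds the average is $\Theta(n\delta/\log n)$, which can swamp the constant-fraction progress once the frustrated set is small, and in any case an average bound does not prevent $p^\star$ from oscillating many times.

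The paper sidesteps the whole phase structure. Its convergence argument tracks two global budgets that are monotone by construction: the total price $\sum_j p_j$ (bounded by $n'$) and the total number of ``demand-halving'' events for players in $N'$ (bounded by $n'\log n/\delta$, since each $D_i^{p}$ can halve at most $\log n$ times for each of the $1/\delta$ price levels). In any round with $t$ unsatisfied players, either the expected price increase is $\Omega(t\delta)$ or at least $t/2$ players experience a halving; with $t\ge \delta n'$ this exhausts one of the two budgets within $O(\log n/\delta^2)$ rounds in expectation. Displacements are irrelevant to either budget, which is exactly what your phase argument lacks.
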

\begin{proof}
Fix some optimal solution $(o_1,...,o_n)$ (every player receives at most one item). Let $N'$, $|N'|=n'$, be the set of players that receive an item in the optimal solution. 
\begin{definition}
A player is called \emph{satisfied} if he is either allocated an item or $D_i=\emptyset$.
\end{definition}

Let $END$ be the random variable that denotes the number of rounds until the first time that $(1-\delta)n'$ players in $N'$ are satisfied. We will prove the following lemma:

\begin{lemma}
$E[END]\leq \frac {4\log n} {\delta^2}$.
\end{lemma}
\begin{proof}
The heart of the proof is the definition of two budgets: one for demand halving actions for players in $N'$ and one for price increments. We show that in expectation after at most $\frac {4\log n} {\delta^2}$ rounds at least one of these budgets is exhausted and hence the number of unsatisfied players in $N'$ is at most $\delta n'$. 
 
Consider the demand set $D_i$ of some player $i$ at some round $r$. Observe that all items in $D_i$ has the same price $p_{D_i}$. Let $D^{p}_i=S_i\cap \{ j|p_j=p \}$.
We will use the following claim:
\begin{claim}
Consider some round $r$ and suppose that at least $t$ players are unsatisfied in the beginning of that round. Then, either the expected increase in $\Sigma_j p_j$ is at least $t \cdot \frac {\delta} {4}$ or for at least half of the unsatisfied players it holds that $D^{p_{D_i}}_i$ has shrunk by at least a factor of $2$.
\end{claim}
\begin{proof}
Consider some player $i$ that is not satisfied. When it is $i$'s turn to be considered in Step \ref{step-order} either at least half the items in $D_i$ were taken by previous players in the order or not. If at least half the items in $D_i$ were taken by previous players then $D^{p_{D_i}}_i$ has shrunk by a factor of at least $2$. Otherwise, since player $i$ selects $j_i$ at random from $D_i$, with probability of at least $\frac 1 2$ we have that $j_i$ was not taken by any previous player. In this case $i$ the price of $p_{j_i}$ is increased by $\delta$ so the expected increase of some item due to $i$ is $\frac \delta 2$. This implies that either for at least $t/2$ players $D^{p_{D_i}}_i$ has shrunk by at least a factor of $2$ or at least $t/2$ players caused an expected increase of $\delta/2$ in the price of some item. The later implies by linearity of expectation an expected total increase of $t \cdot \frac {\delta} {4}$.
\end{proof}

Now, notice that the price of each item can be increased at most $\frac 1 \delta$ times (the price increases in increments of $\delta$ and no player demands an item which has a value of $1$). Since an item that was allocated stays allocated and at most $n'$ items can be allocated, we have that the maximal number of increments that the algorithm can make is $\frac {n'} \delta$.

In addition, there are $n$ items and the price of an item can only increase, each $D^p_i$ can be shrunk by a factor of $2$ at most $\log n$ time. As previously argued, $p$ can get only $\frac 1 \delta$ different values. The total number of shrinkage steps with respect to players in $N'$ is therefore $\frac {n'\log n} \delta$.

To complete the proof recall that at every round prior to $END$ at least $\delta n'$ players in $N'$ are unsatisfied. By the claim in each round either the expected number of increments is $\frac {\delta n'} 4$ or the expected number of shrinkage steps with respect to players in $N'$ is $\frac {\delta n'}{2}$. In any case, after $\frac {4\log n}{\delta^2}$ rounds we expect that there are no more increments or shrinkage steps with respect to players in $N'$ to make.\end{proof}

\begin{lemma}
If at least $(1-\delta)n'$ of the players in $N'$ are satisfied then the approximation of the algorithm is $1-2\delta$.
\end{lemma}
\begin{proof}
We use a variant of the first welfare theorem to prove the lemma. Consider a player $i$ that has received an item $j_i$. The player receives an item that maximizes his demand (up to $\delta$), and thus the profit from $j_i$ is at least the profit from the item $o_i$ he got in the optimal solution (up to $\delta$, we allow $o_i=\emptyset$). We therefore have: $v_i(j_i)-p_{j_i}\geq v_i(o_i)-p_{o_i}-\delta$. For each satisfied player $i$ that did not receive any items we have that $0\geq v_i(o_i)-p_{o_i}$. Denote by $N_s$ the set of satisfied players. Summing over all satisfied players we get:

\begin{align*}
\sum_{i \in N_s}(v_i(j_i)-p_{j_i}) &\geq \sum_{\text{$i\in N_s$ is allocated}}(v_i(o_i)-p_{o_i} -\delta) ~~+\sum_{\text{$i\in N_s$ is unallocated}}(v_i(o_i)-p_{o_i})\\
&= \sum_{i \in N_s} (v_i(o_i)-p_{o_i}) - \delta n' \\
ALG - \sum_{j\in N}p_j &\geq OPT - \delta n' - \sum_{i\in N_s}p_{o_i} - n'\delta\\
ALG &\geq OPT - 2n'\delta = (1-2\delta)n'
\end{align*}
where in the third transition we used the facts that items that are unallocated by the algorithm have a price of $0$ and that $N' \cap N_s \geq (1-\delta)N'$.
\end{proof}
\end{proof}

It is worth noting a different version of the auction algorithm which was discussed in \cite{DGS86}. In this version at every round each player reports its entire demand set (simultaneously with the other players), then a minimal set of over demanded items is computed and only their prices are increased. While the number of rounds for this algorithm might be small the communication cost of each round can be linear in $n$.

\subsection{A $k$-Round Algorithm for Matching}\label{subsec-k-round-matching}

Fix some optimal solution $(o_1,...,o_n)$ (every player receives at most one item). Let $N'$ be the set of players that receive a nonempty bundle in the optimal solution. The following algorithm achieves an approximation ratio of $O(n^{\frac 1 {k+1}})$ in $k$ rounds.

\subsubsection*{The Algorithm}
\begin{enumerate}
\item Let $N_1=V_1$ and $U_1=V_2$.
\item In every round $r=1,\ldots,k$:
\begin{enumerate}
\item Each player $i$ selects uniformly at random an item $j_i\in U_i$ that he demands.
\item\label{step-order-k} Go over the players in $N_r$ in a fixed arbitrary order. Player $i$ receives $j_i$ if this item was not allocated yet.
\item Let $N_{r+1} \subseteq N_r$ be the set of players that were not allocated items at round $r$ or before.
\item Let $U_{r+1} \subseteq U_r$ be the set of items that were not allocated at round $r$ or before.
\end{enumerate}
\end{enumerate}

\begin{theorem}
For every $k\leq \log n$, the approximation ratio of the algorithm is $O(n^{\frac 1 {k+1}})$. In particular, when $k=1$ the approximation ratio is $O(\sqrt n)$ and when $k=O(\log n)$ the approximation ratio is $O(1)$.
\end{theorem}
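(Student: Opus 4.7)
The plan is to track a potential counting the OPT pairs that remain alive through the rounds. Fix an optimal matching $(o_1,\ldots,o_n)$, let $N'$ be its matched players, and define
\[
u_r \;:=\; \bigl|\{\, i \in N' \,:\, i \in N_r \text{ and } o_i \in U_r \,\}\bigr|,
\]
the number of OPT pairs $(i,o_i)$ that have both endpoints still unmatched at the start of round $r$. Each algorithm match in round $r$ kills at most two alive pairs (the pair whose player is matched, plus the pair whose OPT item is allocated), so $\mathrm{ALG} \ge (n' - u_{k+1})/2$. Setting $T := n^{1/(k+1)}$, the target is the per-round inequality
\[
\mathbb{E}[u_{r+1}\mid \text{state at round }r] \;\le\; u_r\bigl(1 - 1/(cT)\bigr)
\]
for an absolute constant $c$. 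Iterating $k$ times and using $1-(1-x)^k \ge \min(1/2,\, kx/2)$ for $x\in[0,1]$ then gives $\mathbb{E}[\mathrm{ALG}] = \Omega(n'/T)$, matching the claimed approximation ratio.

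To prove the per-round inequality, let $d_i^r := |S_i \cap U_r|$ be the available demand of an alive player and split the alive indices into \emph{light} ($d_i^r \le T$) and \emph{heavy} ($d_i^r > T$). Since player $i$ alone picks $o_i$ with probability $1/d_i^r$, and the OPT items $o_i$ are distinct so their ``picked'' events contribute to distinct matches,
\[
\mathbb{E}[u_r - u_{r+1}] \;\ge\; \sum_{i\text{ alive}} \Pr[o_i \text{ is picked}] \;\ge\; \sum_{i\text{ alive}} \frac{1}{d_i^r}.
\]
If at least half of the alive indices are light, this sum already exceeds $u_r/(2T)$, closing the per-round claim in that regime.

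The main obstacle is the heavy-dominated regime, in which every single-source contribution $1/d_i^r$ is strictly less than $1/T$. Here the plan is a birthday-style collision argument: for any two heavy alive indices $i,i'$, $\Pr[j_i=j_{i'}] = |S_i\cap S_{i'}\cap U_r|/(d_i^r d_{i'}^r) \le 1/T$, so ordering the heavy picks arbitrarily and union-bounding the chance that each pick coincides with an earlier one yields $\Omega(\min(|\mathrm{heavy}|,T))$ expected distinct picks. Each distinct pick forces a match that resolves some alive pair with non-trivial probability (since either the matched item is an $o_{i}$ of an alive pair or the matched player is an alive $i$), giving expected progress $\Omega(\min(u_r,T))$. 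When $u_r \le T^2$ this already meets the required $u_r/T$ rate. The delicate regime is $u_r > T^2$, where the birthday bound saturates at $\Omega(T)$ per round while one needs $\Omega(u_r/T) \gg T$; I expect to handle this by bucketing heavy indices by demand size into $O(\log T)$ geometric levels, applying the birthday argument within each bucket, and summing the per-bucket contributions. Combining the light, heavy, and bucketed-heavy estimates yields the per-round claim for all values of $u_r$, and iterating over $k$ rounds delivers the $O(n^{1/(k+1)})$ approximation.
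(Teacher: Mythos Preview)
Your potential-function approach via the per-round inequality
\[
\mathbb{E}[u_{r+1}\mid \text{state}] \le u_r\bigl(1 - 1/(cT)\bigr)
\]
does not hold in general, and the birthday/bucketing sketch cannot repair it. Here is a concrete instance at round $r=1$ that breaks it. Take any $D$ with $T\ll D\le n/2$. Let the alive players be $i=1,\dots,u_1$ with demand sets $S_i\cap U_1 = P\cup\{o_i\}$, where $P$ is a common block of $D$ items disjoint from $\{o_1,\dots,o_{u_1}\}$. Add $D$ ``blocker'' players, not in $N'$, each with a singleton demand $\{p\}$ for a distinct $p\in P$, and place all blockers before the alive players in the fixed order. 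In round $1$ every blocker is matched to its item, so $P$ is entirely consumed; an alive player $i$ is matched iff $j_i=o_i$, which happens with probability $1/(D{+}1)$, and $o_i$ is allocated iff the same event occurs. Hence $\mathbb{E}[u_1-u_2]=u_1/(D{+}1)$, which is $\ll u_1/(cT)$ once $D\gg T$. Your birthday step also fails here: a ``distinct pick'' $j_i\in P$ by a heavy alive player is allocated to a blocker, so neither the matched player nor the matched item is part of any alive OPT pair; the assertion ``each distinct pick resolves some alive pair with non-trivial probability'' is simply false. Bucketing by demand size does nothing, since all heavy players sit in a single bucket.

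The instance above is still an easy instance for the algorithm (in round $2$ every surviving alive player has demand $\{o_i\}$ and is matched), which tells you that the \emph{right} invariant is not a uniform multiplicative drop in $u_r$ but rather the geometric shrinkage of each unmatched player's demand. This is exactly what the paper does: call a player \emph{easy to satisfy} if in some round at least a $1/T$ fraction of its current demand survives the earlier players in the order; such a player is matched with probability $\ge 1/T$. For a player that is never easy to satisfy, its available demand shrinks by a factor $\ge T$ every round it stays unmatched, so after $k$ rounds either $D_{k+1,i}=\emptyset$ (hence $o_i$ has been allocated), or $|D_{1,i}|\ge T^k=n^{k/(k+1)}$ items have been allocated from within $S_i$. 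Either way you recover the $O(n^{1/(k+1)})$ bound. Reworking your argument along these lines---tracking per-player demand shrinkage rather than a global $u_r$---will close the gap.
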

\begin{proof}
Consider a run of the algorithm. Let $D_{r,i}\subseteq U_r$ be the set of items that player $i$ demands and are still available immediately before round $r$ starts. Let $X_{r,i}$ be the set of items that where allocated to other players before $i$'s turn in step (\ref{step-order-k}). Player $i$ is said to be \emph{easy to satisfy} if in some round $r$ we have that $D_{r,i}- X_{r,i}\geq \frac {D_{r,i}} {n^{\frac 1 {k+1}}}$. Let $S$ be the event that at least half of the players in $N'$ are easy to satisfy. We will show that $E[ALG|S]=O(\frac {OPT} {n^{\frac 1 {k+1}}})$ and that $E[ALG|\bar S]=O(\frac {OPT} {n^{\frac 1 {k+1}}})$, where $ALG$ is the random variable that denotes the value of the solution that the algorithm outputs. Together this implies that $E[ALG]=O(\frac {OPT} {n^{\frac 1 {k+1}}})$. Each one of the next two lemmas handles one of those cases.

\begin{lemma}
$E[ALG|S]=O(\frac {OPT} {n^{\frac 1 {k+1}}})$.
\end{lemma}
\begin{proof}
Let $C_{i,r}$ be the random variable that denotes the probability that player $i$ is allocated an item at round $r$. Observe that if player $i$ is easy to satisfy then for some round $r$ we have that $E[C_{i,r}]\geq \frac {1} {n^{\frac 1 {k+1}}}$. Let $P$ denote the set of easy to satisfy players. The expected number of easy to satisfy players that are allocated an item is at least $E[\Sigma_{i\in P} C_{i,r}]=\Sigma_{i\in P} E[C_{i,r}]\geq |P|\cdot \frac {1} {n^{\frac 1 {k+1}}} \geq \frac {n'} 2 \cdot \frac {1} {n^{\frac 1 {k+1}}}$. The required approximation ratio follows since the value of the optimal solution is $n'$.
\end{proof}

\begin{lemma}
$E[ALG|\bar S]=O(\frac {OPT} {n^{\frac 1 {k+1}}})$.
\end{lemma}
\begin{proof}
Consider a player $i$ that is not easy to satisfy. Observe the for every such player $i$ and round $r$, if $i \in N_r \cap N_{r+1}$ then $|D_{r,i}| \geq n^{\frac{1}{k+1}}| D_{r+1,i}|$. This is true since if such player $i$ was not allocated any items at round $r$ then the set of available items that he demands shrinks. Therefore, for every player $i$ that was not allocated anything at round $k$, we have that $|D_{k+1,i}| \leq n^{\frac{1}{k+1}}$. If there exists a player $i$ such that $|D_{k+1,i}| >0$ then it has to be the case that initially $|D_{1,i}|\geq n^{\frac k {k+1}}$. This implies that at least $n^{\frac k {k+1}}-n^{\frac{1}{k+1}}$ of the items were allocated and hence the approximation ratio is $O(n^{\frac 1 {k+1}})$ (there are $m$ items, so the value of the optimal solution is at most $m$). In any other case for every such player $i$ that was not allocated any bundle we have that $D_{k+1,i}=\emptyset$. In particular, the item $o_i$ that he receives in the optimal solution was allocated. Since there are at least $\frac {n'} 2$ such players, this implies that at least $\frac {n'} 2$ items were allocated and proves the claimed approximation bound in this case as well. 
\end{proof}

\end{proof}

\section{A Lower Bound for Subadditive Combinatorial Auctions}\label{sec-xos-lb}
We now move to discuss combinatorial auctions with subadditive bidders. In particular, in this section we prove our most technical result:
\begin{theorem}\label{thm-xos-randomized-lb}
No randomized simultaneous protocol for combinatorial auctions with subadditive bidders where each bidder sends sub-exponentially many bits can approximate the social welfare to within a factor of $m^{\frac 1 4 -\epsilon}$, for every constant $\epsilon>0$.
\end{theorem}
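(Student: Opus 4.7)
The plan is to prove the lower bound via Yao's minimax principle together with a reduction from a multi-party simultaneous communication problem, in the same spirit as the matching lower bound of Section~\ref{subsec-matching-rand} but with a considerably more intricate valuation construction.

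\textbf{Setup via Yao.} First I would invoke Yao's minimax principle to reduce the randomized lower bound to exhibiting two distributions $\mathcal{D}_{\mathrm{YES}}$ and $\mathcal{D}_{\mathrm{NO}}$ on $n$-tuples of XOS valuations over $m$ items such that (i) for every bidder $i$ the marginal distribution of $v_i$ is identical under $\mathcal{D}_{\mathrm{YES}}$ and $\mathcal{D}_{\mathrm{NO}}$, so no single bidder can tell the two worlds apart from his own input; (ii) the optimal welfare is $\Omega(m)$ under $\mathcal{D}_{\mathrm{YES}}$ but only $\tilde O(m^{3/4+\epsilon})$ under $\mathcal{D}_{\mathrm{NO}}$, giving the desired welfare ratio of $m^{1/4-\epsilon}$; and (iii) no deterministic simultaneous protocol in which each bidder sends sub-exponentially many bits can distinguish the two distributions with non-negligible advantage.

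\textbf{The XOS construction.} I would take $n \approx \sqrt m$ bidders and design each $v_i$ as an XOS function built from additive clauses, with a distinguished high-weight clause on a random ``target'' set $T_i \subseteq M$ of size about $\sqrt m$. The remaining clauses are chosen so that on any bundle $S$ whose intersection with $T_i$ is small, $v_i(S)$ grows much more slowly than $|S \cap T_i|$ does on the planted clause, while keeping the full valuation XOS. In $\mathcal{D}_{\mathrm{YES}}$ the $T_i$ are correlated to form a near-disjoint system -- a random planted matching between bidders and target sets -- so that assigning $T_i$ to bidder $i$ yields welfare $\Omega(n \cdot \sqrt m) = \Omega(m)$. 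In $\mathcal{D}_{\mathrm{NO}}$ the $T_i$ are drawn i.i.d.\ from the same marginals and therefore overlap heavily; a union-bound / Chernoff argument over all allocations (using the subadditive behavior of $v_i$ off the planted clause) shows that the best welfare achievable by any allocation is $\tilde O(m^{3/4+\epsilon})$ with high probability.

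\textbf{The communication lower bound.} Property (iii) is the technical core. Because the marginals agree under $\mathcal{D}_{\mathrm{YES}}$ and $\mathcal{D}_{\mathrm{NO}}$, each bidder's message has \emph{exactly} the same distribution in both worlds, so the only signal available to the coordinator lies in the joint correlations of the messages. I would reduce from a ``hidden matching''/multi-party unique-disjointness problem in the simultaneous number-in-hand model, for which sub-exponential per-player lower bounds are available: a hidden-set instance is embedded inside the clause structure of each $v_i$ so that a distinguishing simultaneous protocol for welfare immediately yields a simultaneous protocol for the underlying hard problem. As an alternative, one can argue directly by bounding the mutual information between a simultaneous transcript of total length $n \cdot 2^{o(m^{\epsilon'})}$ and the planted structure and showing it remains $o(1)$.

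\textbf{Main obstacle.} The main difficulty is combining all three properties in a single construction: the XOS clauses must be rich enough to give the planted bundle high value, restrictive enough that NO-case welfare is forced down to $\tilde O(m^{3/4+\epsilon})$, and symmetric enough that a single bidder's view of $v_i$ reveals essentially nothing about the planted matching. The $m^{1/4-\epsilon}$ exponent arises from balancing the target-set size $\sqrt m$ against the NO-case welfare bound $m^{3/4+\epsilon}$; verifying XOS-ness of the constructed valuations explicitly (by exhibiting the full family of clauses) and pushing the reduction through while keeping the marginals exactly identical is where the bulk of the technical effort will go.
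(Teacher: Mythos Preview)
Your proposal diverges substantially from the paper's proof, and in its current form it has a real gap.

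\textbf{What the paper actually does.} The paper does \emph{not} use a YES/NO distinguishing argument. It defines a \emph{single} hard distribution $D$ (with $n=k^3$ players, $m=k^3+k^4$ items, so $n\approx m^{3/4}$, not $\sqrt m$). A random ``center'' $C$ of size $k^3$ is chosen; each player $i$ gets a random ``petal'' $P_i\subseteq C^c$ of size $k^2$; the valuation $v_i(S)=\max_{T\in F_i}|T\cap S|$ where $F_i$ contains one hidden special set $T_i\subseteq P_i$ of size $k$ and $t-1=e^{2k^\epsilon}-1$ decoy sets drawn from $C\cup P_i$. The optimum is $\Theta(k^4)$ (allocate each $T_i$). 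The lower bound is proved by reducing to a bespoke two-player ``set seeking'' problem (keeper holds $F$, seeker holds $P$, keeper sends $l$ bits, seeker outputs $A\subseteq P$ maximizing $\max_{T\in F}|A\cap T|/|A|$), whose hardness is established \emph{from scratch} via Chernoff-type counting over all $2^l$ possible messages. If any player were ``good'' (i.e.\ the protocol recovers a nontrivial fraction of $T_i$ inside $A_i$), that player together with the remaining $n-1$ players simulated by the seeker would solve set seeking; since set seeking is hard for $l\le t^\epsilon$, no player is good and the welfare is only $k^{3+O(\epsilon)}$.

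\textbf{The gap in your plan.} Your NO-case welfare bound does not hold with the parameters you suggest. With $n\approx\sqrt m$ i.i.d.\ target sets $T_i$ of size $\sqrt m$, two sets overlap in expectation in $O(1)$ items, so the $T_i$ are still essentially disjoint and allocating $T_i$ to player $i$ already gives welfare $\Omega(m)$ in the NO world too; there is no $m^{3/4+\epsilon}$ ceiling. More fundamentally, ``identical marginals'' does \emph{not} by itself imply the coordinator is blind: in a simultaneous protocol the joint distribution of messages still differs across YES and NO, and the allocation is a function of the joint transcript, so the coordinator can in principle exploit those correlations. You acknowledge this, but then defer the actual lower bound to an unspecified reduction from ``hidden matching / multi-party unique-disjointness''; no off-the-shelf simultaneous NIH lower bound of the right shape is cited, and the paper in fact had to invent and analyze its own intermediate problem. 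The center/petal structure and the exponential family of decoy clauses are precisely what make a single player's view information-theoretically uninformative about $T_i$ even given $C\cup P_i$; your sketch has no analogue of this mechanism.
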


We will actually prove the lower bound using only XOS valuations, a subclass of subadditive valuations. We present a distribution over the inputs and show that any deterministic algorithm in which each bidder sends sub-exponentially many 
bits cannot approximate the expected social welfare to within a factor of $m^{1/4-\epsilon}$ for this 
specific distribution (where expectation is taken over the distribution). This implies, by Yao's principle, that there is no randomized algorithm that achieves an approximation ratio of $m^{1/4-\epsilon}$ for every constant $\epsilon>0$. 


Our hard distribution which we denote by $D$ is the following:
\begin{itemize}
\item $n=k^3$ players, $m=k^3+k^4$ items. 
\item Each player $i$ gets a family $F_i$ of size  $t=e^{2k^\epsilon}$ of sets of $k$ items. The valuation of player $i$ is defined to be: $v_i(S) = \max_{T \in F_i} |T \cap S|$. Observe that this is an XOS valuation where each set $S \in F_i$ defines a clause in which all items in $S$ have value $1$ and the rest of the items have value $0$.
\item The families $F_i$ are chosen, in a correlated way, as follows: first, a center $C$ of size $k^3$ is chosen at random; then for each player $i$ a petal $P_i$ of size $k^2$ is  chosen at random from the complement of $C$. Now, for  each player $i$, the family $F_i$  is chosen as follows: one set $T_i$ of size $k$ is chosen at random from $P_i$ and $t-1$ sets of size $k$ are chosen at random from $C \cup P_i$.
\item The players do not know
$C$ nor do they know which of the sets was chosen from $P_i$.  We may assume without loss of generality that each player $i$ knows
the set $C \cup P_i$.
\end{itemize}

Each player sends, deterministically, simultaneously with the others, at most $l$ bits of communication just based on his input $F_i$.  A referee that sees all the messages chooses an allocation $A_1, \ldots, A_n$ of the $m$ items to the $n$ players (only based on the messages), with the $A_i$'s being disjoint sets. We assume without loss of generality that all items are allocated. 

In order to prove that no deterministic algorithm can obtain a good approximation for instances drawn from $D$, we show that to get a good approximation we must identify $T_i$ for almost all of the players. This would have been easy had each player could have distinguished between the items in $C$ and $P_i$, but this information is missing. We show that for the central planner to successfully identify even a single $T_i$ player $i$ has to send exponentially many bits. Formally, this is done by reducing the two-player ``set seeking'' problem that we define below to the multi-player combinatorial auction problem. The main technical challenge is to prove the hardness of the set seeking problem.


The next couple of lemmas together gives us Theorem \ref{thm-xos-randomized-lb}. The proof of the first lemma is easy, but the second lemma is the heart of the lower bound.


\begin{lemma}\label{lemma-xos-lb-opt}
With very high probability, over this distribution $(D)$, there exists an allocation with social welfare
$\sum_i v_i(A_i) = \Theta(k^4)$.
\end{lemma}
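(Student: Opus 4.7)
The plan is to exhibit an explicit allocation $(A_1,\ldots,A_n)$ that attains welfare $\Omega(k^4)$, and to pair it with the trivial upper bound $\sum_i v_i(A_i)\le nk=k^4$, which holds because every clause of $v_i$ contains exactly $k$ items of value $1$. First I would observe that by symmetry each set $T_i$ is marginally a uniform random size-$k$ subset of the $k^4$-element set $M\setminus C$, and that the $T_i$'s are mutually independent: the petals $P_i$ are drawn independently across $i$ and $T_i$ depends only on $P_i$, so a uniform size-$k$ subset of a uniform size-$k^2$ subset of $M\setminus C$ is itself a uniform size-$k$ subset of $M\setminus C$.

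The candidate allocation I would use is $A_i := T_i\setminus\bigcup_{j\neq i} T_j$, with any remaining items handed arbitrarily to player $1$ (monotonicity ensures this can only help). The $A_i$'s are disjoint by construction, and $v_i(A_i)\ge |A_i\cap T_i|=|A_i|$ because $T_i\in F_i$. For each $x\in M\setminus C$ let $Y_x$ be the indicator that $x$ lies in exactly one of the $T_i$'s, so that $\sum_x Y_x=\sum_i |A_i|$. Since $\Pr(x\in T_i)=1/k^3$ independently over $i$, the count $N_x=|\{i:x\in T_i\}|$ is $\mathrm{Binom}(k^3,1/k^3)$, giving $\Pr(Y_x=1)=(1-1/k^3)^{k^3-1}\to 1/e$ and hence $\mathbb{E}[\sum_x Y_x]=(1/e+o(1))k^4$.

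Finally, to upgrade from expectation to ``with very high probability'' I would apply McDiarmid's bounded-difference inequality to $F(T_1,\ldots,T_n):=\sum_x Y_x$, viewed as a function of the independent inputs $T_1,\ldots,T_n$. Replacing a single $T_i$ by some $T_i'$ flips the indicators $Y_x$ only for $x\in T_i\cup T_i'$, so the bounded-difference constants are $c_i=2k$ and $\sum_i c_i^2 = 4k^5$. McDiarmid then yields
\[
\Pr\Bigl(\textstyle\sum_x Y_x \le \tfrac{k^4}{3e}\Bigr)\le \exp(-\Omega(k^3)),
\]
which, combined with the trivial upper bound, gives $\sum_i v_i(A_i) = \Theta(k^4)$ with very high probability. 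The only nontrivial step I anticipate is checking cleanly that the two-step petal-then-set sampling does preserve independence and uniformity of the $T_i$'s; after that, everything reduces to standard first-moment and bounded-difference bookkeeping.
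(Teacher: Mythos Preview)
Your argument is correct and follows essentially the same line as the paper's proof: both allocate (a portion of) $T_i$ to player $i$, reduce the welfare to counting how many of the $k^4$ items outside $C$ are covered by the $T_i$'s, and then argue concentration around the $\Theta(k^4)$ mean. The paper counts $|\bigcup_i T_i|$ (items in \emph{at least one} $T_i$, giving constant $\approx 1-1/e$) whereas you count items in \emph{exactly one} $T_i$ (constant $\approx 1/e$); both are of course $\Theta(k^4)$.

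The only noteworthy difference is the concentration step. The paper asserts ``items are chosen independently'' and invokes Chernoff directly on the item indicators, which is a bit loose since each $T_i$ has fixed size $k$ and hence the events $\{x\in\bigcup_i T_i\}$ across items $x$ are not independent (they are, however, negatively associated, so Chernoff-type bounds do go through with a little extra care). Your use of McDiarmid with respect to the independent \emph{inputs} $T_1,\ldots,T_n$ (conditioning on $C$) sidesteps this issue cleanly and is the more rigorous of the two; the bounded-difference constants $c_i=2k$ and the resulting $\exp(-\Omega(k^3))$ tail are computed correctly. One small remark: your independence claim for the $T_i$'s should be stated as holding \emph{conditional on} $C$ (which you implicitly do by working inside $M\setminus C$); since the high-probability bound then holds for every fixed $C$, it holds unconditionally as well.
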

\begin{proof}
Consider allocating each player $i$ the set $T_i$ (an item $j$ that is in multiple $T_i$'s will be allocated to some player $i$ such that $j \in T_i$). The social welfare of this allocation is 
$|\cup_i T_i|$. We show that with high probability the social welfare is $\Theta(k^4)$. This follows since each $T_i$ is of size $k$ and is practically selected uniformly at random from a set of size $k^4$. Thus, the probability that item $j$ is in $T_i$ is $1/k^3$. Since there are $k^3$ players, an item is in $\cup_i T_i$ with constant probability. Next, since items are chosen independently we can use Chernoff bounds to get that with high probability a constant fraction of all items is in $\cup_i T_i$.
\end{proof}

\begin{lemma}\label{lemma-xos-det-lb}
Every deterministic protocol for the combinatorial auction problem with $l<t^\epsilon$ produces an allocation with $\sum_i v_i(A_i) = k^{3+O(\epsilon)}$, in expectation (over the distribution).
\end{lemma}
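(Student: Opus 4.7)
The plan is to prove the lemma by (i) decomposing the welfare into a ``center'' part that is bounded trivially and a ``petal'' part that is dominated by $\sum_i |T_i \cap A_i|$, and then (ii) bounding this latter sum by reducing to a two-player ``set-seeking'' problem whose hardness is the main technical content.

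For the welfare decomposition, I would first observe that items outside $C \cup P_i$ are in no clause of player $i$ and hence contribute nothing to $v_i(A_i)$. Using the subadditivity of XOS valuations, $v_i(A_i) \leq v_i(A_i \cap C) + v_i(A_i \cap P_i)$. The center part is bounded trivially: $\sum_i v_i(A_i \cap C) \leq \sum_i |A_i \cap C| \leq |C| = k^3$, since the $A_i$'s are disjoint and each clause contributes at most its intersection size with $A_i$. For the petal part, note that each non-$T_i$ clause is a uniformly random size-$k$ subset of $C \cup P_i$ (with $|C \cup P_i| \approx k^3$ and $|P_i| = k^2$), so its expected intersection with $P_i$ is $O(1)$. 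A Chernoff bound (using the $(1/e^{\delta})^{\mu}$ tail from the preliminaries) combined with a union bound over the $t = e^{2k^\epsilon}$ clauses and $n = k^3$ players shows that with high probability every non-$T_i$ clause satisfies $|T \cap P_i| = O(k^\epsilon)$. Hence $v_i(A_i \cap P_i) \leq |T_i \cap A_i| + O(k^\epsilon)$, and it remains to establish $E\bigl[\sum_i |T_i \cap A_i|\bigr] = k^{3 + O(\epsilon)}$.

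To bound this sum, I would pass to a two-player problem via averaging. If the expectation exceeded $k^{3+c}$ for some constant $c > 0$, then averaging over the $n = k^3$ players would give some $i^\star$ with $E[|T_{i^\star} \cap A_{i^\star}|] > k^c$, and one could freeze the inputs (and hence messages, since the protocol is deterministic) of all other players to values preserving the conditional expectation. This yields a two-player protocol in which Alice (playing $i^\star$) sends her $l$-bit message computed from $F_{i^\star}$, while Bob (simulating the referee with all the frozen data, and therefore able to recover $C$ and $P_{i^\star}$) outputs the allocation $A_{i^\star}$. This is exactly the ``set-seeking'' problem alluded to before the lemma: Alice holds a family $F$ of $t$ size-$k$ subsets of $C \cup P$, one of which, $T$, is a uniformly random size-$k$ subset of $P$ while the rest are uniform size-$k$ subsets of $C \cup P$; Alice does not know the $C/P$ partition; Bob knows $C$ and $P$; the goal is for Bob to output $A$ maximizing $E[|A \cap T|]$.

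The main obstacle is to establish hardness of the set-seeking problem: with $l < t^\epsilon$ bits, no deterministic protocol can make $E[|A \cap T|]$ substantially larger than random guessing. The conceptual reason is that Alice cannot single out $T$ within $F$: conditioned on the universe $C \cup P$ (all she knows about the structure), the marginal distribution of $T$ coincides, up to a lower-order perturbation, with the marginal of any of the other $t-1$ sets in $F$, so her message $m(F)$ treats the $t$ sets as essentially exchangeable. I would make this precise via a counting / posterior argument bounding, for each item $j \in P$, the probability $\Pr[j \in T \mid m(F) = m, C, P]$ by its prior $1/k$ times a factor of roughly $(2^l + t)/t$, which under $l < t^\epsilon$ remains $k^{O(\epsilon)}$. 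I expect the rigorous argument to proceed via a reduction to a disjointness-style two-player communication problem, in the spirit of the hidden-item reduction used in Section~\ref{subsec-matching-rand}; the crux is that any short message that helps Bob identify items in one specific set out of $t$ symmetric-looking sets can be parlayed, by a self-reduction, into a protocol that solves a disjointness-like task too cheaply. Plugging the resulting per-player bound $E[|T_i \cap A_i|] \leq k^{O(\epsilon)}$ back into the welfare decomposition and summing over players then gives $\sum_i v_i(A_i) \leq k^3 + k^{3+\epsilon} + k^3 \cdot k^{O(\epsilon)} = k^{3 + O(\epsilon)}$, as required.
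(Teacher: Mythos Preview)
Your high-level plan---decompose the welfare into a center part bounded by $|C|=k^3$ and a petal part dominated by $\sum_i|T_i\cap A_i|$, then reduce the latter to a two-player ``set-seeking'' problem---matches the paper's strategy, and your Chernoff argument that non-$T_i$ clauses intersect $P_i$ in only $O(k^\epsilon)$ items is correct (and in fact more explicit than the paper's terse Claim in Subsection~\ref{subsec-xos-reduction}).

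The genuine gap is in the reduction and in the form of the two-player problem. The problem you pose---Bob knows $C$ and $P$ and outputs $A$ to maximize $E[|A\cap T|]$---is trivial: since $T\subseteq P$, taking $A=P$ already gives $E[|A\cap T|]=k$ regardless of Alice's message. Consequently your asserted per-player bound $E[|T_i\cap A_i|]\le k^{O(\epsilon)}$ is simply false; a single player $i$ can be allocated all of $P_i$ (or all of $M\setminus C$) and achieve $|T_i\cap A_i|=k$. What prevents this from happening for \emph{all} players at once is the disjointness constraint $\sum_i|A_i|\le m$, and your averaging-then-freezing step throws away precisely this coupling. The paper's fix is to define player $i$ as \emph{good} when $E[|A_i\cap T_i|]\ge\max\{|A_i|/k^{1-\epsilon},\,k^\epsilon\}$: if no player is good, summing $|A_i|/k^{1-\epsilon}+k^\epsilon$ over $i$ and using $\sum_i|A_i|\le m\approx k^4$ yields the $k^{3+O(\epsilon)}$ bound; if some player is good, one obtains a set-seeking protocol for the \emph{normalized} objective $\max_{T'\in F}|A\cap T'|/|A|$ with $A\subseteq P$, $|A|\ge k$, and it is this normalized version that is shown hard in Lemma~\ref{lemma-set-seeking} via a direct Chernoff-plus-union-bound over the $2^l$ possible messages (not via a disjointness reduction---that technique appears only in the matching lower bound of Section~\ref{subsec-matching-rand}).

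Two smaller issues worth flagging: the frozen inputs of the other players determine $C$ but not $P_{i^\star}$, so Bob cannot ``recover $P_{i^\star}$'' as you claim; the paper instead hands the seeker $P$ as input and has the seeker \emph{simulate} the other $n-1$ players afresh from the correct conditional distribution. And your posterior factor $(2^l+t)/t$ is vacuous here, since $l<t^\epsilon$ still allows $2^l$ to be doubly exponential in $k$.
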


To prove Lemma \ref{lemma-xos-det-lb} we first define a two-player ``set seeking'' problem and show its hardness (Subsection \ref{subsec-set-seeking}). Next, we reduce the two-player set seeking problem to our multi-player combinatorial auction problem (Subsection \ref{subsec-xos-reduction}). 

\subsection {The Two Player ``Set Seeking'' Problem} \label{subsec-set-seeking}
The ``Set Seeking'' problem includes two players and $x=k^2+k^3$ items. One of the players plays the role of the keeper and gets as an input a family of $t=e^{2k^\epsilon}$ sets $F$ where all the sets are of size $k$. The other player plays the role of the seeker and gets a set $P$ of size $k^2$. In this problem, first the keeper sends a message of at most $l$ bits (advice). Next, based on this message the seeker outputs some set $A\subseteq P, |A|\geq k$. The goal is to maximize $\max_{T \in F} \dfrac{|A \cap T|}{|A|}$.  

We will analyze the performance of deterministic algorithms on a specific distribution $(D_2)$ for this problem which we now define. This distribution is based on choosing two sets, $F$ and $P$ which are chosen in correlation as follows:
\begin{enumerate}
\item A set $P$ of size $k^2$ is chosen uniformly at random from all items.
\item The set $F$ is constructed by choosing uniformly at random a special set $T_P$ of size $k$ from $P$ and additional $t-1$ sets of size $k$ from all items.
\end{enumerate}

\begin{lemma}\label{lemma-set-seeking}
If $l\leq t^{\epsilon}$ then there is no $k^{1-\epsilon}$-approximation for the set seeking problem.
\end{lemma}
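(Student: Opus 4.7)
The plan is to prove Lemma~\ref{lemma-set-seeking} in two phases: a concentration-based ``decoy-elimination'' argument that reduces the problem to recovering a non-trivial portion of $T_P$, and a communication-complexity lower bound patterned after the proof of Lemma~\ref{lemma-hidden-item-hardness}.

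For the first phase, each of the $t-1$ decoy sets $T\in F$ is a uniformly random $k$-subset of $[x]$ with $|P|/x = k^2/(k^2+k^3)=1/(k+1)$, so $E[|T\cap P|] = k/(k+1)<1$. Applying the Chernoff bound from the Preliminaries at threshold $k^\epsilon/2$ yields $\Pr[|T\cap P|\ge k^\epsilon/2]\le (2e/k^\epsilon)^{k^\epsilon/2}\le e^{-3k^\epsilon}$ for $k$ at least some function of $1/\epsilon$, and a union bound over the $t=e^{2k^\epsilon}$ decoys gives that with probability at least $1-e^{-k^\epsilon}$ no decoy $T$ satisfies $|T\cap P|\ge k^\epsilon/2$. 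Consequently, for any output $A\subseteq P$ with $|A|\ge k$, decoys contribute ratio at most $(k^\epsilon/2)/k = 1/(2k^{1-\epsilon})$, strictly below $1/k^{1-\epsilon}$, so any ratio $\ge 1/k^{1-\epsilon}$ must be achieved via $T=T_P$ and hence forces $|A\cap T_P|\ge|A|/k^{1-\epsilon}\ge k^\epsilon$. Combined with Markov's inequality applied to $1-\text{ratio}$, if the protocol has expected ratio $\ge 1/k^{1-\epsilon}$ then with probability $\Omega(1/k^{1-\epsilon})$ the seeker outputs $A$ with $|A\cap T_P|\ge k^\epsilon/2$, i.e.\ effectively recovers a substantial piece of $T_P$.

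For the second phase I would follow the boosting-and-reduction template of Lemma~\ref{lemma-hidden-item-hardness}. Since $D_2$ is invariant under permutations of $[x]$, random self-reducibility converts the distributional success into a public-coin randomized protocol with success probability $\Omega(1/k^{1-\epsilon})$ on every valid input. Parallel-repeating $O(k^{1-\epsilon}\log k)$ times yields a protocol of total length $O(l\cdot k^{1-\epsilon}\log k)$ in which the seeker outputs a list of $O(k^{1-\epsilon}\log k)$ candidate subsets of $P$, at least one of which contains $k^\epsilon/2$ elements of $T_P$ with high probability. Finally, a reduction to Razborov's set-disjointness lower bound over a universe of size comparable to $t$ would rule out $l\le t^\epsilon$.

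The main obstacle I anticipate is designing this reduction. Unlike the hidden-item setting, where the secret is a single element naturally aligned with the intersection element of disjointness, here $T_P$ is a $k$-set hidden among $t$ exchangeable decoys and only $k^\epsilon$ of its elements are recovered, so the reduction will likely need a direct-product encoding---several parallel disjointness instances embedded into the block structure of $(F,P)$---together with a short feedback message from the keeper to the seeker that identifies which candidate subset encodes the true intersection vector. Verifying that this embedding respects the specific marginal distributions of $D_2$ (decoys uniform in $\binom{[x]}{k}$, $T_P$ uniform in $\binom{P}{k}$) and carefully counting the total communication is the delicate step.
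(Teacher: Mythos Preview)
Your Phase~1 decoy-elimination argument is sound and is a reasonable preprocessing step, but Phase~2 has a genuine obstruction that makes the disjointness route unworkable in this setting.

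The reduction you sketch needs to embed a disjointness instance over a universe of size $\Theta(t)=e^{\Theta(k^\epsilon)}$ into the inputs of the keeper and the seeker. But the seeker's entire private input is the set $P$ of size $k^2$ from a universe of size $k^2+k^3$, which carries only $O(k^2\log k)$ bits---polynomial in $k$. There is simply no room in the seeker's input to hold one side of a disjointness instance of size $\Theta(t)$, and public randomness cannot substitute for private input entropy in a disjointness reduction. The same bottleneck kills the ``direct-product'' variant you mention: if you embed $r$ parallel disjointness instances each of universe $s$, the seeker must hold $\Omega(rs)$ bits, so $rs=\mathrm{poly}(k)\ll t$ and the resulting lower bound is far too weak. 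In the hidden-item problem this works because the target disjointness universe is $\Theta(k)$, commensurate with Bob's input; here the target universe is exponential in $k^\epsilon$ while the seeker's input remains polynomial.

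The paper's proof avoids this entirely by not reducing to any two-party problem. It argues directly: for a fixed message $m$ and fixed $P$, a uniformly random $k$-subset $S\subseteq P$ is $(m,P)$-compatible (i.e., $|A_m(P)\cap S|\ge |A_m(P)|/k^{1-\epsilon}$) with probability at most $2e^{-k^\epsilon}$ by Chernoff. Then, for a random family $F$, the probability that a fixed message $m$ is ``$F$-good'' (succeeds on a non-negligible fraction of $(T,P)$ pairs) is doubly-exponentially small, roughly $\bigl(e^{-\frac{1}{2}k^\epsilon}\bigr)^{e^{k^\epsilon}}$. A union bound over all $2^l\le 2^{t^\epsilon}$ messages then finishes the proof. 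The key point is that the huge message budget $l=t^\epsilon$ is absorbed by a union bound against a doubly-exponentially small per-message failure probability, something a disjointness reduction cannot replicate given the seeker's tiny input.
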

\begin{proof}
Fix a message $m$ and let $A_m(P)$ denote the set $A$ that the seeker returns when his input is $P$ and the keeper sends a message $m$.
\begin{definition}
Fix a message $m$ and a set $P$, $|P|=k^2$. A set $S$ is \emph{$(m,P)$-compatible} if $|A_m(P) \cap S| \geq \dfrac{|A_m(P)|}{k^{1-\epsilon}}$.
\end{definition}

\begin{claim}\label{claim-m-compatible}
Fix a message $m$ and a set $P$, $|P|=k^2$. The probability that a set $S$ which is chosen uniformly at random from $P$ is $(m,P)$-compatible is at most $2e^{-k^\epsilon}$.
\end{claim}
\begin{proof}
We fix a set $A_m(P)=A\subseteq P$ and compute the probability that the intersection of this set with a set $S$ of size $k$ chosen uniformly at random out of $k^2$ items will be greater than $\dfrac{|A|}{k^{1-\epsilon}}$. The probability of each element in $S$ to be in $A \cap S$ is exactly $\dfrac{|A|}{k^2}$. Thus we expect that $|A \cap S| \approx \dfrac{|A|}{k}$. We now use Chernoff bounds to make this precise. 

Consider constructing the following random set $T$: $k$ items are selected so that each item is chosen uniformly at random amount the $k^2$ items of $P$. $T$ is similar to the way $S$ is constructed, except that it possibly contains less than $k$ items as there is some positive chance that some item in $P$ will be selected twice. We conservatively assume that every item that was selected twice is in $A$. Thus, if we bound the probability that $|A\cap T|$ is too large, we also bound the probability that $|A\cap S|$ is too large.

We first bound the probability that more than $k^{\epsilon}$ items are selected at least twice to $T$. Since there are at most $k$ items in $T$, in the $k$'th item that we have the select, the probability that we will choose an already-selected item is at most $\frac {k-1} {k^2}\leq \frac 1 k$. By the Chernoff bounds, the probability that more than $k^{\epsilon}$ items are selected twice to $T$ is at most: $\frac{1}{e^{ {k^\epsilon} }}$.

To compute the expected intersection with $A$, assume $k$ independent variables, each variable $y_i$ is true with probability $\dfrac{|A|}{k^2}$ and false with probability $1-\dfrac{|A|}{k^2}$. Observe that $Y=\sum_i y_i$ is distributed exactly as $|T\cap A|$. The expectation of $Y$ is $\dfrac{|A|}{k}$. Now by Chernoff bounds we get that: $\Pr[Y> k^\epsilon \cdot \dfrac{|A|}{k} ] \leq \left (\dfrac{1}{e^{k^\epsilon}}\right )^{\dfrac{|A|}{k}} \leq  \dfrac{1}{e^{k^\epsilon}}$. The last transition holds by the assumption that $|A|\geq k$.

We have that with probability of at most $\dfrac{2}{e^{k^\epsilon}}$ we have that $|A\cap T|\leq 2e^{-k^\epsilon}$. By our discussion above, with at most the same probability we have that $|A\cap S|\leq 2e^{k^\epsilon}$.
\end{proof}

\begin{definition} \label{def-f-good}
Let $F$ be a family of sets of size $k$, $|F|=t=e^{2k^\epsilon}$. We say that a message $m$ is \emph{$F$-good} if $\Pr[|A_m(P)\cap T|\geq \dfrac {|A_m(P)|} {k^{1-\epsilon}}]\geq e^{-\frac 1 2 k^\epsilon}$, where $T$ is chosen uniformly at random from $F$ and $P$, $|P|=k^2$, contains $T$ and $k^2-k$ items chosen uniformly at random from the rest of the items.
\end{definition}

\begin{claim}
For every message $m$, the probability that $m$ is $F$-good is at most \[p'=e^{{(-\frac 1 2k^\epsilon)}^{(e^{k^\epsilon)}}}\]where the sets in $F$ are chosen uniformly at random.
\end{claim}
\begin{proof}
Fix a message $m$. Consider $F$ where the sets in $F$ are chosen uniformly at random. We first compute the probability that for a single $T \in F$ we have that $|A_m(P)\cap T|\geq \dfrac {A_m(P)} {k^{1-\epsilon}}$. Observe that every set $T \in F$ in this setting can be thought of as chosen uniformly at random from a fixed set $P$. Thus that probability is the same as the probability that $T$ is $(m,P)$-compatible, which is $e^{-k^\epsilon}$ by Claim \ref{claim-m-compatible}.

Now we would like to compute the probability that $m$ is $F$-good, that is the probability that there exist at least $e^{-\frac 1 2 k^\epsilon}\cdot |F| = e^{\frac 3 2 k^\epsilon}$ sets $T\in F$ such that  $|A_m(P)\cap T|\geq \dfrac {|A_m(P)|} {k^{1-\epsilon}}$. The expected number of such sets is $e^{-k^\epsilon} \cdot |F|=e^{k^\epsilon}$. By the Chernoff bounds ($\mu=e^{k^\epsilon}$, $\delta=e^{\frac 1 2 k^\epsilon}$) this probability is at most $p'$.
\end{proof}

We can now finish the proof. Choose $F$ at random. For every $m$ the probability that $m$ is $F$-good is at most $p'$. The message length is $l$, and the total number of messages is therefore at most $2^l$. Thus, by the union bound, the probability that there exists some message $m$ which is $F$-good is at most $2^l\cdot p'\leq 2^{t^\epsilon}  \cdot p' = 2^{e^{k^{\epsilon^\epsilon}}}  \cdot p'< (2e^{(-\frac 1 2k^\epsilon)})^{{(e^{k^\epsilon)}}} <e^{-e^{k^\epsilon}}$, where the transition before the last uses the fact that $\epsilon<1$. Hence, with probability at least $1-e^{-e^{k^\epsilon}}$ every message $m$ is not $F$-good for the randomly chosen $F$. This in turn implies that for a family $F$ and for every message $m$ the probability for $P$ and $T$ chosen as in Definition \ref{def-f-good} that $T$ is $(m,P)$-compatible is at most $e^{-\frac 1 2 k^\epsilon}$. Thus, with probability $1-e^{-e^{k^\epsilon}}-e^{-\frac 1 2 k^\epsilon}$ we have that $|A_m(P)\cap T|\leq \dfrac {|A_m(P)|} {k^{1-\epsilon}}$. This implies that with probability $1-e^{-e^{k^\epsilon}}-e^{-\frac 1 2 k^\epsilon}$ the approximation ratio is at most $k^{1-\epsilon}$. Even if with probability $e^{-e^{k^\epsilon}}+e^{-\frac 1 2 k^\epsilon}$ the approximation ratio is $1$, the expected approximation ratio is at most $\frac{1}{2}k^{1-\epsilon}$. 
\end{proof}

\subsection{The Reduction (Set Seeking $\rightarrow$ Combinatorial Auctions with XOS bidders)} \label{subsec-xos-reduction}

Given the hardness of the set seeking problem, we will be able to derive our result for combinatorial auctions using the following reduction:

\begin{lemma}\label{lemma-xos-reduction}
Any protocol for combinatorial auctions on distribution $D$ that achieves approximation ratio better than $m^{1/4-\epsilon} $ where the message length of each player is $l$ can be converted into a protocol for the two-player set seeking problem on distribution $D_2$ achieving an approximation ratio of $k^{1-\epsilon}$ with the same message length $l$.
\end{lemma}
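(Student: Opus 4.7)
My plan is to embed the two-player set-seeking instance as the local view of a single player inside a CA instance drawn from $D$, have Alice simulate that player, and have Bob simulate everyone else (including the referee). Running the assumed CA protocol then yields, through the allocation given to that player, enough information for Bob to produce a good set-seeking answer.

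The concrete construction uses shared randomness (which Yao's principle lets us derandomize at the end). First, the two parties draw a uniformly random player index $i^*\in[n]$ and a uniformly random injection $\iota:U\hookrightarrow[m]$ from the set-seeking universe $U$ (of size $x=k^3+k^2$) into the CA universe. Alice interprets her family $F$ as player $i^*$'s CA family $F_{i^*}:=\iota(F)$, with $\iota(P)$ playing the role of the petal $P_{i^*}$ and $\iota(U\setminus P)$ playing the role of the global center $C$. Because $\iota$ is uniform, $C$ is uniform among $k^3$-subsets of $[m]$ and, conditional on $C$, $P_{i^*}$ is uniform in $[m]\setminus C$ of size $k^2$; moreover $T_{i^*}:=\iota(T_P)$ is uniform of size $k$ in $P_{i^*}$ and the remaining clauses of $F_{i^*}$ are uniform of size $k$ in $C\cup P_{i^*}$. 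Bob completes the CA instance with private randomness: for each $i\ne i^*$ he samples $P_i$ uniformly in $[m]\setminus C$ of size $k^2$ and draws $F_i$ from the conditional of $D$, so that the induced joint distribution matches $D$ exactly.

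Alice now computes the $l$-bit message $\mathcal{A}_{i^*}(F_{i^*})$ and sends it to Bob; Bob computes the other $n-1$ messages locally, runs the CA referee to obtain an allocation $(A_1,\ldots,A_n)$, and outputs $A:=\iota^{-1}(A_{i^*}\cap P_{i^*})$, padded with uniformly random items of $P\setminus A$ to size $k$ if needed. To transfer the welfare bound into a bound on the set-seeking objective, I would first average: the assumed CA approximation together with Lemma~\ref{lemma-xos-lb-opt} give $E\bigl[\sum_i v_i(A_i)\bigr]\ge\Omega(k^{3+4\epsilon})$, so by symmetry and the uniform choice of $i^*$, $E[v_{i^*}(A_{i^*})]\ge\Omega(k^{4\epsilon})$. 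Next, I would show that nearly all of this value is carried by the special clause $T_{i^*}$ rather than by a random one: each random clause of $F_{i^*}$ is a uniformly chosen $k$-subset of $C\cup P_{i^*}$ of size $\Theta(k^3)$, so a Chernoff tail exactly as in Claim~\ref{claim-m-compatible}, combined with a union bound over the $t-1=e^{2k^\epsilon}-1$ random clauses, shows that with high probability no random clause meets $A_{i^*}$ in more than $O(k^\epsilon)$ elements. Hence $E[|A_{i^*}\cap T_{i^*}|]\ge\Omega(k^{4\epsilon})$. Finally, since $T_{i^*}\subseteq P_{i^*}$ forces $|A\cap T_{i^*}|\ge|A_{i^*}\cap T_{i^*}|$, and since the disjointness of the allocation gives $\sum_i|A_i\cap P_i|\le m$, hence $E[|A_{i^*}\cap P_{i^*}|]\le k$, a careful split on whether $|A_{i^*}\cap P_{i^*}|$ exceeds $k$ (using Markov on its first moment and $|A_{i^*}\cap T_{i^*}|\le k$) yields $E[|A\cap T_{i^*}|/|A|]\ge\Omega(k^{4\epsilon-1})\ge 1/k^{1-\epsilon}$ for every $\epsilon>0$ and $k$ sufficiently large. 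Derandomizing the shared coins produces a deterministic set-seeking protocol with message length $l$ and approximation ratio $k^{1-\epsilon}$, which (via Lemma~\ref{lemma-set-seeking}) is what the lemma asserts.

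The main obstacle is the analysis in the previous paragraph: one must simultaneously show (i) that $v_{i^*}(A_{i^*})$ is essentially realized by the special clause $T_{i^*}$ and not by one of the $e^{2k^\epsilon}$ random clauses --- this is precisely where the sub-exponential size of the family is essential, as it is calibrated so that a union bound over random clauses still beats the Chernoff tail --- and (ii) that the denominator $|A_{i^*}\cap P_{i^*}|$ does not blow up and wash out the ratio. The second is handled by the first-moment inequality $\sum_i|A_i\cap P_i|\le m$, but combining the two bounds into a clean lower bound on $E[|A_{i^*}\cap T_{i^*}|/|A_{i^*}\cap P_{i^*}|]$ requires careful Markov-type case analysis, which is where most of the technical work of the reduction lives.
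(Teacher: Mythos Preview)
Your reduction is set up correctly and mirrors the paper's: the keeper simulates one CA player, the seeker simulates the rest and the referee, and the output is $A_{i^*}\cap P_{i^*}$ padded to size $k$. Where you diverge is in the analysis: the paper does \emph{not} average over a random $i^*$. Instead it fixes the CA protocol, calls player $i$ \emph{good} when $E[|A_i\cap T_i|]\ge\max\{|A_i|/k^{1-\epsilon},\,k^\epsilon\}$, and argues the dichotomy ``no good player $\Rightarrow$ welfare $\le O(k^{3+\epsilon})$'' versus ``a good player exists $\Rightarrow$ use that specific $i$ in the reduction''. This per-player extraction is what makes the ratio bound go through without any Markov step.

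Your averaging route has two problems. First, step (i) as written does not apply: you invoke a Chernoff bound on $|T\cap A_{i^*}|$ for a random non-special clause $T$, but $A_{i^*}$ is a function of \emph{all} clauses in $F_{i^*}$, so $T$ and $A_{i^*}$ are not independent. In particular, random clauses can hit $A_{i^*}\cap C$ heavily. The fixable version bounds $|T\cap P_{i^*}|$ (which \emph{is} independent of the protocol) and handles the center contribution by the global inequality $\sum_i|A_i\cap C|\le|C|=k^3$; this is essentially what the paper does when it separates the $k^3$ center items from the $k^4$ petal items.

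Second, and more seriously, step (iii) does not yield the bound you claim. From $E[X]\ge\Omega(k^{4\epsilon})$ with $X=|A_{i^*}\cap T_{i^*}|\le k$ and $E[Y]\le k$ with $Y=|A_{i^*}\cap P_{i^*}|$, the best you can squeeze out of a Markov/Cauchy--Schwarz split is
\[
E\Bigl[\tfrac{X}{\max(Y,k)}\Bigr]\;\ge\;\frac{E[X]^2}{E[X(Y+k)]}\;\ge\;\frac{\Omega(k^{8\epsilon})}{k\cdot 2k}\;=\;\Omega(k^{8\epsilon-2}),
\]
and this is tight: take $X=k,\,Y=k/p$ with probability $p=k^{4\epsilon-1}$ and $X=Y=0$ otherwise. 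So you get a set-seeking guarantee of only $k^{2-8\epsilon}$, which is vacuous for $\epsilon<1/8$ and therefore does not prove the lemma for all $\epsilon>0$. The paper avoids this loss precisely because the ``good'' condition builds the ratio $|A_i|/k^{1-\epsilon}$ into the hypothesis for the chosen player, rather than trying to recover it after averaging.
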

\begin{proof}
In the proof we fix an algorithm for the multi-player combinatorial auction problem and analyze its properties.
\begin{definition}
Fix an algorithm for the XOS problem and consider the distribution $D$. We say that player $i$ is \emph{good} if $E[|A_i\cap T_i|]\geq \max \{ \dfrac{|A_i|}{k^{1-\epsilon}}, k^{\epsilon} \}$.
\end{definition}
To prove the lemma we first show that if none of the players are good then the algorithms approximation ratio is bounded by $m^{1/4-\epsilon} $. Else, there exists at least a single player which is good. In this case we show how the algorithm for combinatorial auctions can be used to get a good approximation ratio for the set seeking problem.

\begin{claim}
If none of the players is good then the expected approximation ratio is at most $m^{\frac 1 4 - \epsilon}$.
\end{claim}
\begin{proof}
To give an upper bound on the expected social welfare we assume that the $k^3$ items in the center are always allocated to players that demand them. We now compute an upper bound on the contribution of the remaining $k^4$ items to the expected social welfare.
Observe that since none of the players is good, each player contributes at most $\dfrac{|A_i|}{k^{1-\epsilon}}+ k^{\epsilon}$ to the expected social welfare (of the $k^4$ items). Hence the expected social welfare achieved by the algorithm is at most $\sum_i(\dfrac{|A_i|}{k^{1-\epsilon}}+k^\epsilon) +|C|  = \dfrac{k^4}{k^{1-\epsilon}} + k^3\cdot k^\epsilon +k^3 \leq 3k^{3+\epsilon}$. This implies that the approximation ratio of the algorithm is $k^{1-\epsilon} \leq m^{\frac 1 4 - \epsilon}$.
\end{proof}

\begin{claim}
If there exists a good player then there exists an algorithm for the two-player set seeking problem that guarantees an approximation ratio of $k^{1-\epsilon}$ with the same message length.
\end{claim}
\begin{proof}
Let player $i$ be the good player. Recall that $D$ is the distribution for the multi-player combinatorial auction problem and that $D_2$ is the distribution defined for the set seeking problem. We denote by $E_{D}[\cdot]$ and $E_{D_2}[\cdot]$ expectations taken over the distributions $D$ and $D_2$ respectively. We show that there exists an algorithm for the set seeking problem achieving expected approximation ratio of $k^{1-\epsilon}$ on $D_2$. 

Let the keeper take the role of player $i$ in the multi-player algorithm and the seeker play the roles of the rest of the $n-1$ players. More precisely, the keeper first sends player $i$'s message to the seeker. This is possible as the input of the keeper is identical to the input of the players in the multi-player problem. Next, the seeker simulates the messages of the remaining players and run the algorithm internally. This simulation is possible by the assumption that in the multi-player algorithm all messages are sent simultaneously. The number of items in the combinatorial auction will be $k^3+k^4$, where the $x$ items of the set seeking problem will correspond to some set $X$ of size $x$ of items in the combinatorial auction. We first show that given that the information of the seeker and keeper is drawn in a correlated way from $D_2$, they have enough information to simulate the correlated distribution $D$. 

The input of the keeper is defined in a straightforward way, where each set in $F$ defines a clause in the XOS valuation. All items in these clauses are subsets of $X$. The seeker constructs the valuations of the other $n-1$ players as follows: the items that are in $X \setminus P$ form the center. Next the seeker chooses uniformly at random for each player $j$ a petal $P_j$ of the $k^4$ items not in the center, a set $T_j \subseteq P_j$ of size $k$ and additional $t-1$ sets of size $k$ from $C\cup P_j$. Observe that the distribution of valuations constructed this way is identical to $D$. The inherent reason for this is that for player $i$ the distribution of $P_i$ and $T_i$ is identical to the distribution of $P$, $T$ and $F$ in $D_2$ as in both cases $P$ and the $t-1$ sets in $F$ (or $P_i$ and $F_i$) are chosen uniformly at random from a set of size $k^3+k^2$ and $T$ (or $T_i$) is chosen uniformly at random from $P$ (or $P_i$). In other words, the distribution $D_2$ on $F,P$ and $T$ is identical to distribution $D$ projected on $F_i,P_i$ and $T_i$.

We now observe that since $i$ is a good player we have that $E_D[|A_i\cap T_i|]\geq \max \{ \dfrac{|A_i|}{k^{1-\epsilon}}, k^{\epsilon} \}$. We show that this implies an algorithm achieving expected approximation ratio of $k^{1-\epsilon}$ for the two-player set seeking problem. The algorithm works as follows: we first perform the reduction above, and therefore the distribution we are analyzing is $D$. Now, if player $i$ was assigned a bundle $A_i$ of size at least $k$, the algorithm returns $A=A_i$. Else, the algorithm returns a bundle $A$ that contains $A_i$ and additional $k-|A_i|$ arbitrary items. Thus, we have that $E_D[|A\cap T_i|]\geq \dfrac{|A|}{k^{1-\epsilon}}$, as we made sure that $|A|\geq k$ implying that $\dfrac{|A|}{k^{1-\epsilon}}\geq k^\epsilon$.

We claim that the expected approximation ratio of the algorithm on $D_2$ is $k^{1-\epsilon}$. As the distribution $D_2$ on $F,P$ and $T$ is identical to distribution $D$ projected on $F_i,P_i$ and $T_i$, we have that $E_{D_2}[|A\cap T|]\geq \dfrac{|A|}{k^{1-\epsilon}}$. Thus, $E_{D_2}[\dfrac{|A\cap T|}{|A|}]\geq \dfrac{1}{k^{1-\epsilon}}$. This in turn implies that $\max_{S \in F} \dfrac{|A \cap S|}{|A|} \geq \dfrac{1}{k^{1-\epsilon}}$ and since the optimal solution has a value of $1$ the expected approximation ratio is $k^{1-\epsilon}$.
\end{proof}

From the last two claims we get that either the algorithm for the combinatorial auction problem does not guarantee a good approximation ratio, or that we have constructed an efficient protocol for the set seeking problem.
\end{proof}

\section{Algorithms for Subadditive Combinatorial Auctions} \label{sec-xos-alg}

We design algorithms for a restricted special case of ``$t$-restricted'' instances (see definition below). We will show however that the existence of a simultaneous algorithm for $t$-restricted instances implies a simultaneous approximation algorithm for subadditive bidders with almost the same approximation ratio.

\begin{definition}
Consider an XOS valuation $v(S)=\max_r a_r(S)$, where each $a_r$ is an additive valuation. $v$ is called \emph{binary} if for every $a_r$ and item $j$ we have that $a_r(\{j\})\in \{0,\mu\}$, for some $\mu$.
\end{definition}

\begin{definition}
An instance of combinatorial auctions with binary XOS valuations (all with the same $\mu$, for simplicity and without loss of generality $\mu=1$) is called \emph{$t$-restricted} if there exists an allocation $(A_1,\ldots , A_n)$ such that all the following conditions hold:
\begin{enumerate}
\item For every $i$, $v_i(A_i)=|A_i|$.
\item For every $i$, either $|A_i| = t$ or $|A_i|=0$.
\item $t$ is a power of $2$.
\item $\Sigma_iv_i(A_i)\geq \dfrac {OPT} {2\log m}$.
\end{enumerate}
\end{definition}

\begin{proposition}\label{prop-restricted-instances}
If there exists a simultaneous algorithm for $t$-restricted instances that provides an approximation ratio of $\alpha$ where each bidder sends a message of length $l$, then there exists a simultaneous algorithm for subadditive bidders that provides an approximation ratio of $O(\alpha\cdot \log^3 m)$ where each bidder sends a message of length $O(l \cdot \log^3 m)$.
\end{proposition}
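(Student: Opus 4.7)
I would reduce the subadditive problem to a polylogarithmic family of binary XOS instances, each of which is (approximately) $t$-restricted for some pair $(\mu, t)$, and run the hypothesized $t$-restricted algorithm in parallel on every such pair. Three independent bucketings --- subadditive-to-XOS, value bucketing, and bundle-size bucketing --- each cost an $O(\log m)$ factor in welfare and together yield the claimed $O(\alpha \log^3 m)$ approximation.

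First, every bidder locally replaces their subadditive valuation $v_i$ by an XOS valuation $\hat v_i$ satisfying $\hat v_i(S) \le v_i(S) \le O(\log m) \cdot \hat v_i(S)$ for every $S$, via the standard subadditive-to-XOS approximation. This is purely local and costs no communication. Second, for each power of two $\mu \in \{1, 2, 4, \ldots, 2^{\lceil \log m \rceil}\}$, I value-bucket the clauses of each $\hat v_i$: define $\hat v_i^\mu$ to be the binary XOS valuation whose $r$-th clause keeps only items whose value in the $r$-th clause of $\hat v_i$ lies in $[\mu, 2\mu)$ and flattens those values to $\mu$. Since restricting a clause to a subset of items leaves it additive, each $\hat v_i^\mu$ is still XOS (and binary with scale $\mu$), and $\hat v_i^\mu(S) \le \hat v_i(S)$ for every $S$. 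A pigeonhole over the $O(\log m)$ scales yields some $\mu^*$ with $OPT(\hat v^{\mu^*}) \ge \Omega(OPT(\hat v) / \log m)$. Third, inside the binary XOS instance at scale $\mu^*$, I bundle-size bucket: some dyadic interval $[t^*, 2t^*)$ of bundle sizes carries at least $\Omega(1/\log m)$ of its optimal welfare, and truncating each such bundle down to exactly $t^*$ items of clause-value $\mu^*$ produces an allocation satisfying all four conditions of $t^*$-restrictedness, in particular condition~4, with welfare $\Omega(OPT(\hat v^{\mu^*}) / \log m)$.

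Since $(\mu^*, t^*)$ is unknown, each bidder simulates the hypothesized $t$-restricted algorithm simultaneously for every pair $(\mu, t)$ of powers of two in $[1, m]$ --- a total of $O(\log^2 m)$ invocations --- sending the corresponding $l$-bit message for each; the planner picks the best of the $O(\log^2 m)$ returned allocations. The $(\mu^*, t^*)$ run $\alpha$-approximates its $t^*$-restricted optimum, and compounding the three $\log m$ losses with $\alpha$ gives subadditive welfare at least $\Omega(OPT / (\alpha \log^3 m))$ for the original instance (using $\hat v_i^\mu(S) \le \hat v_i(S) \le v_i(S)$ to translate the binary-XOS welfare back). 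The per-bidder communication is $O(l \log^3 m)$ bits, once one accounts for polylog overhead per message (identifying $(\mu, t)$ and absorbing the $2\log m$ constant hidden in condition~4 of the $t$-restricted definition).

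The only nontrivial checks are that each $\hat v_i^\mu$ is XOS and that the truncation in step three simultaneously satisfies all four conditions of $t^*$-restrictedness; both follow routinely from the construction. The main obstacle, then, is the bookkeeping across the three compounded pigeonhole-style averaging steps --- making sure in particular that the binary-XOS-welfare achieved by the algorithm is actually a lower bound on the subadditive welfare and that no $\log m$ factor is double-counted or missed --- which is straightforward but requires care.
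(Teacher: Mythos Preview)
Your proposal is correct and follows essentially the same three-step reduction as the paper: (i) replace each subadditive $v_i$ by an $O(\log m)$-approximate XOS valuation locally, (ii) bucket clause values via $\mu$-projections to reduce to binary XOS, and (iii) bucket optimal bundle sizes to reduce to a $t$-restricted instance, running the hypothesized algorithm in parallel over all $(\mu,t)$ pairs. One small point: the range of $\mu$ should be tied to the actual value scale (the paper uses $\{v_{\max}/(2m),\ldots,v_{\max}\}$ with each bidder reporting based on his own $v_i(M)$) rather than a fixed $\{1,\ldots,m\}$, but this is a detail you would fix when writing it out.
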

\begin{proof}
The proposition follows from the following three lemmas.

\begin{lemma}
If there exists a simultaneous algorithm for XOS bidders that provides an approximation ratio of $\alpha$ where each bidder sends a message of length $l$, then there exists a simultaneous algorithm for subadditive bidders that provides an approximation ratio of $O(\alpha\cdot \log m)$ where each bidder sends a message of length $l$.
\end{lemma}
\begin{proof}
For every subadditive valuation there is an XOS valuation that is an $O(\log m)$ approximation of it \cite{D07}\footnote{A valuation $v$ $\alpha$-approximates  a valuation $u$ if for every $S$ we have that $u(S)\geq v(S)\geq \frac{u(S)} \alpha$.}; thus, if each player computes this XOS valuation and proceeds as in the algorithm for XOS valuations we get an algorithm for subadditive valuations, losing only an $O(\log m)$ factor in the approximation ratio.
\end{proof}

\begin{lemma}
If there exists a simultaneous algorithm for binary XOS bidders (all with the same $\mu$) that provides an approximation ratio of $\alpha$ where each bidder sends a message of length $l$, then there exists a simultaneous algorithm for XOS bidders that provides an approximation ratio of $O(\alpha\cdot \log m)$ where each bidder sends a message of length $O(l \cdot \log m)$.
\end{lemma}
\begin{proof}
We will move from general XOS valuations to binary XOS valuations using the following notion of projections:
\begin{definition}
A \emph{$\mu$-projection} of an additive valuation $a'$ is the following additive valuation $a$:
\begin{eqnarray*}
a(\{j\})=\left\{
  \begin{array}{ll}
    \mu, & \hbox{if $2\mu>a'(j)\geq \mu$;} \\
    0, & \hbox{otherwise.}
  \end{array}
\right.
\end{eqnarray*}
\end{definition}
A \emph{$\mu$-projection} of an XOS valuation $v$ is the XOS valuation $v^\mu$ that consists exactly of all $\mu$-projections of the additive valuations (clauses) that define $v$.

Let $v_{max}$ be $\max_i v_i(M)$ rounded down to the nearest power of $2$. Let $\mathcal M=\{ \frac {v_{max}} {2m}, \frac {v_{max}} {m}, \ldots, \frac {v_{max}} 2, v_{max}\}$. The next claim shows that there exists some $\mu\in \mathcal M$ such that the value of the optimal solution with respect to the $\mu$-projections $v_i^\mu$ is only a logarithmic factor away from the value of the optimal solution with respect to the $v_i$'s. Given the claim below and an algorithm $A$ for binary XOS valuations we can construct the following algorithm for general XOS valuations: each player $i$ computes, for every $\mu\in \{ \frac {MAX_i} {2m}, \frac {MAX_i} {m},  \ldots, \frac {MAX_i} 2, MAX_i\}$, where $MAX_i$ equals $v_i(M)$ rounded down to the nearest power of $2$, his $\mu$-projection and sends both $\mu$ and the message of length $l$ he would have sent in the algorithm $A$ if his valuation was $v_i^\mu$. The new algorithm computes now up to $|\mathcal M|$ different allocations by running $A$ once for each $\mu\in \mathcal M$, and outputs the allocation with the best value\footnote{Observe that if player $i$ does not report a valuation for some value of $\mu \in \mathcal M$ then the algorithm may assume its valuation for this $\mu$ is $0$ for every bundle.}. By the claim below, the approximation ratio is $O(\alpha \cdot \log m)$. The length of the message that each bidder sends is $O(l\cdot \log m)$.

\begin{claim}
Let $OPT$ be the value of the optimal solution with respect to the $v_i$'s. For every $\mu\in \mathcal M$, let $OPT_\mu$ be the value of the optimal solution with respect to the $\mu$-projections $v_i^\mu$. There exists some $\mu \in \mathcal M$ such that $OPT_\mu \geq \dfrac {OPT} {8\log m}$.
\end{claim}
\begin{proof}
Fix some optimal solution $(O_1,\ldots, O_n)$. For each player $i$ let $a_i$ be the maximizing clause for the bundle $O_i$ in $v_i$. Now, for every player $i$ and every item $j\in O_i$, put item $j$ into bin $x$, where $x$ is a power of $2$, if and only if $2x>a_i(\{j\})\geq x$. Let $M_x$ be the set of items in bin $x$. We claim that there exists bin $\mu$, $\mu \geq \frac {v_{max}} {2m}$, for which it holds that $\Sigma_i \Sigma_{j\in  O_i \cap M_\mu }a_i(\{j\})\geq \frac {\Sigma_i v_i(O_i)} {4\log m}$. To see this, first let $L$ be the set of ``small'' items that are in any of the bins $x$, $x\leq \frac {v_{max}} {4m}$. It holds that $\Sigma_i\Sigma_{j\in O_i \cap L}a_i(\{j\})\leq \Sigma_i\Sigma_{j\in O_i \cap L} \frac{v_{max}}{2m}\leq \frac {v_{max}} 2 \leq \frac {\Sigma_iv_i(O_i)} 2$. Thus, we have that $\Sigma_{i} v_i(O_i \cap \cup_{x\in \mathcal M} M_x) \geq \frac {\Sigma_iv_i(O_i)} 2$. Now observe that the number of bins $x$, $x\geq  \frac {v_{max}} {2m}$ is bounded from above by $2\log m$. Therefore, there exists a bin $\mu$ such that $\Sigma_{i} v_i(O_i \cap M_\mu) \geq \frac {\Sigma_iv_i(O_i)} {4 \log m}$. The proof is completed by observing that $\Sigma_{i} v^{\mu}_i(O_i \cap M_\mu) \geq \frac{1}{2} \Sigma_{i} v_i(O_i \cap M_\mu)$ as the $\mu$-projection cuts the value of each of the item by at most half.

\end{proof} 

\end{proof}

\begin{lemma}
If there exists a simultaneous algorithm for $t$-restricted instances that provides an approximation ratio of $\alpha$ where each bidder sends a message of length $l$, then there exists a simultaneous algorithm for binary XOS bidders (with the same $\mu$) that provides an approximation ratio of $O(\alpha\cdot \log m)$ where each bidder sends a message of length $O(l \cdot \log m)$.
\end{lemma}
\begin{proof}
We will show that for every instance of combinatorial auctions with binary XOS bidders there exists some $t$ for which  this instance is $t$-restricted. Given algorithms $A_t$ for $t$-restricted instances we can construct an algorithm for binary XOS valuations as follows: for each $t\in \{1,2,4, \ldots, m\}$ each bidder sends the same message as in $A_t$. Now compute $\log m+1$ allocations, one for each value of $t$, and output the allocation with the highest value.

We now show the existence of one ``good'' $t$ as above. Let $(O_1,\ldots O_n)$ be some optimal solution. Put the players into $\log m$ bins, where player $i$ is in bin $r$ if $2r>|O_i|\geq r$, for $r\in \{1,2,4, \ldots, m\}$. Let $N_r$ be the set of players in bin $r$. Let $t=\arg\max_r\Sigma_{i\in N_r}v_i(O_i)$. For each $i$ let $A_i$ be an arbitrary subset of $O_i$ of size $t$ if $i\in N_t$ and $A_i=\emptyset$ otherwise. We have that $\Sigma_iv_i(A_i) \geq \frac{\Sigma_{i\in N_t}v_i(O_i)}{2}\geq \frac {\Sigma_iv_i(O_i)} {4\log m}$.
\end{proof}

\end{proof}

In the next two subsections we to design two algorithms for $t$-restricted instances. We will use the proposition to claim that these algorithms can be extended with a small loss in the approximation factor to the general case as well.

\subsection{A Simultaneous $\tilde O(m^{\frac 1 3})$-Approximation }\label{sec-alg}

We show that simultaneous algorithms can achieve better approximation ratios than those that can be obtained by sketching the valuations. Specifically, we prove that:

\begin{theorem}\label{thm-xos-alg}
There is a deterministic simultaneous algorithm for combinatorial auctions with subadditive bidders where each player sends $poly(m,n)$ bits that guarantees an approximation ratio of $\tilde O(m^{\frac 1 3})$.
\end{theorem}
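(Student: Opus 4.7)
The plan is to apply Proposition~\ref{prop-restricted-instances}, which reduces the theorem to designing a deterministic simultaneous algorithm with $\mathrm{poly}(m,n)$ bits per bidder that achieves approximation ratio $\tilde O(m^{1/3})$ on every $t$-restricted instance. In such an instance there is a structured allocation of disjoint bundles $A_i$, each either empty or of size exactly $t$, with each non-empty $A_i$ contained in some clause of bidder $i$'s binary XOS valuation (so $v_i(A_i)=|A_i|=t$) and total value at least $\mathrm{OPT}/(2\log m)$. Letting $k$ denote the number of non-empty $A_i$'s, we have $kt\le m$, so it suffices to produce an allocation of value $\Omega(kt/m^{1/3})$ up to polylogarithmic factors.

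My approach is to have each bidder $i$ deterministically send a polynomial-size ``digest'' of his valuation, consisting of (i) the set of items on which $v_i$ has positive singleton value (the \emph{support} of $v_i$), and (ii) a polynomial family of candidate size-$t$ bundles contained in his clauses (so $v_i$ attains value $t$ on each), selected by a canonical, adversary-independent rule depending only on $v_i$. The central planner, given all the digests, computes a packing of disjoint candidate bundles (at most one per bidder) of maximum cardinality and outputs the corresponding allocation; since each packed bundle contributes $t$ to the welfare, it is enough to show that the packing has size $\Omega(k/m^{1/3})$ up to polylogarithmic factors. The support information is used by the planner to break ties and to route bidders whose candidate bundles all collide onto still-available items.

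The analysis splits according to the threshold $t=m^{1/3}$. In the large-$t$ regime ($t\ge m^{1/3}$), only $k\le m^{2/3}$ bidders are active in OPT, so a relatively modest digest per bidder already covers enough of each active bidder's clause space that a matching/packing routine achieves welfare $\Omega(kt/\mathrm{polylog}(m))$. In the small-$t$ regime ($t<m^{1/3}$) there may be up to $m/t$ active bidders but each bundle occupies few items; a greedy/density argument, combined with the support information, shows that a packing of size $\Omega(k/m^{1/3})$ always exists and is efficiently extractable from the digests. The main obstacle I anticipate is engineering the canonical candidate-selection rule so that bidders who cannot communicate still implicitly ``agree'' on enough diverse candidate bundles to admit a large packing in the worst case; the natural randomized scheme (each bidder samples size-$t$ subsets from random clauses) needs to be derandomized or replaced by a deterministic canonical rule (e.g.\ based on a fixed ordering of items and clauses) while preserving the packing-size guarantee against an adversarial valuation profile. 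Balancing the two regimes at $t=m^{1/3}$ then yields the claimed $\tilde O(m^{1/3})$ bound, and combining with Proposition~\ref{prop-restricted-instances} gives the theorem.
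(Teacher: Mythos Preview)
Your proposal has a genuine gap: you explicitly flag ``engineering the canonical candidate-selection rule'' as ``the main obstacle,'' and then do not resolve it. But this \emph{is} the entire content of the algorithm. A binary XOS valuation can have exponentially many clauses, and each clause can contain exponentially many size-$t$ subsets; you give no rule that (a) produces only polynomially many candidates per bidder and (b) guarantees that the optimal bundle $A_i$ is well-covered by those candidates. Your fallback of sending the support of $v_i$ does not help either: the support can be all of $M$, so it carries no useful information. Without a concrete selection rule with both properties, neither the communication bound nor the packing analysis goes through.

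The paper's solution is a single clean idea that you are missing: each bidder reports a \emph{maximal} family $\mathcal S_i$ of pairwise \emph{disjoint} size-$t/2$ bundles $S$ with $v_i(S)=|S|$. Disjointness forces $|\mathcal S_i|\le 2m/t$, so the message is polynomial. Maximality forces at least half of $A_i$ to lie inside $\cup_{S\in\mathcal S_i}S$ (otherwise the leftover $t/2$ items of $A_i$ would be another disjoint bundle). This simultaneously handles both of your difficulties.

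Your two-regime analysis (split at $t=m^{1/3}$) is also too coarse to reach $m^{1/3}$. The paper needs \emph{three} allocations whose guarantees are $\tilde O(t)$, $\tilde O(\ell)$, and $\tilde O(n'/\ell)$, where $\ell\approx|\mathcal S_i|$ and $n'$ is the number of active bidders; balancing all three (using $n't\le m$) is what yields $m^{1/3}$. A split on $t$ alone, without the parameter $\ell$ counting how many disjoint candidates each bidder has, does not close.
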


Given Proposition \ref{prop-restricted-instances}, we may focus only on designing algorithms for $t$-restricted instances. The algorithm for $t$-restricted instances is simple:
\begin{enumerate}
\item Each player reports a maximal set of disjoint bundles $\mathcal{S}_{i}$ such that for every bundle $S\in \mathcal{S}_{i}: |S|=\frac{t}{2}$ and $v_i(S)=|S|$.
\item For each $i$, let $v'_i$ be the following XOS valuation: $v'_i(S)=\max_{T\in \mathcal S_i}|T\cap S|$.
\item Output $(T_1,\ldots, T_n)$ -- the best allocation\footnote{As stated, this algorithm uses polynomial communication but may not run in polynomial time since finding the optimal solution with explicitly given XOS valuations is NP hard. If one requires polynomial communication \emph{and} time, an approximate solution may be computed using any of the known constant ratio approximation algorithms. The analysis remains essentially the same, with a constant factor loss in the approximation ratio.} with respect to the $v'_i$'s.
\end{enumerate}
Notice that the size of the message that each player sends is $poly(m)$. Furthermore, for each bundle $S$ and bidder $i$ $v_i(S)\geq v'_i(S)$. We will show that the best allocation with respect to the $v'_i$'s provides a good approximation with respect to the original valuations $v_i$'s. I.e., $\frac{\sum_i v_i(A_i)}{\sum_i v'_i(T_i)} ={\tilde O(m^{\frac 1 3})}$.

The proof considers three different allocations and shows that each allocation provides a good approximation for a different regime of parameters. 
\begin{itemize}
\item The best allocation (with respect to the $ v'_i$'s) in which each player receives at most one item. We show that this provides an $O(t)$ 
approximation with respect to the $v_i$'s.
\item Each player $i$ is allocated the fraction of the bundle $T\in \mathcal S_i$ that maximizes $| T\cap A_i|$. We show that this allocation guarantees an approximation ratio of $\tilde O(l)$, for some $l$ related to the $l_i$'s. 
\item The third allocation is constructed randomly (even though our algorithm is deterministic): each player $i$ chooses at random a bundle $\mathcal S_i$, and each item $j$ is allocated to some player that $j$ is in his randomly selected bundle, if such exists. Let $n'$ be the number of nonempty bundles in $(A_1,\ldots, A_n)$. We show that the expected approximation of this allocation is $\tilde O(n'/l)$, for the same $l$ as above. 
\end{itemize}
More formally:

\begin{lemma}\label{lemma-xos-alg-proof}
Let $(A_1,\ldots, A_n)$ be the allocation that is guaranteed by the $t$-restrictness of the instance. Let $(T_1,\ldots,T_n)$ be the allocation that the algorithm outputs. Then, $\frac{\sum_i v_i(A_i)}{\sum_i v_i(T_i)} \leq \tilde O(m^{1/3})$.
\end{lemma}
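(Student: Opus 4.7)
The plan is to argue that the algorithm's output attains $v$-welfare at least $OPT/\tilde{O}(m^{1/3})$. The first step, which I will use repeatedly, is that $v'_i(S) \leq v_i(S)$ for every set $S$: each $T \in \mathcal{S}_i$ is a bundle on which $v_i$ equals $|T|$, so the maximizing clause of $v_i$ at $T$ is an additive valuation taking value $1$ on every item of $T$, which shows $v_i(S) \geq |T \cap S|$ for each such $T$ and hence $v_i(S) \geq v'_i(S)$. Consequently, $\sum_i v_i(T_i) \geq \sum_i v'_i(T_i) \geq \sum_i v'_i(X_i)$ for \emph{any} comparator allocation $(X_1, \ldots, X_n)$, and it suffices to exhibit such a comparator whose $v'$-welfare is at least $OPT/\tilde{O}(m^{1/3})$.

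Let $n'=|\{i:A_i\neq\emptyset\}|$ and $l_i=|\mathcal{S}_i|$, so $OPT=n't$. A preliminary structural claim is that for every good player $i$ (one with $A_i \neq \emptyset$), at least $t/2$ items of $A_i$ lie in $\bigcup_{T \in \mathcal{S}_i} T$. Indeed, since $v_i(A_i) = t$ in a binary XOS with $\mu=1$, the maximizing clause of $v_i$ at $A_i$ assigns value $1$ to every $j \in A_i$; hence any size-$t/2$ subset of $A_i \setminus \bigcup \mathcal{S}_i$ would satisfy $v_i(S) = |S| = t/2$, contradicting the maximality of $\mathcal{S}_i$.

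With the structural claim in hand I would exhibit three candidate comparator allocations and take the best:
\begin{enumerate}
\item[(i)] \emph{Matching allocation}: give each good player one distinct item $j_i \in A_i \cap \bigcup \mathcal{S}_i$. Disjointness of the $A_i$'s makes this consistent and each $j_i$ contributes $v'_i(\{j_i\}) = 1$, yielding welfare $\geq n' = OPT/t$ and ratio at most $t$.
\item[(ii)] \emph{Greedy intersection}: give each good player $A_i \cap T_i^{\star}$ where $T_i^{\star} \in \arg\max_{T \in \mathcal{S}_i} |T \cap A_i|$. Averaging over $\mathcal{S}_i$ gives $|A_i \cap T_i^{\star}| \geq t/(2l_i)$. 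After bucketing the good players so that all considered ones have $l_i \in [l, 2l]$ (losing a $\log m$ factor while keeping a constant fraction of $OPT$), the welfare is $\tilde{\Omega}(OPT/l)$, giving ratio $\tilde{O}(l)$.
\item[(iii)] \emph{Random allocation}: within the chosen bucket each good player $i$ independently picks $T_i \in \mathcal{S}_i$ uniformly, and we award player $i$ the items of $T_i$ not selected by any other player. Using $E[|T_i \cap T_j|] = |\bigcup\mathcal{S}_i \cap \bigcup\mathcal{S}_j|/(l_i l_j) \leq t/(2l)$, a pairwise second-moment bound on the collisions yields expected welfare $\tilde{\Omega}(OPT \cdot l / n')$ (for $l$ in the appropriate range), hence ratio $\tilde{O}(n'/l)$.
\end{enumerate}

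The final step is a parameter balance: the best of the three allocations achieves ratio at most $\tilde{O}(\min(t, l, n'/l))$. Choosing $l \approx \sqrt{n'}$ reduces this to $\tilde{O}(\min(t, \sqrt{n'}))$, and since the $A_i$'s are disjoint and each has size $t$, we have $n't \leq m$, so $\sqrt{n'} \leq \sqrt{m/t}$. The quantity $\min(t, \sqrt{m/t})$ is maximized at $t = m^{1/3}$, where it equals $m^{1/3}$, giving the desired $\tilde{O}(m^{1/3})$ bound.

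The main obstacle is the analysis of (iii): collisions among the randomly chosen $T_i$'s must be controlled by a second-moment argument, and making the per-player expected contribution positive requires that all relevant $l_i$'s be comparable. This is why the bucketing preamble that powers (ii) is also essential for (iii), and it is the source of the polylogarithmic factors hidden in $\tilde{O}$.
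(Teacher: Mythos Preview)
Your overall architecture matches the paper's exactly: the $\Phi_i\ge t/2$ structural claim, bucketing by $|\mathcal S_i|$, the three comparator allocations yielding ratios $t$, $\tilde O(l)$, $\tilde O(n'/l)$, and the final balance via $n't\le m$. Allocations (i) and (ii) are fine and are the paper's Claims~\ref{claim-approx-t} and~\ref{claim-approx-l}.

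The gap is in (iii). With your rule ``award player $i$ the items of $T_i$ not selected by any other player'', a pairwise bound on collisions does \emph{not} give welfare $\tilde\Omega(OPT\cdot l/n')$. Your estimate $E[|T_i\cap T_j|]\le t/(2l)$ is correct, but summing over the $n_l-1$ other players only shows that player $i$ keeps in expectation at least $t/2 - (n_l-1)\cdot t/(2l)$ items, which is useless once $n_l\gtrsim l$. More precisely, for each item $a\in T_i$ the probability that no other player also selects it is $\prod_{j\ne i}(1-\Pr[a\in T_j])$, which can be as small as $(1-2/l)^{n_l}\approx e^{-2n_l/l}$; so your welfare bound degrades exponentially in $n_l/l$, not linearly. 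And the regime where (iii) is actually needed is precisely $t,l>m^{1/3}$, where $n_l/l$ can be as large as $m^{1/3}$ (e.g.\ $t\approx l\approx m^{1/3}$, $n_l\approx m^{2/3}$), so the argument collapses there.

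The paper's fix is a different allocation rule in (iii): each item is given to a \emph{uniformly random} player among those whose chosen bundle contains it (rather than to nobody upon collision). Then for $a\in T_i$ the number of competitors $C_a$ satisfies $E[C_a]\le 1+2n_l/l$, and by convexity $E[1/C_a]\ge 1/E[C_a]\ge l/(l+2n_l)$. Linearity over $a\in T_i$ and over players gives expected welfare at least $n_l\cdot(t/2)\cdot l/(l+2n_l)=\Omega(tl)$, hence ratio $\tilde O(n_l/l)$ as needed. If you swap your ``discard on collision'' rule for this ``random winner'' rule (and replace the second-moment language with the Jensen step), your proof goes through; as written, step (iii) does not deliver the claimed bound.
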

\begin{proof}
We first show that every player $i$ reports a large fraction of $A_i$ (but the  items of $A_i$ might be split among different reported bundles).  
\begin{claim}\label{claim-alg-t/2}
For each player $i$, let $\Phi_i=A_i\cap (\cup_{T\in \mathcal S_i}T)$. We have that $|\Phi_i|\geq t/2$.
\end{claim}
\begin{proof}
Suppose that there exists some player $i$ such that $|\Phi_i|<t/2$, and let $S=A_i-(\cup_{T\in\mathcal S_i}T)$. Since $|A_i|= t$, we have that $|S|\geq t/2$. But then the set $\mathcal S_i$ is not maximal, since any subset of the bundle $S$ of size of $t/2$ could have been added to it.
\end{proof}

\begin{corollary}
$\Sigma_iv_i(\Phi_i) \geq  \frac {\Sigma_i v(A_i)} {2}$.
\end{corollary}
We now divide the players into $\log m$ bins so that player $i$ is in bin $r$ if $r>|\mathcal S_i|\geq \frac r 2$. Let the set of players in bin $r$ be $N_r$. Observe that there has to be some bin $l$ with $\Sigma_{i\in N_l}v_i(\Phi_i)\geq \frac {\Sigma_iv_i(\Phi_i)} {\log m}$. Denote its size by $n_l=|N_l|\leq n'$. For the rest of this proof we will only consider the players in $N_l$; this will result of a loss of at most $\log(m)$ in the approximation factor. 

\begin{claim}[$\tilde O(t)$-approximation]\label{claim-approx-t}
$\Sigma_{i\in N_l} v'_i(T_i)\geq \Sigma_i v_i(\Phi_i)/(t \cdot \log(m))$.
\end{claim}
\begin{proof}
We show that there exists an allocation $(B_1,\ldots, B_n)$, $|B_i|\leq 1$ for all $i$, with $\Sigma_iv'_i(B_i)\geq 2\Sigma_iv_i(\Phi_i)/t$. For each player $i\in N_l$, let $B_i$ be the set that contains exactly one item $j_i$ from $\Phi_i$ if $\Phi_i\neq\emptyset$ and $B_i=\emptyset$ otherwise. Notice that the $j_i's$ are distinct, since for every $i\neq i'$ we have that $\Phi_i\cap \Phi_{i'}=\emptyset$. The lemma follows since $v'_i(\{j_i\})=1$.
\end{proof}

\begin{claim}[$\tilde O(l)$-approximation]\label{claim-approx-l}
$\Sigma_{i\in N_l} v'_i(T_i)\geq \Sigma_iv_i(\Phi_i)/( l\cdot \log m)$.
\end{claim}
\begin{proof}
For each player $i\in N_l$, let $B_i \subseteq \Phi_i$ be the bundle of the maximal size such that $v'_i(B_i)=|B_i|$. Note that $|B_i|\geq |\Phi_i|/l$ as by construction $v'_i$ has at most $l$ clauses and by definition each item in $\Phi_i$ is contained in one of the $|\mathcal S_i|$ clauses of $v'_i$. Therefore, $|B_i|\geq |\Phi_i|/l$. Thus we have that $\Sigma_{i\in N_l} v'_i(B_i)\geq \frac {\Sigma_{i\in N_l}v_i(\Phi_i)} {l}$. The claim follows as we already observed that $\Sigma_{i\in N_l}v_i(\Phi_i)\geq \frac {\Sigma_iv_i(\Phi_i)} {\log m}$. 
\end{proof}

\begin{claim}[$\tilde O(n'/l)$-approximation]\label{claim-approx-n/l}
$\Sigma_{i\in N_l} v'_i(T_i)\geq \frac{4n_l}{l} \cdot \frac{\Sigma_{i} v_i(\Phi_i)}{\log(m)}$.
\end{claim}
\begin{proof}
Consider the following experiment. For each player in $N_l$ choose uniformly at random a bundle to compete on among the bundles in the set $\mathcal S_i$ (recall that $\frac{l}{2}  \leq |\mathcal S_i|< l$). Now allocate each item $j$ uniformly at random to one of the players that are competing on bundles that contain that item $j$.

There are $n_l$ players in this experiment, so there are at most $n_l$ potential competitors for each item. Each player has at least $l/2$ disjoint potential bundles to compete on, hence the expected number of competitors on each item is at most $2n_l/l$. So when player $i$ is competing on a bundle, the expected competition on each of the items he competes on is $2n_l/l+1$. Therefore, the expected contribution of each item that player $i$ is competing on is at least $\frac {l} {2n_l}$. The lemma follows by applying linearity of expectation on the (at least) $t/2$ items that are in the bundle that player $i$ is competing on. Therefore we have that  $\Sigma_{i\in N_l} v'_i(T_i)\geq \frac {l} {2n_l} \cdot \frac{t}{2} \cdot {n_l} = \frac{l\cdot t}{4}$. Recall that $\frac{\Sigma_{i} v_i(\Phi_i)}{\log(m)} \leq \Sigma_{i\in N_l} v_i(\Phi_i) \leq t \cdot n_l $. Thus we have that $\frac{l\cdot t}{4} \geq \frac{4n_l}{l} \cdot \frac{\Sigma_{i} v_i(\Phi_i)}{\log(m)}$ as required. 
\end{proof}

By choosing the best of these three allocations we get the desired approximation ratio:

\begin{claim}\label{lemma-xos-alg-combined}
Suppose that we have three allocations $B^1,B^2$ and $B^3$ such that: $\frac{\sum_i v_i(A_i)}{\sum_i v'_i(B^1_i)} ={O(t)}$, $\frac{\sum_i v_i(A_i)}{\sum_i v'_i(B^2_i)} ={\tilde O(l)}$ and $\frac{\sum_i v_i(A_i)}{\sum_i v'_i(B^3_i)} =\tilde O(n'/l)$. Let $B$ be the allocation with the highest welfare among the three. Then, $\frac{\sum_i v_i(A_i)}{\sum_i v'_i(B_i)} ={\tilde O(m^{\frac 1 3})}$.
\end{claim}

\begin{proof}
By 
the first two claims,
we get an approximation ratio of $\tilde O(m^{\frac 1 3})$ whenever $l<m^{\frac 1 3}$ or $t<m^{\frac 1 3}$. Hence, we now assume that $l,t\geq m^{\frac 1 3}$. Now observe that when $t\geq m^{\frac 1 3}$ then $n'\leq m^{\frac 2 3}$, since there can be at most $m/t$ players that receive non-empty (disjoint) bundles in any allocation where the size of each non-empty bundle is 
at least $t$. We therefore have that $\frac {n'} l\leq m^{\frac 1 3}$ and the lemma follows by 
the third claim.
\end{proof}

\end{proof}

\subsection{A $k$-Round Algorithm}
We now develop an algorithm that guarantees an approximation ratio of $O(m^{\frac 1 {k+1}})$ for combinatorial auctions with subadditive valuations in $k$ rounds. In each of the rounds each player sends $poly(m)$ bits. We provide an algorithm for $t$-restricted instances (see Section \ref{sec-alg} for a definition). By Proposition \ref{prop-restricted-instances} this implies an algorithm with almost the same approximation ratio for general subadditive valuations.

\subsubsection*{The Algorithm (for $t$-restricted instances)}

\begin{enumerate}
\item Let $N_1=N$, $U_1=M$ and $U_{1,i} = M$.
\item In every round $r=1,\ldots,k$:
\begin{enumerate}
\item Each player reports the a maximal set of disjoint bundles $\mathcal{S}_{r,i}$ such that for every bundle $S\in \mathcal{S}_{r,i}: S\subseteq U_{r,i},~|S|=\frac{t}{2k}$ and $v_i(S)=|S|$.
\item Go over the players in $N_r$ in an arbitrary order. For every player $i$ for which there exists a bundle $S \in \mathcal{S}_{r,i}$ such that at least $\frac{1}{m^{\frac{1}{k+1}}}$ of its items were not allocated yet, allocate player $i$ the remaining unallocated items of $S$.
\item Let $N_{r+1} \subseteq N_r$ be the set of players that were not allocated items at round $r$ or before.
\item Let $U_{r+1} \subseteq U_r$ be the set of items that were not allocated at round $r$ or before.
\item Let $U_{r+1,i} = (\cup_{S \in \mathcal{S}_{r,i}} S) \cap U_{r+1}$.
\end{enumerate}
\end{enumerate}
\begin{theorem}\label{thm-xos-rounds}
For every $k\leq \log m$, there exists an algorithm for $t$-restricted instances that provides an approximation ratio of $O(k\cdot m^{\frac 1 {k+1}})$ in $k$ rounds where each player sends $poly(m,n)$ bits. In particular, when $k=O(\log m)$ the approximation ratio is $O(\log m)$. As a corollary, there exists a $k$-round approximation algorithm for subadditive valuations that provides an approximation ratio of $O(k\cdot m^{\frac 1 {k+1}}\cdot \log^3 m)$.
\end{theorem}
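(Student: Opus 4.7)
The claim for subadditive valuations follows from the $t$-restricted claim via Proposition~\ref{prop-restricted-instances}, which supplies the $\log^3 m$ factor. So it suffices to prove an $O(k\cdot m^{1/(k+1)})$-approximation for the algorithm above on $t$-restricted instances. Fix such an instance with witness allocation $(A_1,\ldots,A_n)$; set $N'=\{i:A_i\ne\emptyset\}$ and $n'=|N'|$, so the target welfare is $n't$, and note $n't\le m$ since the $A_i$ are disjoint of size $t$. Call a player \emph{satisfied} if the algorithm allocates him a bundle in some round.

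Following the template of the $k$-round matching proof in Section~\ref{subsec-k-round-matching}, I would split on whether at least $n'/2$ players of $N'$ are satisfied. In the satisfied case: whenever the rule fires for $i$, it hands him the unallocated items of some $S\in\mathcal{S}_{r,i}$, a set of size at least $|S|/m^{1/(k+1)} = t/(2k\cdot m^{1/(k+1)})$. Because $v_i(S)=|S|$ is witnessed by a single additive clause assigning value $1$ to every item of $S$, any subset of $S$ also has value equal to its size under $v_i$. Summing, the welfare is at least $(n'/2)\cdot t/(2k\cdot m^{1/(k+1)})=\Omega(n't/(k\cdot m^{1/(k+1)}))$, which is the claimed approximation ratio.

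The unsatisfied case rests on two lemmas for every player $i\in N'$ who is unsatisfied throughout all $k$ rounds. First, \emph{shrinkage}: the rule failing for $i$ in round $r$ means that every $S\in\mathcal{S}_{r,i}$ has fewer than $|S|/m^{1/(k+1)}$ items still in $U_{r+1}$; since the bundles in $\mathcal{S}_{r,i}$ are disjoint subsets of $U_{r,i}$, summing yields $|U_{r+1,i}|\le |U_{r,i}|/m^{1/(k+1)}$, and iterating gives $|U_{k+1,i}|\le m^{1/(k+1)}$. Second, \emph{maximality loss}: since $v_i(A_i)=|A_i|=t$ is witnessed by a single clause of the binary XOS, every $(t/2k)$-subset of $A_i$ has value equal to its size and would be a valid bundle; hence by maximality of $\mathcal{S}_{r,i}$, fewer than $t/(2k)$ items of $A_i\cap U_{r,i}$ can lie outside $\bigcup\mathcal{S}_{r,i}$. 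Consequently
\[
|A_i\cap U_{r,i}|-|A_i\cap U_{r+1,i}|\le \tfrac{t}{2k}+(\text{items of }A_i\text{ allocated in round }r),
\]
and telescoping over $r=1,\ldots,k$, combined with shrinkage, implies that at least $t/2 - m^{1/(k+1)}$ items of $A_i$ are allocated by the algorithm. Summing over the $\ge n'/2$ unsatisfied players in $N'$ (whose $A_i$'s are disjoint) yields, in the main regime $t\ge 4m^{1/(k+1)}$, total allocated items $\ge n't/8$; since each allocated item contributes $1$ to the welfare of some satisfied player (again by the binary XOS property), this is an $O(1)$-approximation. In the residual regime $t<4m^{1/(k+1)}$ one has $n't\le 4n'm^{1/(k+1)}\le 4m^{1+1/(k+1)}/t$, and the satisfied-case bound together with the trivial single-allocation bound of $\Omega(t/(k\cdot m^{1/(k+1)}))$ absorbs this corner.

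The main obstacle I anticipate is the interplay between shrinkage and maximality loss: in every round one simultaneously loses at most a $1/m^{1/(k+1)}$ factor of available items and at most $t/(2k)$ items of $A_i$ to uncovered territory, and these two budgets must combine cleanly to charge $\Omega(n't)$ items to ALG in the unsatisfied case. The small-$t$ boundary is a bookkeeping annoyance rather than a conceptual difficulty; the structural ideas here are direct analogues of the $k$-round matching proof, with ``demand shrinkage'' replaced by ``$U_{r,i}$ shrinkage'' and the extra need to track $A_i$ through the maximality-of-$\mathcal{S}_{r,i}$ argument.
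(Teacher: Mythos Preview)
Your approach is essentially the paper's: the satisfied/unsatisfied split, the shrinkage bound $|U_{r+1,i}|\le |U_{r,i}|/m^{1/(k+1)}$, and the maximality-loss bound $|A_i\cap U_{r,i}\setminus\bigcup\mathcal S_{r,i}|<t/(2k)$ are exactly the ingredients of the paper's two lemmas (your observation that every allocated item contributes $1$ to the welfare is what the paper uses implicitly when it says ``at least $3t/4$ of the items of $O_i$ are allocated'' proves the ratio).

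The one genuine gap is your ``residual regime'' $t<4m^{1/(k+1)}$. Your sentence ``the satisfied-case bound together with the trivial single-allocation bound of $\Omega(t/(k\cdot m^{1/(k+1)}))$ absorbs this corner'' does not work: you are already in the unsatisfied case, and a single satisfied player only gives welfare $\Omega(t/k)$, which compared to $n't$ yields a ratio of $\Omega(kn')$---and $n'$ can be as large as $m/t$, not $O(m^{1/(k+1)})$. The paper avoids this by organizing the case split differently: \emph{before} splitting on the satisfied fraction, it separates off the case where some unsatisfied $i$ has $U_{k+1,i}\ne\emptyset$. In that case your own shrinkage inequality, read backwards, gives $|\bigcup\mathcal S_{1,i}|\ge m^{k/(k+1)}$, and since all but $|U_{2,i}|<m^{(k-1)/(k+1)}$ of those items are allocated already in round~$1$, the welfare is $\Omega(m^{k/(k+1)})\ge \Omega(\mathrm{OPT}/m^{1/(k+1)})$. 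Once that case is removed, every unsatisfied $i$ has $U_{k+1,i}=\emptyset$, so your telescoping gives ``items of $A_i$ allocated $> t/2$'' with no subtracted $m^{1/(k+1)}$ term, and the small-$t$ issue disappears. Replacing your last paragraph with this two-line case completes the proof.
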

\begin{proof}
In the analysis we fix some optimal solution $(O_1,\ldots, O_n)$. We break the proof of the theorem into two lemmas.

\begin{lemma}\label{claim-xos-rounds-ends}
At the end of the algorithm, either for every player $i$ that did not receive any bundle we have that $U_{k+1,i}=\emptyset$, or the approximation ratio of the algorithm is $O(m^{\frac 1 {k+1}})$.
\end{lemma}
\begin{proof}
Observe the for every player $i$ and round $r$, if $i \in N_r \cap N_{r+1}$ then $|\cup_{S \in \mathcal{S}_{r,i}} S| \geq m^{\frac{1}{k+1}} |\cup_{S \in \mathcal{S}_{r+1,i}} S|$. This is true since if player $i$ was not allocated any items at round $r$ then for every bundle $S \in \mathcal{S}_{r,i}$ that he reported, at least $(1-\frac{1}{m^{\frac{1}{k+1}}})$ of the items were allocated. Thus, $m^{\frac{1}{k+1}} \cdot|U_{r+1,i}| \leq |\cup_{S \in \mathcal{S}_{r,i}} S|$. Recall that by definition we have that $(\cup_{S \in \mathcal{S}_{r+1,i}} S) \subseteq U_{r+1,i}$. Therefore, for every player $i$ that was not allocated anything at round $k$, we have that $|U_{k+1,i}| \leq m^{\frac{1}{k+1}}$. If there exists a player $i$ such that $|U_{k+1,i}| >0$ then it has to be the case that initially $|\cup_{S \in \mathcal{S}_{1,i}} S|\geq m^{\frac k {k+1}}$. This implies that at least $m^{\frac k {k+1}}-m^{\frac{1}{k+1}}$ of the items were allocated and hence the approximation ratio is $O(m^{\frac 1 {k+1}})$ (there are $m$ items, so the value of the optimal solution is at most $m$). In any other case for every player that was not allocated any bundle we have that $U_{k+1,i}=\emptyset$.
\end{proof}

\begin{lemma}
Suppose that for every player $i$ that did not receive any bundle we have that $U_{k+1,i}=\emptyset$. The approximation ratio of the algorithm is $O(k\cdot m^{\frac 1 {k+1}})$.
\end{lemma}
\begin{proof}
Suppose that at least $n'/2$ players were allocated non empty bundles. The value of the bundle that each player was allocated is at least $\frac{t}{2k \cdot m^{\frac 1 {k+1}}}$. Thus, in this case the algorithm achieves an $O(k\cdot m^{\frac 1 {k+1}})$-approximation (recall that the value of the optimal solution is $n'\cdot t$). Else, assume that at most $n'/2$ players were allocated some bundle. Consider player $i$ in $N'$ that did not receive any items (there are at least $n'/2$ such players). We will show that for each such player $i$, $|O_i \cap U_{k}| \leq t/4$. In other words, we will show that for every such player $i$, the algorithm allocates at least $\frac {3t} {4}$ of the items that this player receives in the optimal solution. This will prove an approximation ratio of $O(k\cdot m^{\frac 1 {k+1}})$. 

Thus assume towards contradiction that there exists some $i\in N$ that did not receive any items for which $|O_i \cap U_{k}| \geq 3t/4$. In this case by Claim \ref{clm:ind} (below) we have that $|O_i \cap U_{k,i}| \geq \frac{3t}{4}-  k \cdot \frac{t}{2k} = \frac{t}{4}$. In particular this implies that $|U_{k,i}| \geq t/4$. This is a contradiction to the assertion of Lemma \ref{claim-xos-rounds-ends} that $U_{k,i} = \emptyset$ for every player that was not allocated any bundle.
\end{proof}

\begin{claim} \label{clm:ind}
For every $i\in N_r\cap N'$, $|O_i \cap U_r| - |O_i \cap (\cup_{S \in \mathcal{S}_{r,i}} S)| \leq r \cdot \frac{t}{2 k}$. 
\end{claim}
\begin{proof}
We prove the claim by induction. For the base case $r=1$, observe that since player $i$ reports a maximal set of bundles of value $\frac{t}{2k}$ at most $\frac{t}{2k}$ items of his optimal bundle may not be reported by him -- otherwise, those items can be bundled together and added to $\mathcal S_{r,i}$, contradicting maximality. We now assume correctness for $r$ and prove for $r+1$. We have that $|O_i \cap U_r| - |O_i \cap (\cup_{S \in \mathcal{S}_{r,i}} S)| \leq r \cdot \frac{t}{2 k}$. Denote the items allocated at round $r$ by $A_r$. We have that $|O_i \cap (U_r \setminus A_r)| - |O_i \cap (\cup_{S \in \mathcal{S}_{r,i}} S) \setminus A_r| \leq r \cdot \frac{t}{2 k}$ since $\cup_{S \in \mathcal{S}_{r,i}} S) \subseteq u_r$. By definition, this implies that $|O_i \cap U_{r+1}| - |O_i \cap u_{r+1,i})| \leq r \cdot \frac{t}{2 k}$. The proof is completed by observing that $|O_i \cap U_{r,i}|-|O_i \cap (\cup_{S \in \mathcal{S}_{r,i}} S)| \leq \frac{t}{2 k}$. Since in every round player $i$ reports a maximal disjoint set of bundles of size $\frac{t}{2 k}$, thus at every round the player can leave out of its reported set at most $\frac{t}{2 k}$ items from the optimal solution (i.e., if the player leaves out more items then he could have reported those items, contradicting maximality).
\end{proof} 

\end{proof}

\bibliographystyle{plain}

\bibliography{bib}

\begin{thebibliography}{10}

\bibitem{BLNP13}
Moshe Babaioff, Brendan Lucier, Noam Nisan, and Renato~Paes Leme.
\newblock On the efficiency of the walrasian mechanism.
\newblock {\em CoRR}, abs/1311.0924, 2013.

\bibitem{BDFKNR12}
Ashwinkumar Badanidiyuru, Shahar Dobzinski, Hu~Fu, Robert Kleinberg, Noam
  Nisan, and Tim Roughgarden.
\newblock Sketching valuation functions.
\newblock In {\em SODA}, 2012.

\bibitem{BCIW12}
Maria-Florina Balcan, Florin Constantin, Satoru Iwata, and Lei Wang.
\newblock Learning valuation functions.
\newblock {\em Journal of Machine Learning Research - Proceedings Track},
  23:4.1--4.24, 2012.

\bibitem{BH11}
Maria-Florina Balcan and Nicholas J.~A. Harvey.
\newblock Learning submodular functions.
\newblock In {\em STOC}, 2011.

\bibitem{B88}
Dimitri~P Bertsekas.
\newblock The auction algorithm: A distributed relaxation method for the
  assignment problem.
\newblock {\em Annals of operations research}, 14(1):105--123, 1988.

\bibitem{BhawalkarR11}
Kshipra Bhawalkar and Tim Roughgarden.
\newblock Welfare guarantees for combinatorial auctions with item bidding.
\newblock In {\em SODA}, 2011.

\bibitem{christodoulou2010bayesian}
G.~Christodoulou, A.~Kov{\'a}cs, and M.~Schapira.
\newblock {Bayesian combinatorial auctions}.
\newblock {\em Automata, Languages and Programming}, pages 820--832, 2010.

\bibitem{DGS86}
Gabrielle Demange, David Gale, and Marilda Sotomayor.
\newblock Multi-item auctions.
\newblock {\em The Journal of Political Economy}, pages 863--872, 1986.

\bibitem{D07}
Shahar Dobzinski.
\newblock Two randomized mechanisms for combinatorial auctions.
\newblock In {\em APPROX}, 2007.

\bibitem{DS06}
Shahar Dobzinski and Michael Schapira.
\newblock An improved approximation algorithm for combinatorial auctions with
  submodular bidders.
\newblock In {\em SODA}, 2006.

\bibitem{DF89}
Danny Dolev and Tom{\'a}s Feder.
\newblock Multiparty communication complexity.
\newblock In {\em Foundations of Computer Science, 1989., 30th Annual Symposium
  on}, pages 428--433. IEEE, 1989.

\bibitem{F06}
Uriel Feige.
\newblock On maximizing welfare where the utility functions are subadditive.
\newblock In {\em STOC}, 2006.

\bibitem{FV06}
Uriel Feige and Jan Vondr{\'a}k.
\newblock Approximation algorithms for allocation problems: Improving the
  factor of 1-1/e.
\newblock In {\em FOCS}, 2006.

\bibitem{FFGL13}
Michal Feldman, Hu~Fu, Nick Gravin, and Brendan Lucier.
\newblock Simultaneous auctions are (almost) efficient.
\newblock In {\em Proceedings of the 45th annual ACM Symposium on theory of
  computing}, pages 201--210. ACM, 2013.

\bibitem{GKK12}
Ashish Goel, Michael Kapralov, and Sanjeev Khanna.
\newblock On the communication and streaming complexity of maximum bipartite
  matching.
\newblock In {\em Proceedings of the Twenty-Third Annual ACM-SIAM Symposium on
  Discrete Algorithms}, pages 468--485. SIAM, 2012.

\bibitem{HassidimKMN11}
Avinatan Hassidim, Haim Kaplan, Yishay Mansour, and Noam Nisan.
\newblock Non-price equilibria in markets of discrete goods.
\newblock In {\em ACM Conference on Electronic Commerce}, pages 295--296, 2011.

\bibitem{H45}
F.A. Hayek.
\newblock The use of knowledge in society.
\newblock {\em American Economic Review}, 35(4):519--530, 1945.

\bibitem{HK73}
John~E Hopcroft and Richard~M Karp.
\newblock An n\^{}5/2 algorithm for maximum matchings in bipartite graphs.
\newblock {\em SIAM Journal on computing}, 2(4):225--231, 1973.

\bibitem{HBMQ13}
Zengfeng Huang, Bozidar Radunovic, Milan Vojnovic, and Qin Zhang.
\newblock Communication complexity of approximate maximum matching in
  distributed graph data.
\newblock {\em Microsoft Technical Report, MSR-TR-2013-35}, 2013.

\bibitem{Hur73}
Leonid Hurwicz.
\newblock The design of mechanisms for resource allocation.
\newblock {\em American economic review}, 63(2):1--30, 1973.

\bibitem{K12}
Michael Kapralov.
\newblock Better bounds for matchings in the streaming model.
\newblock {\em arXiv preprint arXiv:1206.2269}, 2012.

\bibitem{KUW85}
Richard~M Karp, Eli Upfal, and Avi Wigderson.
\newblock Constructing a perfect matching is in random nc.
\newblock In {\em Proceedings of the seventeenth annual ACM symposium on Theory
  of computing}, pages 22--32. ACM, 1985.

\bibitem{KVV90}
Richard~M Karp, Umesh~V Vazirani, and Vijay~V Vazirani.
\newblock An optimal algorithm for on-line bipartite matching.
\newblock In {\em Proceedings of the twenty-second annual ACM symposium on
  Theory of computing}, pages 352--358. ACM, 1990.

\bibitem{KN06}
Eyal Kushilevitz and Noam Nisan.
\newblock {\em Communication complexity}.
\newblock Cambridge university press, 2006.

\bibitem{LL38}
O.~Lange and F.~Taylor.
\newblock {\em On the Economic Theory of Socialism}.
\newblock University of Minnesota Press, 1938.

\bibitem{lehmann2006combinatorial}
B.~Lehmann, D.~Lehmann, and N.~Nisan.
\newblock {Combinatorial auctions with decreasing marginal utilities}.
\newblock {\em Games and Economic Behavior}, 55(2):270--296, 2006.

\bibitem{LemeST12itcs}
Renato~Paes Leme, Vasilis Syrgkanis, and {\'E}va Tardos.
\newblock The curse of simultaneity.
\newblock In {\em ITCS}, pages 60--67, 2012.

\bibitem{LPP08}
Zvi Lotker, Boaz Patt-Shamir, and Seth Pettie.
\newblock Improved distributed approximate matching.
\newblock In {\em Proceedings of the twentieth annual symposium on Parallelism
  in algorithms and architectures}, pages 129--136. ACM, 2008.

\bibitem{MS04}
Marcin Mucha and Piotr Sankowski.
\newblock Maximum matchings via gaussian elimination.
\newblock In {\em FOCS}, pages 248--255, 2004.

\bibitem{MVV87}
Ketan Mulmuley, Umesh~V Vazirani, and Vijay~V Vazirani.
\newblock Matching is as easy as matrix inversion.
\newblock In {\em Proceedings of the nineteenth annual ACM symposium on Theory
  of computing}, pages 345--354. ACM, 1987.

\bibitem{NS06}
Noam Nisan and Ilya Segal.
\newblock The communication requirements of efficient allocations and
  supporting prices.
\newblock {\em Journal of Economic Theory}, 129(1):192--224, 2006.

\bibitem{RS92}
Alvin~E Roth and Marilda A~Oliveira Sotomayor.
\newblock {\em Two-sided matching: A study in game-theoretic modeling and
  analysis}.
\newblock Number~18. Cambridge University Press, 1992.

\bibitem{V08}
Jan Vondr{\'a}k.
\newblock Optimal approximation for the submodular welfare problem in the value
  oracle model.
\newblock In {\em STOC}, 2008.

\bibitem{Y13}
Raphael Yuster.
\newblock Maximum matching in regular and almost regular graphs.
\newblock {\em Algorithmica}, 66(1):87--92, 2013.

\end{thebibliography}

\begin{appendix}

%

\section{Communication Complexity of Bipartite Matching}\label{appendix-matching}

In this appendix we discuss a proposed
communication complexity investigation of the bipartite matching
problem.  This model is essentially the same as that used
in our investigation of bipartite matching in
the rest of the paper but focusing on the exact problem rather than
on approximations, and proposing the study of communication
as of itself rather than merely as an abstraction of market processes.

We focus on the open problem(s) and shortly mention some 
related models where communication 
bottlenecks for matching have been investigated and give a few pointers
to different such threads where the interested reader may find many more references.

\subsection{The Model and Problem}

There are $n$ items and $n$ players.  
Each player $i$ holds a subset $S_i$ of items
that he is interested in.  I.e. we have a bipartite graph with $n$ left vertices (players)
and $n$ right vertices (items), and have a player in our model for each left vertex, a player that knows the set of neighbors of his vertex in the bipartite graph (but 
there are no players associated with the items.)  The goal of these players is to 
find a maximum matching between items and players, i.e. that each player is assigned
a single item $j_i \in S_i$ with no items assigned to multiple players $j_i \ne j_{i'}$
for $i \ne i'$.

\subsubsection*{Communication Model}

The players engage in a fixed communication protocol using broadcast messages.  Formally
they take turns writing on a common ``blackboard''.  At every step
in the protocol, the identity of the next player $i$ to
write a bit on the blackboard
must be completely determined by the contents of
the blackboard, and the message written by this player $i$ must be
determined by the contents
of the blackboard as well as his own private input $S_i$.  Whether
the protocol terminates at a given point must be completely determined by
the contents of the blackboard, and at this point the output matching must
be solely determined by the contents of the blackboard.  This model
is completely equivalent to a decision tree, each query 
can be an arbitrary function depending only on a single player's information $S_i$.
The measure of complexity here is the total number of bits communicated.

\subsubsection*{Rounds}

The communication model above allows an arbitrary order of communication.  Of interest are
also various limited orders: oblivious (the order of speaking is fixed independently of
the input), and the simplest special case of it, 
one-way communication where the players speak in the fixed order of player 1, player 2, etc.
We will focus on speaking in rounds: in each ``round'' each of the $n$ players writes
a message on the blackboard, a message that may depend on his own input as well as 
the messages of the others in previous rounds (i.e. on the contents
of the blackboard after the previous round).  The measures of complexity here are
the
number of rounds and the total number of bits communicated.  The special
case of a single round is called a simultaneous protocol.

\subsubsection*{Open Problems}

How much communication is needed for finding a maximum matching?  What if 
we are limited to $r$ rounds?  These questions apply both to deterministic and to
randomized protocols.  The tradeoff between communication and approximation
is of course also natural to explore.

\subsubsection*{What is Known}

The trivial upper bound for communication is $n^2$ since players can all simultaneously
send their full input.  The trivial lower bound is $\Omega(n\log n)$ as this is the number
of bits necessary to represent the output matching (and every matching may need to be 
given as output).  

Significantly, the non-deterministic (and co-non-deterministic) 
communication complexity is
also $O(n \log n)$: to verify that a given matching is maximum size it suffices to add
a Hall-theorem blocking set, or alternatively a solution for the dual.
Specifically, a specification of a set of ``high-price'' items, 
so that (1) only allocated items are high-price (2)
all players that are not allocated a low price item 
are only interested in high-price ones.  The fact that the non-deterministic complexity
is low means that ``easy'' lower bounds techniques
such as fooling-sets or cover-size bounds
will not suffice for giving good lower bounds.

Interestingly, an $O(n^{1.5}\log n)$ upper bound can be obtained by adapting
known algorithms to this framework:  First, the auction algorithm described in
Section \ref{sec-matching-alg} gives a $(1-\delta)$-approximation using $O(n\log n / \delta)$ 
communication. When we choose $\delta = 1/\sqrt{n}$ this means that we get
a matching that is at most smaller than the optimal one by an additive $\sqrt{n}$.
We can thus perform $\sqrt{n}$ more augmenting path calculations to get an optimal
matching.  Each augmenting path calculation requires only $O(n \log n)$ bits 
of communication: it requires finding a path in a graph 
on the players that has a directed edge
between player $i$ and player $i'$ whenever $i$ is interested in
an item that is currently allocated to $i'$. The goal here is to find a path from any 
player that is not allocated an item to any player that is interested in an unallocated
item.  A breadth first search with the blackboard serving as the queue requires 
writing every vertex at most once on the blackboard, at most $O(n\log n)$ communication.

We do not know any better upper bound, nor do we know a better than $O(n^2)$ upper
bound for even $n$ rounds.  Our lower bounds for matching provide an 
$\Omega(n^2)$ lower bound for simultaneous protocols, and a $n^{1+\Omega(1/\log\log n)}$
lower bound for one-way communication follows from \cite{GKK12}.
We don't know any lower bound better than $\Omega(n\log n)$ for general protocols or
even for 2-round protocols.

\subsubsection*{Algorithmic Implications}

We believe that studying the bipartite matching under this model may be a productive way of understanding the general
algorithmic complexity of the problem.  A major open problem is whether bipartite matching has a (nearly-)linear time 
algorithm: $O(n^{2+o(1)})$ time for dense graphs (and maybe $O(m^{1+o(1)})$ for graphs with $m$ edges).  The best 
deterministic running time known (for the dense case) is the 40-year old $O(n^{2.5})$ 
algorithm of \cite{HK73}, with a somewhat better
randomized $O(n^\omega)$ algorithm known \cite{MS04} (where $\omega=2.3...$ is the matrix multiplication exponent).  For special cases
like regular or near-regular graphs nearly linear times are known (e.g. \cite{Y13}).  In parallel computation, a major
open problem is whether bipartite matching  can be solved in parallel poly-logarithmic time
(with a polynomial amount of processors).  Randomized parallel algorithms for the problem \cite{MVV87, KUW85} have been 
known for over 25 years.

On the positive side, it is ``likely'' that any communication protocol for bipartite matching that improves on the currently known $O(n^{1.5})$
complexity will imply a faster than the currently known $O(n^{2.5})$ algorithm.  This is not a theorem, however
the computational complexity needed to send a single bit in a communication protocol is rarely more than linear in the input held by the player sending the 
bit.  Most often each bit is given by a very simple computation in which case this is so trivially, but sometimes, clever data structures will be needed
for this to be so.  If this will be the case in the communication protocol in question then the improved algorithm is implied.  A similar phenomena 
should happen if a deterministic communication protocol that uses poly-logarithmic many rounds is found since most likely 
each bit sent by each player is determined by a simple computation that
can be computed in parallel logarithmic time.

On the negative side, lower bounds in a communication model do not imply algorithmic lower bounds, however they can direct the search for algorithms by
highlighting which approaches cannot work.  This is the great strength of concrete models of computation where lower bounds are possible to prove.

\subsection{Related Models}

The bipartite matching problem has been studied in various models that focus on communication. 

\subsubsection*{Point to Point Communication} In our model players communicate using a ``blackboard''; any bits sent by a player is seen by all.
A weaker model that is more natural to capture realities in distributed systems will consider the case 
where each message is sent to a single recipient. Such models are also called ``message passing'' or ``private channels''.
One must be slightly careful in defining such protocols as to ensure that
no communication is ``smuggled'' by the timing of messages, and the standard way of doing so is using the essentially equivalent 
coordinator model of \cite{DF89}.  It turns out that bipartite matching is even harder to approximate without broadcast and the
results of
\cite{HBMQ13} give an $\Omega(\alpha^2 n^2)$ lower bound for even finding an $\alpha$-approximation (even using randomization).  
Note that this model is trivially
more general than that of simultaneous protocols hence this lower bound gives the randomized lower bound from Theorem \ref{theorem-matching-rand-lb}.
We also note that our proof for this theorem actually also applies to the model of interactive protocols with private channels.  

\subsubsection*{Multiple Vertices per Processor} In our model, the problem for $n+n$-vertex graphs is 
handled by $n$ players.
A model that is more appropriate for the current scales of distributed systems is to use
$k$ processors where $k<<n$.  There are
various options for partitioning the $n^2$ bits of input 
to the processors where, most generally,
each of the $k$ processors can hold an arbitrary part.  This is the usual model of interest for distributed systems and was e.g. used in \cite{HBMQ13}. 
Here again one might distinguish between broadcast and point to point communication models, where the gap between the models
can be no larger than a factor of $k$.  To convert an auction-based algorithm to run in this framework one must be able
to calculate the demand of a vertex.  Usually this can be easily done with $O(k \log n)$ communication, but in many cases
it is possible to pay an overhead of $O(\delta^{-1})$ instead.  In particular, the basic $\delta$-approximation auction algorithm
that obtains a $(1-\delta)$ approximation 
can be run in this model using $O(n\log n/\delta^2)$ communication in the blackboard model and thus $O(kn\log/\delta^2)$ with point to point communication.

\subsubsection*{Two Players} When we are down to $k=2$ players we are  back to the standard two-player communication complexity model of Yao.  Two variants
regarding the partition of the input to the two players are natural here: (a) each player $i$ holds an arbitrary subset $E_i$ of the edges
and the graph in question is just the union $E_1 \cup E_2$; (b) each player $i$ holds the edges adjacent to $n/2$ of the left-vertices.  These models
are less-distributed and thus stronger than our $n$-player model.  As in our model, the complexity of bipartite matching is completely open,
and in particular the communication 
complexity of
the decision problem of whether the input graph has a perfect matching is open with
no known non-trivial, $\omega(n)$, lower bounds or non-trivial, $o(n^2)$
upper bounds.  

\subsubsection*{Streaming and Semi-streaming} One of the main applications of communication complexity is to serve as lower
bounds for ``streaming'' algorithms, those are algorithms that go over the input sequentially in a single pass (or in few passes), while using only a modest amount
of space.  The model of communication complexity required for such lower bounds is that of a one-way single-round private-channel protocol where 
in step $i$ player $i$ sends a message to player $i+1$.  (For $r$-pass variants of streaming algorithms, we will have $r$ such rounds of one-way communication.)  
The lower bounds mentioned above thus imply that no streaming algorithm that uses $o(n)$ space can get even a constant factor-approximation
of the maximum matching, even with  $O(1)$ rounds.  A greedy algorithm gets $1/2$-approximation in a single round using $O(n)$ space,
and slight improvements in the approximation factor using linear space are possible, e.g. 
using the online matching algorithm of \cite{KVV90}.  In $r$ passes and nearly-linear space, \cite{K12} gets an $1-O(1/\sqrt{r})$ approximation.  
Streaming algorithms that use linear or near-linear space are usually called semi-streaming algorithms and lower bounds for them are
usually derived by looking at the information transfer between the ``first half'' and the ``second half'' of the input data and 
proving a significantly super-linear lower bound on the one-way two-party communication. This
was done in \cite{GKK12} who give a $n^{1+\Omega(1/\log\log n)}$ lower bound for improving the $2/3$ approximation.  To the best of our knowledge
no better lower bound is known even for getting an exact maximum matching.

\subsubsection*{Distributed Computing} In this model the input graph is also the communication network.  I.e. players can communicate with each other only over links
that are edges in the input graph, and the interest is the number of rounds needed.
For this to make sense in a bipartite graph we need to also have processors for the right-vertices of the graph (and thus every edge is known by the
two processors it connects.)  
It is not hard to see that to get a perfectly maximal matching in this model 
transfer of information across the whole diameter of the graph may be needed, which may require $\Omega(n)$
rounds of communication, but in 
\cite{LPP08} a protocol is exhibited that gives a $(1-\delta)$-approximation in $O(\log n / \delta^3)$ rounds.



\section{Gross Substitutes Valuations}\label{app-gs}

\begin{proposition}
Any exact simultaneous algorithm for combinatorial auctions with gross substitutes valuations (even for just two players) requires super polynomial message length.
\end{proposition}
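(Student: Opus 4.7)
The plan is to combine Knuth's doubly exponential lower bound on the number of matroids on $[m]$ with the classical fact that matroid rank functions are gross substitutes, and then apply a pigeonhole argument on Alice's message. This should give a bound of the form $L \ge 2^{\Omega(m)}$ on the length of Alice's simultaneous message, which is super polynomial in $m$.

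\textbf{Counting reduction.} Matroid rank functions $r_M$ are gross substitutes (a classical fact), and Knuth's construction exhibits at least $2^{2^{\Omega(m)}}$ distinct matroids on $[m]$. The key step I would prove is the following distinguishability claim: for every pair of distinct matroids $M_1 \ne M_2$ on $[m]$, there is an additive (hence gross substitutes) Bob-valuation $v_B$ such that
\[
\mathrm{Opt}(r_{M_1}, v_B)\, \cap\, \mathrm{Opt}(r_{M_2}, v_B) \;=\; \emptyset,
\]
where $\mathrm{Opt}(u_A, u_B)$ denotes the set of welfare-maximizing allocations. Given this claim, any deterministic simultaneous protocol must use distinct Alice-messages for $r_{M_1}$ and $r_{M_2}$: otherwise the referee, against the distinguishing $v_B$, would output the same allocation for both inputs and be wrong for at least one of them. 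Consequently Alice's message length satisfies $L \ge \log_2 |\{\text{matroids on }[m]\}| \ge 2^{\Omega(m)}$, and the same bound applies to randomized Las-Vegas protocols via Yao's principle against the uniform distribution on matroids.

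\textbf{Proof of distinguishability.} To establish the claim I would pick a witness base $B \in \mathcal{B}(M_1) \setminus \mathcal{B}(M_2)$, which exists since distinct matroids have distinct base families. Using the standard observation that for matroid rank $r_A$ and additive Bob with weights $w_j \ge 0$ an optimal $T_A$ is a maximum-weight independent set of $M_A$ under item weights $1 - w_j$, I would construct $v_B$ in one of two cases. If $B$ is not independent in $M_2$, set $w_j$ to distinct small positives for $j \in B$ and $w_j = 1 + \delta$ for $j \notin B$: items outside $B$ are repulsive so Alice stays inside $B$, and the optimum is $T_A = B$ for $M_1$ but lies strictly inside $B$ for $M_2$, since $r_{M_2}(B) < |B|$. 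If $B$ is independent but not a base in $M_2$, set $w_j = 1 - \epsilon_j$ for $j \in B$ and $w_j = 1 - \delta_j$ for $j \notin B$ with $\delta_j < \min_i \epsilon_i$ (all distinct): the optimum is $T_A = B$ for $M_1$ (which cannot extend its base) but properly contains $B$ for $M_2$ (which extends to a larger base in $M_2/B$). A small generic perturbation of the weights secures unique maximizers in each case, so the optimal-allocation sets are genuinely disjoint.

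\textbf{Main obstacle.} The delicate step is the distinguishability claim: it is not enough that $r_{M_1}$ and $r_{M_2}$ merely disagree on some set $S$, because a naive Bob valuation that forces $T_A = S$ can leave $T_A = S$ simultaneously optimal for \emph{both} matroids (with different welfares but identical allocations). The two-case construction and the generic perturbation above are designed precisely to upgrade ``different rank functions'' into ``disjoint optimal-allocation sets'', which is what the pigeonhole step actually needs.
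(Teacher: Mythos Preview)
Your approach is correct and in fact yields an explicit $2^{\Omega(m)}$ bound, but it is genuinely different from the paper's proof. The paper does not count valuations; instead it reduces to the sketching lower bound of Balcan--Harvey for gross substitutes. Concretely, the paper shows that if an exact simultaneous protocol with polynomial messages existed, then Alice's message $L_v$ together with $v(M)$ would already be a polynomial-size sketch of $v$: one recovers any marginal $v(j\mid T)$ by simulating the protocol against the additive opponent that assigns value $x$ to item $j$, value $0$ to items in $T$, and value $2v(M)$ to every other item, and then searching over $x$ to find the threshold at which $j$ switches sides in the optimal allocation. This contradicts the known fact that gross-substitutes valuations admit no polynomial sketch. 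Your route builds the fooling set directly from matroid rank functions and Knuth's count, proving a separating additive Bob exists for every pair; you gain self-containment and an explicit exponential bound, while the paper's reduction is shorter and modular (any improved GS sketching bound would transfer automatically).

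One small fix: your sentence ``pick $B \in \mathcal{B}(M_1) \setminus \mathcal{B}(M_2)$, which exists since distinct matroids have distinct base families'' only gives $\mathcal{B}(M_1) \triangle \mathcal{B}(M_2) \neq \emptyset$, and strict containment $\mathcal{B}(M_1) \subsetneq \mathcal{B}(M_2)$ does occur (e.g.\ bases $\{12,13\}$ versus $U_{2,3}$). Just add ``swapping $M_1$ and $M_2$ if necessary''; after that your two cases are exhaustive and the argument goes through.
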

\begin{proof}
Suppose towards a contradiction that there exists an exact algorithm $A$ for combinatorial auctions with gross substitutes where the message length of each message is polynomial. We will show how given such an algorithm $A$ we can construct an exact sketch any gross substitutes valuation using only polynomial space. We get a contradiction since gross substitutes valuations do not admit polynomial sketches \cite{BH11}. Since the construction of \cite{BH11} contains only integer-valued functions between $1$ and $poly(m)$ it suffices to show how to sketch these functions only.

Given such an algorithm $A$ the sketch for a valuation $v$ we construct is simply the message $L_v$ a player with valuation $v$ sends in the algorithm $A$ and in addition the value $v(M)$. We now show how to compute the value of any bundle $v(S)$ with no additional communication cost. For this it suffices to show that for any $T \subset M$ and any item $j\notin T$ we can compute the marginal value $v(j|T)$. 

To compute the marginal value $v(j|T)$, we construct the following family of additive valuations:

$$v^{j|T}_x(\{j'\}) = \begin{cases}

  x & \text{for $j'=j$} \\

  0 & \text{for $j' \in T$} \\

  2v(M) & \text{otherwise}

\end{cases}$$

Consider an instance with two players, one with valuation $v$ and the other with valuation $v^{j|T}_x$. Observe that in any efficient allocation of this instance  the player with valuation $v$ should always receive bundle $T$. He will also receive item $j$ if $v(j|T)>x$ and will not receive it if $v(j|T)<x$. Thus, we can use $A$ to compute the value of $v(j|T)>x$ (with no additional communication cost) by giving it as an input $L_v$ and $L_{v^{j|T}_x}$ for increasing values of $x \in \{\frac 1 2 , \frac 3 2, \ldots, v(M)-\frac{1}{2}\}$. This allows to determine the maximal value of $x$ ($x^*$) for which the $v$-player receives $j$. We now have that $v(j|T)= x^*+\frac{1}{2}$ as we assumed that the valuation functions only use integer values.
\end{proof}

%

\end{appendix}


\end{document}